

\documentclass[amsthm]{amsart}

\usepackage{amsmath, amsfonts,amssymb,graphicx,amsthm}
\usepackage[usenames]{color}

\vfuzz2pt 
\newtheorem{thm}{Theorem}[section]
\newtheorem{defn}[thm]{Definition}
\newtheorem{cor}[thm]{Corollary}

\newtheorem{lem}[thm]{Lemma}
\newtheorem{prop}[thm]{Proposition}
\newtheorem{assu}[thm]{Assumption}

\newtheorem{alg}[thm]{Algorithm}

\newtheorem{rem}[thm]{Remark}


 \newcommand{\A}{\mathcal{A}}
 \newcommand{\I}{{\mathcal I}}
 \newcommand{\J}{{\mathcal J}}
 \newcommand{\M}{\mathcal{M}}

 \newcommand{\Rc}{\mathcal{R}}
 
 \newcommand{\C}{\mathbb{K}}
 \newcommand{\N}{\mathbb{N}}

 \newcommand{\GBs}{Gr{\"o}bner Bases}
 
 \newcommand{\BEZ}{\mathfrak{B}}
 \newcommand{\BD}{\mathcal{B}}
 \newcommand{\RBEZ}{\overline{\mathfrak{B}}}
 \newcommand{\RBD}{\overline{\mathcal{B}}}
 \def\xb{{\bf x}}
 \def\yb{{\bf y}}
 \def\im{{\mathrm{im}}}
\def\CC{\mathbb{K}}
\def\KK{\mathbb{K}}
\def\fh{{\bf f}^h}
\def\f{{\bf f}}
\def\Mac{{\rm Mac}_{\Delta}(\mathbf{f})}

\newcommand{\At}{\mathtt{A}}
\newcommand{\Bt}{\mathtt{B}}
\newcommand{\Sbs}{B}
\def\ead{\email}
\newenvironment{keyword}{\textbf{Keyword:}}{}
\begin{document}

\title{On the Computation of Matrices of Traces and  Radicals of Ideals}

\thanks{This research was partly supported by the Marie-Curie Initial
Training Network SAGA}

\author[I. Janovitz-Freireich]{Itnuit Janovitz-Freireich}
\address{Itnuit Janovitz-Freireich, Departamento de Matem\'aticas,
Centro de Investigaci\'on y Estudios Avanzados del I.P.N., Mexico City, Mexico}
\ead{janovitz@math.cinvestav.mx}

\author[B. Mourrain]{Bernard Mourrain}
\address{Bernard Mourrain, GALAAD, INRIA, Sophia Antipolis, France}
\ead{mourrain@sophia.inria.fr}

\author[L. R\'{o}nyai]{Lajos R\'{o}nyai}
\address{Lajos R\'{o}nyai, Computer and Automation Institute, 
Hungarian Academy of Sciences and Budapest University of Technology
and Economics, Budapest, Hungary}
\ead{lajos@csillag.ilab.sztaki.hu}

\author[A. Sz\'{a}nt\'{o}]{\'{A}gnes Sz\'{a}nt\'{o}}
\address{\'{A}gnes Sz\'{a}nt\'{o}, Mathematics Department, North Carolina State University, Raleigh, NC, USA}
\ead{aszanto@ncsu.edu}

\maketitle
\begin{abstract}
Let $f_1,\ldots,f_s \in \mathbb{K}[x_1,\ldots,x_m]$ be a system of
polynomials  generating a zero-dimensional ideal $\I$, where
$\mathbb{K}$ is an arbitrary  algebraically closed field. We study the computation of  ``matrices of traces" for the factor algebra $\A := \CC[x_1, \ldots , x_m]/ 
\I$, i.e. matrices with entries which are trace functions of the roots of $\I$. Such matrices of traces in turn allow 
us to compute a system of multiplication matrices $\{M_{x_i}|i=1,\ldots,m\}$ of the radical
$\sqrt{\I}$.

We first propose a method using Macaulay type resultant matrices of  $f_1,\ldots,f_s$ and a polynomial $J$ to compute
moment matrices, and in particular matrices of traces for $\A$. Here $J$ is a
polynomial generalizing the Jacobian. We prove bounds on the degrees needed for the Macaulay matrix in the case when $\I$ 
has finitely many projective roots in $\mathbb{P}^m_\CC$. We also extend previous results which work only for the case where $\A$ is Gorenstein to the non-Gorenstein case. 

The second proposed method uses Bezoutian matrices to compute matrices of
traces of $\A$. Here we need the assumption that $s=m$ and $f_1,\ldots,f_m$
define an affine complete intersection. This second method also works if we
have higher dimensional components at infinity. A new explicit description of
the generators of $\sqrt{\I}$ are given in terms of Bezoutians.
\end{abstract}

\begin{keyword}
matrix of traces; radical of an ideal;
\end{keyword}


\section{Introduction}

 This paper is a continuation
of our previous investigation in \cite{JaRoSza06, JaRoSza07} to
compute the approximate radical of a zero dimensional ideal which
has zero clusters. It turns out that the computationally most expensive part of the
method in \cite{JaRoSza06, JaRoSza07} is the computation of the
matrix of traces. We address this problem in the present paper.   Some of the results of this paper 
also appeared earlier in \cite{JaMoRoSza08}, however here we present generalized versions of those results and also add new results, as described below.

The computation of  the radical of a zero
dimensional ideal is a very important problem in computer algebra
since a lot of the algorithms for solving polynomial systems with
finitely many solutions need to start with a radical ideal.  This is
also the case in many numerical approaches, where Newton-like
methods are used. From a symbolic-numeric perspective, when we are
dealing with approximate polynomials, the zero-clusters create great
numerical instability, which can be eliminated by computing the
approximate radical.

The theoretical basis of the symbolic-numeric algorithm presented in
\cite{JaRoSza06, JaRoSza07} was Dickson's lemma \cite{dickson},
which, in the exact case, reduces the problem of computing the
radical of a zero dimensional ideal to the computation of the
nullspace of the so called matrices of traces (see Definition
\ref{mtraces}):  in \cite{JaRoSza06, JaRoSza07} we studied numerical
properties of the matrix of traces when the roots are not multiple
roots, but form small clusters. Among other things we showed that
the direct computation of the matrix of traces (without the
computation of the multiplication matrices) is preferable since the
matrix of traces is continuous with respect to root perturbations
around multiplicities while multiplication matrices are generally
not.

In the present paper, first  we give a simple algorithm using only Macaulay  type
resultant matrices and elementary linear algebra to compute matrices
of traces of zero dimensional ideals which have finitely many projective roots. 
We also extend the method presented in \cite{JaMoRoSza08}  to handle systems which might 
have roots at infinity or for which the quotient algebra is non-Gorenstein. 

In the second part of the paper, we investigate how to compute matrices of traces using
Bezoutians in the affine complete
intersection case. Our approach in that case is based on \cite{MouPa00, mb-bqrs-jsc-05}.

For the method using Macaulay matrices we need the following assumptions: let
  ${\bf f}=[f_1,\ldots,f_s]$ be a system of polynomials of degrees
  $d_1\geq\cdots\geq d_s$ in $\CC[{\bf x}]$, with ${\bf
x}=[x_1,\ldots,x_{m}]$, generating an ideal $\I$ in $\CC[\xb]$,
where $\CC$ is an arbitrary  algebraically closed field. We assume
that the algebra  $\A:= \CC[{\bf x}]/\I$ is finite dimensional over
$\CC$ and that we have bounds $\delta>0$ and $0\leq k\leq \delta$ such that a basis
$\Sbs=[b_1, \ldots, b_N]$ of $\A$ can be  obtained
 by taking a linear basis of the space 
 $$
\mathbb{K}[\xb]_{k}/\langle f_1, \ldots,
f_s\rangle_{\delta}\cap \mathbb{K}[\xb]_{k}
$$
where  $\mathbb{K}[\xb]_{k}$ is the set of polynomials of degree at most $k$ 
and $\langle f_1, \ldots,
f_s\rangle_{\delta}=\{\sum_{i=1}^s q_if_i\;:\; \deg q_i\leq \delta-d_i\}$. We can
assume that the basis $\Sbs$ consists of monomials of degrees at most
$k$  by a slight abuse of notation.  In our earlier work \cite{JaMoRoSza08},
we gave bounds for $k$ and $\delta$ in the case where there were no roots at
infinity using a result of Lazard \cite{La81} (see Theorem
\ref{deltatheorem1}). Here we extend those results to the case where $\I$
has finitely many {\it projective} common roots 
in $\mathbb{P}^m_\CC$  (see Theorem
\ref{deltatheorem}). Furthermore, we now extend the method presented in
\cite{JaMoRoSza08}, which only addressed the case where  $\A$ is Gorenstein
over $\CC$ (see Definition \ref{gorenstein}), to handle non-Gorenstein
algebras. 

The main ingredient of our first method is a Macaulay type resultant
matrix $\Mac$, which is defined from the transpose matrix  of the degree $\Delta$ Sylvester
map $(g_1, \ldots, g_s)\mapsto \sum_{i=1}^s f_ig_i\in
\CC[\xb]_\Delta$ for $\Delta\leq 2\delta+1$ using simple linear algebra (see Definition
\ref{bigdelta}). Using our results,  we can compute a
basis $\Sbs$ of $\A$ using $\Mac$.  We also prove that a random element $\yb$ of
the nullspace of $\Mac$ provides an $N\times N$ moment matrix ${\mathfrak
M}_\Sbs(\yb)$ which has the maximal possible rank with high probability
(similarly as in \cite{LaLaRo07}).  Note that in the Gorenstein case the
moment matrix ${\mathfrak M}_\Sbs(\yb)$ is non-singular. This will no longer be
true in the non-Gorenstein case.  This moment matrix allows us to compute the
other main ingredient of our algorithm, a polynomial $J$ of degree at most
$\delta$, such that $J$ is the generalization of the Jacobian of $f_1, \ldots f_s$
in the case when $s=m$. The main result of the paper now can be formulated as
follows: \\

{\sc Theorem} {\it Let $\Sbs=[b_1, \ldots, b_N]$ be a basis of $\A$
with $\deg(b_i)\leq k$. With $J$ as above, let ${\rm Syl}_\Sbs(J)$ be
the transpose matrix  of the map $\sum_{i=1}^N c_i b_i\mapsto
J\cdot\sum_{i=1}^N c_i b_i\;\in \CC[x]_\Delta$ for $c_i\in \CC$.
Then $$ \left[Tr(b_ib_j)\right]_{i,j=1}^N= {\rm Syl}_\Sbs(J)\cdot X,
$$
where $X$ is the unique extension of the matrix $\mathfrak{M}_\Sbs({\bf
y})$ such that
$\Mac \cdot X=0.$}\\

Once we compute the matrix of traces
$R:=\left[Tr(b_ib_j)\right]_{i,j=1}^N$ and the matrices
$R_{x_k}:=\left[Tr(x_kb_ib_j)\right]_{i,j=1}^N= {\rm
Syl}_\Sbs(x_kJ)\cdot X$ for $k=1, \ldots, m$, we can use the results of
\cite{JaRoSza06, JaRoSza07} to compute a system of multiplication
matrices for the (approximate) radical of $\I$ as follows: if
$\tilde{R}$ is a (numerical) maximal non-singular submatrix of $R$
and $\tilde{R}_{x_k}$ is the submatrix of $R_{x_k}$ with the same
row and column indices as in $\tilde{R}$, then  the solution
$M_{x_k}$ of  the linear matrix equation
$$\tilde{R}M_{x_k}=\tilde{R}_{x_k}$$
is an (approximate) multiplication matrix of $x_k$  for the
(approximate) radical of $\I$.
See \cite{JaRoSza07} for the definition of (approximate)
multiplication matrices. Note that a generating set for the radical
$\sqrt{\I}$ can be obtained directly from the definition of
multiplication matrices, in particular, it corresponds to the rows
of the matrices $M_{x_1}, \ldots, M_{x_m}$.

We also point out that in the $s=m$ case these multiplication
matrices $M_{x_k}$ of $\sqrt{\I}$ can be obtained even more simply using the
nullspace of $\Mac$ and the Jacobian $J$ of ${\bf f}$, without
computing the matrices of traces.

In the last section we investigate the use of Bezoutians to compute matrices of 
traces of systems $f_1,\ldots,f_m$ which form an affine complete
intersection. In this particular setting, our method allows  
systems that may have higher dimensional projective components.

In the univariate case it is proved in \cite{MouPa00} that the Bezoutian
matrix of a univariate polynomial $f$ and its derivative $f'$ is
a matrix of traces with respect to the Horner basis of $f$ (see subsection
\ref{unisection}). Therefore, applying the method to compute the approximate
or exact radical from the matrix of traces provided by the Bezoutian will
give us an approximate or exact square-free factorization of $f$. The
question that naturally arises is how this method relates to computing the
square-free factor as $\frac{f}{gcd(f,f')}$. We show here that the two
algorithms are computationally equivalent.

The generalization to the multivariate case is not quite as
straightforward. The goal would be to express the Bezout matrix of
$f_1,\ldots,f_m$ and their Jacobian $J$ as a matrix of traces with
respect to some basis, generalizing the univariate case (see the
definition of the Bezout matrix -- sometimes also referred as the
Dixon matrix -- in Definition \ref{bezmatrixdef}). Unfortunately,
the Bezout matrix cannot directly be expressed as a matrix of
traces. However, in \cite{MouPa00} it is shown that a reduced
version of the Bezout matrix of $f_1,\ldots,f_m,$ and $J$ is equal
to the matrix of traces of $f_1,\ldots,f_m$ with respect to the so
called canonical basis, obtained from the reduced Bezout matrix of
$f_1,\ldots,f_m$, and $1$. The required reduction of the Bezout
matrix involves reducing polynomials modulo $\I$.

Now the question is how to find the reduced version of the Bezout
matrix without further information on the structure of the
quotient algebra $\mathbb{C}[x_1, \ldots, x_m]/\I$, e.g. without
Gr\"obner Bases or multiplication matrices. First we show that we can obtain a 
set of generating polynomials for the radical $\sqrt{I}$ from the {\em non-reduced} Bezoutian matrices (see Theorem \ref{radbez}). 
Secondly, we give an algorithm which computes a system of multiplication matrices 
$M_{x_1}, \ldots, M_{x_m}$ for $\sqrt{I}$. This algorithm  adapts the
results of \cite{mb-bqrs-jsc-05}  to find the required reduced  Bezout matrices using only elements in 
$\sqrt{I}$ which were obtained from non-reduced Bezout matrices. 

\section{Related Work}

The motivation for this work was the papers \cite{LaLaRo07,
LaLaRo072} where they use moment matrices to compute the radical of
real and complex ideals. They present two versions of the method for
the complex case: first, in \cite{LaLaRo072} they double up the
machinery for the real case to obtain the radical of the complex
ideal.   However, in \cite{LaLaRo07} they significantly simplify
their method and show how to use moment matrices of maximal rank to
compute the multiplication matrices of an ideal between $\I$ and its
radical $\sqrt{\I}$. In particular, in the Gorenstein case they can
compute the multiplication matrices of $\I$. In fact, in
\cite{LaLaRo07} they cite our previous work \cite{JaRoSza06} to
compute the multiplication matrices of $\sqrt{\I}$  from  the
multiplication matrices of $\I$, but the method proposed in the
present paper is much simpler and more direct.

Note that one can also obtain the multiplication matrices  of $\I$
with respect to the basis $\Sbs=[b_1, \ldots, b_N]$ by simply
eliminating the terms not in $\Sbs$ from $x_kb_i$ using ${\rm
Mac}_{\delta+1}({\bf f})$. The advantage of computing multiplication
matrices of the radical $\sqrt{\I}$  is that it  returns matrices
which are always simultaneously diagonalizable, and possibly smaller
than the multiplication matrices of $\I$, hence easier to work with.
Moreover, if $\Sbs$ contains the monomials $1,x_{1},\ldots,x_{m}$, one
eigenvector computation yields directly the coordinates of the
roots.

Computation of  the radical of zero dimensional complex ideals is
very well studied in the literature:  methods most related to ours
include  \cite{GonVeg, becwor96} where matrices of traces are used
in order to find  generators of the radical, and the matrices of
traces are computed using \GBs; also, in \cite{ArSo95} they use the
traces to give a bound for the degree of the generators of the
radical and use linear solving methods from there; in \cite{GoTr95}
they describe the computation of the radical using symmetric
functions which are related to traces. One of the most commonly
quoted method to compute radicals is to compute the  projections
$\I\cap\CC[x_i]$ for each $i=1, \ldots, m$ and then use univariate
squarefree factorization (see for example
\cite{GiTrZa88,KrLo91b,Cox98,GrPf02} ). The advantage of the latter
is that it can be generalized for higher dimensional ideals (see for
example \cite{KrLo91}). We note here that an advantage of the method
using matrices of traces is that it behaves stably under
perturbation of the  roots of the input system, as was proved in
\cite{JaRoSza07}. Other methods to compute the radical of zero
dimensional ideals include \cite{KoMoHo89,GiMo89,
Lak90,Lak91,LakLaz91,YoNoTa92}. Applications of computing the
radical include \cite{HeObPa06}, where they show how to compute the
multiplicity structure of the roots of $\I$ once the radical is
computed.

Methods for computing the matrix of traces directly from the
generating polynomials of $\I$, without using multiplication
matrices, include \cite{DiazGonz01,brigonz02} where they use Newton
Sums, \cite{carmou96,CatDicStu96,CatDicStu98} where they use
residues and \cite{DaJe05} using resultants. Besides computing the
radical of an ideal,  matrices of traces have numerous applications
mainly in real algebraic geometry \cite{Be91, PeRoSz93, BeWo94}, or
in \cite{Rouiller99}  where trace matrices are applied  to find
separating linear forms deterministically.

\section{Ideals with Finitely Many Projective Roots}
\subsection{The Gorenstein Case}

Some of the results of this subsection appeared in \cite{JaMoRoSza08}. We
included them here for completeness.\\ 

Let  ${\bf f}=[f_1,\ldots,f_s]$ be a system of polynomials of
degrees $d_1\geq\cdots\geq d_s$ in $\mathbb{K}[{\bf x}]$, where
${\bf x}=[x_1,\ldots,x_{m}]$ and $\CC$ is an arbitrary algebraically
closed field. Let $\I$ be the ideal generated by $f_1,\ldots,f_s$ in
$\mathbb{K}[{\bf x}]$ and define $\A:=\mathbb{K}[{\bf x}]/\I$.  We
assume throughout the paper that $\A$  is a finite dimensional
vector space over $\CC$  and let $\A^*$ denote the dual space of
$\A$.

Let us first recall the definition of a Gorenstein algebra (c.f.
\cite{Kun86,ScSt75,ElMo07,LaLaRo07}). Note that these algebras are
also referred to as Frobenius in the literature, see for example
\cite{BeCaRoSz96}.

\begin{defn}\label{gorenstein}
A finite dimensional $\CC$-algebra $\A$ is Gorenstein (over $\CC$)
if there exists a nondegenerate $\CC$-bilinear form $B(x,y)$ on $\A$
such that
\[
B(ab,c)=B(a,bc) \; \text{ for every } a,b,c \in \A.
\]
\end{defn}

Note that this is equivalent to the fact that $\A$ and $\A^*$ are
isomorphic as $\A$ modules. It is also equivalent to the existence
of a ${\mathbb K}$-linear function $\Lambda: \A\rightarrow {\mathbb
K}$ such that the bilinear form $B(a,b):=\Lambda(ab)$ is
nondegenerate on $\A$.


\begin{assu}\label{assump}
Throughout this subsection we assume that $\A$ is Gorenstein.
Furthermore, we also assume that we have a bound $\delta>0$  and
$0\leq k\leq\delta$ such that

\begin{eqnarray}\label{deltacond}
\dim_\mathbb{K}\mathbb{K}[\xb]_{k}/\langle f_1, \ldots,
f_s\rangle_{\delta}\cap \mathbb{K}[\xb]_{k}
=\dim_\mathbb{K}\mathbb{K}[\xb]_{k}/\langle f_1, \ldots,
f_s\rangle_{d}\cap\mathbb{K}[\xb]_{k}
\end{eqnarray} for all $d\geq \delta$.
 Here $\mathbb{K}[\xb]_{k}:=\left\{p\in \mathbb{K}[\xb]\;:\;\deg(p)\leq k\right\}$ and
\begin{eqnarray}\label{fd}
\langle f_1, \ldots, f_s\rangle_{d}:=\left\{\sum_i f_i q_i\; :\;
   \deg(q_i) \leq d-d_i\right\}.
 \end{eqnarray}
 
\end{assu}

\begin{thm}
Assume that $\delta$ and $k$ satisfy the condition
(\ref{deltacond}). Then
\[
\dim_\mathbb{K}(\A)=\dim_\mathbb{K}\mathbb{K}[\xb]_{k}/\langle f_1,
\ldots, f_s\rangle_{\delta}\cap \mathbb{K}[\xb]_{k}.
\]
\end{thm}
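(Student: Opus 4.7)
The plan is to use condition (\ref{deltacond}) to identify $\I \cap \mathbb{K}[\xb]_k$ with the explicit finite-dimensional subspace $\langle f_1,\ldots,f_s\rangle_\delta \cap \mathbb{K}[\xb]_k$, and then to read off the equality of dimensions from the restriction of the quotient map $\mathbb{K}[\xb] \to \A$ to $\mathbb{K}[\xb]_k$.

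First I would observe that $\{\langle f_1,\ldots,f_s\rangle_d \cap \mathbb{K}[\xb]_k\}_{d \geq d_s}$ is an ascending chain of subspaces of the finite-dimensional space $\mathbb{K}[\xb]_k$. Condition (\ref{deltacond}) says their codimensions in $\mathbb{K}[\xb]_k$ are constant for $d \geq \delta$; for an ascending chain, equality of codimensions forces equality of the subspaces themselves, so
\[
\langle f_1,\ldots,f_s\rangle_\delta \cap \mathbb{K}[\xb]_k \;=\; \langle f_1,\ldots,f_s\rangle_d \cap \mathbb{K}[\xb]_k \qquad \text{for every } d \geq \delta.
\]
Taking the union over $d$ and using $\I = \bigcup_d \langle f_1,\ldots,f_s\rangle_d$ (every element of $\I$ admits some finite-degree representation as a combination of the $f_i$) then yields the key identity
\[
\I \cap \mathbb{K}[\xb]_k \;=\; \langle f_1,\ldots,f_s\rangle_\delta \cap \mathbb{K}[\xb]_k.
\]

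Next I would examine the restriction $\pi_k$ of the quotient map $\mathbb{K}[\xb] \to \A$ to $\mathbb{K}[\xb]_k$. By the identity above, $\ker \pi_k = \langle f_1,\ldots,f_s\rangle_\delta \cap \mathbb{K}[\xb]_k$, so the rank-nullity theorem gives
\[
\dim \mathrm{im}(\pi_k) \;=\; \dim \mathbb{K}[\xb]_k / (\langle f_1,\ldots,f_s\rangle_\delta \cap \mathbb{K}[\xb]_k),
\]
which is exactly the right-hand side of the theorem. It remains to argue that $\pi_k$ is surjective, so that $\dim \mathrm{im}(\pi_k) = \dim \A$.

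I expect the surjectivity of $\pi_k$ to be the main delicate point. Condition (\ref{deltacond}) by itself only controls the kernel of $\pi_k$, not its image, so one genuinely needs $k$ to be large enough that $\mathbb{K}[\xb]_k$ spans $\A$ modulo $\I$. This is exactly the hypothesis built into the section's setup, that a basis of $\A$ can be chosen from $\mathbb{K}[\xb]_k / (\langle f_1,\ldots,f_s\rangle_\delta \cap \mathbb{K}[\xb]_k)$; with this in hand the theorem is immediate from the stabilization identity and rank-nullity.
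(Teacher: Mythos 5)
Your argument follows the same route as the paper's. The stabilization of the ascending chain $\{\langle f_1,\ldots,f_s\rangle_d\cap\mathbb{K}[\xb]_k\}_d$ forced by the constant codimensions in (\ref{deltacond}) is exactly the content of the paper's linear-independence argument: the paper takes a preimage basis $B$ of the quotient and shows that if $\sum c_ib_i\in\I$ then $\sum c_ib_i\in\langle f_1,\ldots,f_s\rangle_d\cap\mathbb{K}[\xb]_k$ for some $d\geq\delta$, which by (\ref{deltacond}) would drop the dimension of the quotient at level $d$ --- a contradiction. Your identification of $\ker\pi_k$ is a cleaner packaging of the same computation.

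You are also right to single out surjectivity of $\pi_k$ as the delicate point, and in fact you have spotted a genuine gap in the paper's own proof. The paper disposes of spanning with the single sentence ``Since $\langle f_1,\ldots,f_s\rangle_\delta\subseteq\langle f_1,\ldots,f_s\rangle_d$ if $\delta\leq d$, $B$ is clearly a generator set for $\A$,'' but this containment only shows that $B$ spans the image of $\mathbb{K}[\xb]_k$ in $\A$, not $\A$ itself. Condition (\ref{deltacond}) alone does not force surjectivity of $\pi_k$: take $k=0$ and any proper ideal $\I$; then $\langle f_1,\ldots,f_s\rangle_d\cap\mathbb{K}[\xb]_0=\{0\}$ for all $d$, so (\ref{deltacond}) holds vacuously with right-hand side $1$, yet $\dim_\mathbb{K}\A$ can be arbitrary. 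So the statement as written is not literally correct without the extra hypothesis you name, namely that $k$ is large enough that $\mathbb{K}[\xb]=\mathbb{K}[\xb]_k+\I$. That hypothesis is implicit in the introduction's phrasing (``a basis of $\A$ can be obtained by taking a linear basis of $\mathbb{K}[\xb]_k/\cdots$'') and does hold in all the concrete applications --- the bounds from Theorems \ref{deltatheorem1} and \ref{deltatheorem} supply $\delta$ and $k$ for which spanning in degree $k$ is established by the Hilbert-function and Lazard-type arguments there --- but it is not a consequence of (\ref{deltacond}). Your proposal, by cleanly separating the kernel computation (which (\ref{deltacond}) does give you) from the surjectivity (which it does not), is actually more honest than the paper's proof on this point.
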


\begin{proof}
Assume that $\delta$ and $k$ satisfy the condition (\ref{deltacond})
and let $B:=[b_1, \ldots, b_N]$ be a basis for
$\mathbb{K}[\xb]_{k}/\langle f_1, \ldots, f_s\rangle_{\delta}\cap
\mathbb{K}[\xb]_{k}$. Taking pre-images, we can assume that $b_1,
\ldots, b_N$ are polynomials in $\mathbb{K}[\xb]_{k}$. We claim that
$B$ is a basis for $\A=\mathbb{K}[\xb]/\langle f_1, \ldots,
f_s\rangle$. Since $ \langle f_1, \ldots,
f_s\rangle_{\delta}\subseteq \langle f_1, \ldots, f_s\rangle_{d}$ if
$\delta\leq d$, $B$ is clearly a generator set for $\A$. On the
other hand, assume that $B$ is not linearly  independent in $\A$,
i.e. there exist $c_1, \ldots, c_N\in \CC$ such that $\sum_{i=1}^N
c_ib_i$ is in $ \langle f_1, \ldots, f_s\rangle$. Then there exists
$d\geq \delta$ such that   $\sum_{i=1}^N c_ib_i\in \langle f_1,
\ldots, f_s\rangle_d$. But $\sum_{i=1}^N c_ib_i$ is also in $
\mathbb{K}[\xb]_{k}$, so $B$ is linearly dependent in
$\mathbb{K}[\xb]_{k}/\langle f_1, \ldots, f_s\rangle_{d}\cap
\mathbb{K}[\xb]_{k}$, which contradicts condition (\ref{deltacond}).
\end{proof}

We have the following theorems giving bounds for $\delta$ in the
case when ${\bf f}$ has finitely many projective roots. First we
assume that ${\bf f}$ has no roots at infinity.

\begin{thm}\label{deltatheorem1}
Let  ${\bf f}=[f_1,\ldots,f_s]$ be a system of polynomials of
degrees $d_1\geq\cdots\geq d_s$ in $\mathbb{K}[{\bf x}]$. Assume
that the corresponding system of homogenous polynomials
$f^h_1,\ldots,f^h_s$ has finitely many projective common roots in
$\mathbb{P}_\CC^m$. Assume further that $f_1, \ldots ,f_s$ have
no common roots at infinity. Then:
\begin{enumerate}

\item If $s=m$  then for $\delta=k:=\sum_{i=1}^m (d_i-1)$
    condition (\ref{deltacond}) is satisfied. Furthermore, in this case $\A$ is always Gorenstein.

\item If $s> m$ then
for $\delta=k:=\sum_{i=1}^{m+1} d_i-m$ condition (\ref{deltacond})
is satisfied.
\end{enumerate}
 \end{thm}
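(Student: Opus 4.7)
The plan is to reduce both cases to projective-algebraic statements via homogenization. The hypothesis ``no common roots at infinity'' translates into $V(f_1^h,\ldots,f_s^h) \subset \mathbb{P}^m$ being contained in the affine chart $\{x_0\neq 0\}$. For Case~1 this yields a regular sequence structure that both makes $\A$ Gorenstein and admits a clean saturation argument; for Case~2 I would invoke Lazard's theorem applied to an extended system.

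\textbf{Case 1 ($s=m$).} The finiteness of $V(f_1^h,\ldots,f_m^h)$ in $\mathbb{P}^m$ forces $(f_1^h,\ldots,f_m^h)$ to have height $m$ in $\CC[x_0,\xb]$, so these form a regular sequence and $\CC[x_0,\xb]/(f^h)$ is Cohen--Macaulay of dimension $1$. The ``no roots at infinity'' hypothesis places $x_0$ outside every associated prime of $(f^h)$, so $x_0$ is a nonzerodivisor and $(f^h):x_0^\infty = (f^h)$. Given $p\in\langle f\rangle\cap\CC[\xb]_k$ with $k=\sum_{i=1}^m(d_i-1)$, I would write $p=\sum q_if_i$ and use the standard homogenization identity
\[
x_0^{e-\deg p}\,p^h \;=\; \sum_i x_0^{e-e_i}\,q_i^h\,f_i^h, \qquad e_i := \deg q_i + d_i,\; e := \max_i e_i,
\]
to conclude $p^h \in (f^h):x_0^\infty = (f^h)$. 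Hence $p^h = \sum G_if_i^h$ with $G_i$ homogeneous of degree $\deg p - d_i \le k-d_i$, and dehomogenizing ($x_0\mapsto 1$) produces an expression $p = \sum g_if_i$ with $\deg(g_if_i)\le k = \delta$. The Gorenstein claim follows because $f_1,\ldots,f_m$ is then an affine regular sequence cutting out a $0$-dimensional variety, making $\A$ a $0$-dimensional complete intersection.

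\textbf{Case 2 ($s>m$).} I would apply Lazard's theorem (1981) to the extended system $(f_1^h,\ldots,f_s^h,x_0)$ in $\CC[x_0,\xb]$. This system has empty common zero locus in $\mathbb{P}^m$: ``no roots at infinity'' eliminates zeros on $\{x_0=0\}$ and the factor $x_0$ eliminates the finite projective zeros. With $s+1$ polynomials of degrees $d_1,\ldots,d_s,1$ in $m+1$ variables, the top $m+1$ degrees are $d_1,\ldots,d_{m+1}$ (since each $d_i \ge 1$), so Lazard yields $(f^h,x_0)_D = \CC[x_0,\xb]_D$ for all $D \ge \delta = \sum_{i=1}^{m+1}d_i - m$. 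For $p\in\langle f\rangle\cap\CC[\xb]_k$ one can then write $x_0^{\delta-\deg p}p^h = \sum G_if_i^h + Hx_0$ and, setting $x_0 \mapsto 1$, obtain $p = \sum g_if_i + h$ with $h \in \langle f\rangle\cap\CC[\xb]_{\delta-1}$.

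The main obstacle is absorbing this residue $h$. Here $(f^h)$ need not be a complete intersection and $x_0$ need not be a nonzerodivisor, so the clean saturation argument of Case~1 fails. The needed control comes from the leading forms $f_1^{top},\ldots,f_s^{top}$ sharing no common zero in $\mathbb{P}^{m-1}$ (another consequence of ``no roots at infinity''): Lazard applied in $m$ variables, combined with a Koszul-type manipulation of the resulting syzygies among the leading forms, allows one to modify the top-degree parts of the $g_i$'s so as to cancel $h$ iteratively, ultimately producing an expression $p = \sum g_i'f_i$ with $\deg(g_i'f_i) \le \delta$ and establishing condition~(\ref{deltacond}) for all $d\ge\delta$.
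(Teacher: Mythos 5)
Your Case~1 argument is correct but takes a genuinely different route from the paper. The paper homogenizes, observes that $\fh$ is a regular sequence in $R = \CC[x_1,\ldots,x_{m+1}]$, and computes the Hilbert series of $B=R/\langle\fh\rangle$, reading off that the Hilbert function stabilizes at $\delta=\sum(d_i-1)$; it then transfers this back via the dehomogenization isomorphism $B_d \cong \CC[\xb]_d/\langle f\rangle_d$. You instead use unmixedness of the complete intersection to argue $x_{m+1}$ is a nonzerodivisor on $B$, so $\langle\fh\rangle : x_{m+1}^\infty=\langle\fh\rangle$, and then directly show that any $p\in\langle f\rangle\cap\CC[\xb]_k$ already lies in $\langle f\rangle_\delta$. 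This buys the stronger and cleaner statement $\langle f\rangle\cap\CC[\xb]_k=\langle f\rangle_\delta\cap\CC[\xb]_k$ without any Hilbert series computation; the paper's computation has the separate merit of identifying $N=\prod d_i$ along the way. Your Gorenstein argument (complete intersection $\Rightarrow$ Gorenstein) agrees in substance with the paper, which cites the Bezoutian isomorphism of Elkadi--Mourrain.

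Your Case~2 has a genuine gap, which you correctly flag yourself. After applying Lazard to the extended system $(\fh,x_{m+1})$ and dehomogenizing, you obtain $p=\sum g_if_i+h$ with $h\in\langle f\rangle\cap\CC[\xb]_{\delta-1}$. Re-running the same reduction on $h$ produces another residue again only bounded by degree $\delta-1$, so the iteration does not visibly terminate; and since $\langle\fh\rangle$ may have embedded primes supported on $\{x_{m+1}=0\}$, the saturation trick from Case~1 is unavailable. Your fallback --- Lazard in $m$ variables applied to the leading forms plus ``Koszul-type manipulation of syzygies'' --- would amount to reproving Macaulay--Lazard stabilization from scratch, and no details are supplied to show the top-degree cancellations can always be pushed down below $\delta$. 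The paper's route sidesteps all of this: Lazard's Theorem~3.3 is applied to $\fh$ itself (which has finitely many projective zeros), directly giving $\dim B_\delta=\dim B_{\delta+1}=\cdots$, after which the dehomogenization argument of Case~1 applies verbatim. There is no residue to absorb.
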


\begin{proof}

For the first assertion let $\fh=[f_1^h, \ldots, f_m^h]$ be the
homogenization  of ${\bf f}$ using a new variable $x_{m+1}$. Using
our assumption that $\fh$  has finitely many roots in
$\mathbb{P}_{\mathbb{K}}^m$ and $s=m$, one can see that $(\fh)$ is a
regular sequence in $R:={\mathbb K}[x_1, \ldots, x_m,x_{m+1}]$.
Define the graded ring  $B:=R/\langle \fh\rangle$. Following the
approach
and notation
 in
\cite{Stan96}, we can now calculate the Hilbert series of $B$,
defined by $H(B,\lambda)=\sum_d \mathcal{H}_B(d) \lambda^d$, where
${\mathcal H}_B$ is the Hilbert function of $B$. We have
$$H(R,\lambda)=\frac{H(B,\lambda)}{(1-\lambda^{d_1})\cdots (1-\lambda^{d_m})},$$
and using the simple fact that
$$H(R, \lambda)=\frac{1}{(1-\lambda)^{m+1}}$$
we obtain that
$$
\begin{aligned}
 H(B,\lambda)&=\frac{(1+\lambda+\cdots +\lambda^{d_1-1})\cdots
(1+\lambda+\cdots
+\lambda^{d_m-1})}{(1-\lambda)}\notag\\
&=g(\lambda)(1+\lambda+\ldots),\notag
 \end{aligned}$$
 where
$$g(\lambda) =(1+\lambda+\cdots +\lambda^{d_1-1})\cdots (1+\lambda+\cdots
+\lambda^{d_m-1}).$$ This implies that the Hilbert function
$${\mathcal H}_B(\delta)={\mathcal H}_B(\delta+1)={\mathcal H}_B(\delta+2)=\ldots $$
Note that dehomogenization induces a linear isomorphism $B_d
\rightarrow {\mathbb K}[{\bf x}]_d/\langle f_1, \ldots,
f_m\rangle_{d}$, where $B_d$ stands for the degree $d$ homogeneous
part of $B$. From this, using that there are no common roots at
infinity, we infer that for $d\geq \delta$ $\dim _{\mathbb
K}{\mathbb K}[{\bf x}]_d/\langle f_1, \ldots,
f_m\rangle_{d}=\dim_\CC \A = N$, which implies  (\ref{deltacond}).

Note that the common value $N={\mathcal H}_B(\delta)$ is the sum of the
coefficients of $g$, which is
$$g(1)=\prod_{i=1}^m d_i.$$

To prove that $\A$ is Gorenstein, we cite  \cite[Proposition 8.25,
p. 221]{ElMo07} where it is proved that if $f_{1},\ldots, f_{m}$ is
an affine complete intersection then the Bezoutian
$B_{1,f_{1},\ldots,f_{m}}$ defines an isomorphism between ${\A}^*$
and $\A$.

To prove the second assertion we note that \cite[Theorem 3.3]{La81}
implies that
$$
\dim _{\mathbb K}B_\delta=\dim _{\mathbb K}B_{\delta+1}=\ldots .
$$
From here we obtain (\ref{deltacond}) as in the Case 1.
\end{proof}

The following theorem generalizes the previous result for systems which may have
roots at infinity. 

\begin{thm}\label{deltatheorem}
Let  ${\bf f}=[f_1,\ldots,f_s]$ be a system of polynomials of
degrees $d_1\geq\cdots\geq d_s$ in $\mathbb{K}[{\bf x}]$.
Assume that the corresponding system of homogenous polynomials
$f^h_1,\ldots,f^h_s$ has finitely many projective common roots in
$\mathbb{P}_\CC^m$ and let $\J:=\langle f^h_1,\ldots,f^h_s \rangle$ be
the ideal they generate in $R:=\mathbb{K}[x_1, \ldots, x_n,x_{n+1}]$. Then:
\begin{enumerate}

\item If $s=m$  then for $k:=\sum_{i=1}^m (d_i-1)$ and
$\delta:=k+1$ condition (\ref{deltacond}) is satisfied.
\item If $s> m$ then for $k:=\sum_{i=1}^{m+1} d_i-m$  and
$\delta:=k+1$ condition
(\ref{deltacond}) is  satisfied.
\end{enumerate}
  \end{thm}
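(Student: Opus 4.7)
The plan is to carry the problem into the homogeneous setting in $R := \mathbb{K}[{\bf x}, x_{m+1}]$ and to reformulate condition~(\ref{deltacond}) as a one-step saturation identity for the homogeneous ideal $\J$ at $x_{m+1}$. I will use dehomogenization $\pi_d : R_d \to \mathbb{K}[{\bf x}]_d$, $p({\bf x}, x_{m+1}) \mapsto p({\bf x}, 1)$, and would first observe that each $\pi_d$ is a $\mathbb{K}$-linear isomorphism carrying $\J_d$ onto $\langle f_1, \ldots, f_s\rangle_d$, and that under this identification the inclusion $\mathbb{K}[{\bf x}]_k \hookrightarrow \mathbb{K}[{\bf x}]_d$ corresponds to multiplication by $x_{m+1}^{d-k}$ from $R_k$ into $R_d$. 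Consequently, the left-hand side of (\ref{deltacond}) equals $\dim R_k - \dim \ker_d$, where
\[
\ker_d := \{p \in R_k : x_{m+1}^{d-k}\, p \in \J_d\}.
\]
The chain $(\ker_d)_{d \geq k}$ is non-decreasing and stabilizes to $\bar\J \cap R_k$, where $\bar\J := \J : x_{m+1}^\infty$. Hence condition~(\ref{deltacond}) with $\delta = k+1$ is equivalent to the degree-$k$ equality $(\J : x_{m+1})_k = (\J : x_{m+1}^\infty)_k$, which asserts that a single colon step at degree $k$ already achieves the full saturation.

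To establish this one-step saturation, I would follow the case split of Theorem~\ref{deltatheorem1}. In Case~1 ($s = m$), finiteness of $V(\J) \subseteq \mathbb{P}_\CC^m$ forces $\fh$ to be a regular sequence, so $B := R/\J$ is Cohen--Macaulay of Krull dimension~$1$, and the Hilbert-series computation reproduced in the proof of Theorem~\ref{deltatheorem1} gives $\mathcal{H}_B(d) = \prod_{i=1}^m d_i$ for all $d \geq k$. Combining the Cohen--Macaulay property with this stabilization should pin down the $x_{m+1}$-torsion submodule $\bar\J/\J \subseteq B$ in degree $k$ tightly enough to conclude that it is killed by a single factor of $x_{m+1}$. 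In Case~2 ($s > m$), I would invoke Lazard's Theorem~3.3 in \cite{La81} --- already used for Theorem~\ref{deltatheorem1} --- applied to a suitable extension of $\fh$, to obtain the analogous one-step saturation from degree $k = \sum_{i=1}^{m+1} d_i - m$.

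The main obstacle is precisely this one-step saturation identity $(\J : x_{m+1})_k = (\J : x_{m+1}^\infty)_k$: in essence a Castelnuovo--Mumford-regularity statement for $\bar\J$, genuinely delicate because $x_{m+1}$ is a zero-divisor on $B$ exactly when roots at infinity are present. The $+1$ gap between the bound $\delta = k$ in Theorem~\ref{deltatheorem1} and the new bound $\delta = k+1$ here corresponds precisely to this single additional multiplication by $x_{m+1}$ needed to sweep away the infinity torsion in degree~$k$.
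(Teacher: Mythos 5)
Your reduction of condition~(\ref{deltacond}) with $\delta = k+1$ to the one-step saturation identity $(\J : x_{m+1})_k = (\J : x_{m+1}^\infty)_k$ is correct, and the dictionary you set up between $\langle f_1,\ldots,f_s\rangle_d \cap \KK[\xb]_k$ and $\{p \in R_k : x_{m+1}^{d-k}p \in \J_d\}$ is exactly the right way to carry the problem into the homogeneous picture. But the proof stops precisely where the substance begins. You acknowledge that the one-step saturation is ``the main obstacle'' and ``genuinely delicate,'' and then gesture at the Cohen--Macaulay property of $R/\J$ (for $s=m$) and at Lazard's theorem (for $s>m$) without extracting the needed inequality. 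Neither ingredient by itself bounds the degree at which the torsion submodule $\bar{\J}/\J \subseteq R/\J$ is killed by a single power of $x_{m+1}$: stabilization of the Hilbert function of $R/\J$ at degree $k$ tells you $\dim_\KK(R/\J)_d$ is constant for $d \geq k$, but says nothing directly about how $(\J:x_{m+1})_k$ compares with $(\J:x_{m+1}^2)_k$, etc. So the gap is real, and it is exactly the step you flag.

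The paper closes this gap without invoking any regularity machinery. It first counts $\dim_\KK\bigl((\J : x_{m+1})_k / \J_k\bigr) = N'$ (the number of roots at infinity, with multiplicity) by reading off dimensions in the four-term exact sequence
\[
0 \to (\J:x_{m+1})_k/\J_k \to R_k/\J_k \xrightarrow{\ \cdot x_{m+1}\ } R_{k+1}/\J_{k+1} \to R_{k+1}/(\J + (x_{m+1}))_{k+1} \to 0,
\]
using that the Hilbert functions of $R/\J$ and of $R/(\J+(x_{m+1}))$ are already stationary, equal to $N+N'$ and $N'$ respectively, from degree $k$ on (this is where Theorem~\ref{deltatheorem1} and Lazard enter). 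It then chooses a basis of $R_k/\J_k$ whose first $N'$ elements lie in $(\J:x_{m+1})_k$ and whose remaining $N$ elements lie in $\langle x_{m+1}\rangle_k$ (with a short separate argument when $x_{m+1}^k \in (\J:x_{m+1})_k$, i.e.\ when all roots are at infinity). Dehomogenizing those $N$ elements gives polynomials of degree $< k$ spanning $\KK[\xb]_k$ modulo $\langle f_1,\ldots,f_s\rangle_{k+1}\cap\KK[\xb]_k$; since $\dim_\KK\A = N$, they must be a basis of $\A$, and the dimension count then forces (\ref{deltacond}). In effect, the paper proves your saturation identity as a byproduct of an explicit basis construction rather than as a regularity bound. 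You have the right target and the right reformulation, but the argument that actually reaches it is the part your proposal leaves open.
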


\begin{proof}
Assume that ${\bf f}$ has $N$ affine roots and $N'$ roots at
infinity, counted with multiplicity.
In this proof only, for a homogeneous ideal $J\subseteq R$, $J_t$ denotes the elements of $J$ of degree {\em equal} to $t$,  abusing the notation. By the proof of Theorem \ref{deltatheorem1} and \cite{La81}, we have 
that for
$k$ defined above (in both cases) and for all $d\geq 0$,
$R_{k+d}/\J_{k+d}$
(resp. $R_{k+d}/(\J+(x_{m+1}))_{k+d}$)
  is of dimension $N+N'$
(resp. $N'$).
Consider the exact sequence
$$
0\rightarrow (\J:x_{m+1})_{k}/\J_{k} \rightarrow R_{k}/\J_{k}
\stackrel{\mathcal{M}_{m+1}}{\longrightarrow} R_{k+1}/\J_{k+1}  \rightarrow
R_{k+1}/(\J+x_{m+1})_{k+1}\rightarrow 0
$$
where $\mathcal{M}_{m+1}$ is the multiplication by $x_{m+1}$.
Using the  relation on the dimensions of the vector
spaces of this exact sequence, we deduce that
$$
\dim_{\KK} ((\J:x_{m+1})_{k}/\J_{k})= (N+N')-(N+N')+ N' =N'.
$$
Thus we can choose a basis $[b_1^h,\ldots,b_{N+N'}^h]$ of $R_{k}/\J_{k}$
such that  $b_1^h,\ldots,b_{N'}^h \in  (\J:x_{m+1})_{k}$.
Moreover, we can even assume that
$b^{h}_{N'+1},\ldots, b^{h}_{N+N'} \in \langle x_{m+1} \rangle_{k}$,
since
$$
{\rm span}( b_1^h,\ldots,b_{N+N'}^h )
= {\rm span}(  b_1^h,\ldots,b_{N'}^h) +
\langle x_{m+1} \rangle _{k}\ +\  \J_{k}.
$$

If $x_{m+1}^{k} \in (\J:x_{m+1})_{k}$ then $x_{m+1}^{k+1}\in \J$,
all the roots are at infinity, $N=0$,
and $(\J:x_{m+1})_{k} = R_{k} \ \mathrm{modulo}\ \J_{k}$ which shows that
$x_{m+1} R_{k} \subset \J_{k+1}$. After dehomogeneization
all polynomials of degree $\le k$ are in $ \langle f_1, \ldots, f_s\rangle_{k+1}$.
So  condition (\ref{deltacond}) is satisfied for $\delta=k+1$.

Suppose now that $x_{m+1}^{k} \not\in (\J:x_{m+1})_{k}$, so that
we can take $b_{N'+1}^h=x_{m+1}^{k}$.


As $x_{m+1} (\J:x_{m+1}) \subset \J$, we deduce that
$x_{m+1}\, b^h_{1}=\cdots=x_{m+1}\,b^h_{N'}=0$ modulo $\J_{k+1}$
and that
$$
\dim_\KK \left( {\rm span}\left(  x_{m+1}\,b^h_{1},\ldots,
x_{m+1}\,b^h_{N+N'}\right)/  \J_{k+1}\right)\leq N.
$$

As we have
$$
R_k = {\rm span}\left(  b_1^{h}, \ldots ,b_{N+N'}^{h}\right)+  \J_k
$$
and $x_{m+1}\, b^h_{i}\in \J_{k+1}$ for $1\leq i\leq N'$, we deduce
that
$$
x_{m+1}\, R_k = {\rm span}\left( x_{m+1}b_{N'+1}^{h}, \ldots , 
x_{m+1}b_{N+N'}^{h}\right) \ + \  \J_{k+1}.
$$
After dehomogeneization, we obtain a family $B=[b_{N'+1}, ...,b_{N+N'}]$
of $N$ elements  of degree $< k$ (because $b_i^{h} \in \langle
x_{m+1}\rangle_{k}$ for $N'< i \leq N+N'$) such that
$$
\KK[x_{1},\ldots, x_{m}]_{k} = {\rm span}(  B ) +
\langle f_{1}, \ldots, f_s  \rangle_{k+1} \cap \KK[x_{1},\ldots, 
x_{m}]_{ k}
$$
(here we use the notation of Assumption \ref{assump} again). Thus any polynomial of degree $\leq k$ can be rewritten, modulo $ \langle
f_1,\ldots, f_{s} \rangle_{k+1}$, as a linear combination of elements in 
$B$ of degree $<k$.
As $B$ contains $1$ since $b_{N'+1}^h=x_{m+1}^{k}$, this shows that $B$ is a
generating set of $\A$.
As  $\A$ is of dimension $N$, $B$ is in fact a basis of $\A$, and thus 
$\delta:=k+1$
and $k$ satisfy the conditions in (\ref{deltacond}).
\end{proof}

\begin{rem} Note that in general 
$$ \langle f_1,\ldots,f_s \rangle\cap{\mathbb K}[{\bf x}]_d\ \neq \langle f_1,\ldots,f_s \rangle_d,$$ where  $\langle f_1,\ldots,f_s \rangle_d$ was defined in (\ref{fd}). Inequality can happen when the system has a root at infinity, for example, if $f_1=x+1, \;f_2=x$ then
$\langle f_1,f_2 \rangle\cap{\mathbb K}[{\bf x}]_0={\mathbb K}$ but $\langle f_1,f_2 \rangle_0=\{0\}$.
However, using the  homogenization $f_1^h,\ldots,f_s^h$, the degree
$d$ part of the homogenized ideal is always equal to the space
spanned by the multiples of $f_1^h,\ldots,f_s^h$ of  degree $d$. The
above example also demonstrates that $\dim \A$ is not always the
same as $\dim\CC[\xb]_d/\langle f_1,\ldots,f_s \rangle_d$ even for large $d$, because above $\dim\A=0$ but  $\dim\CC[x,y]_d/\langle
f_1,f_2 \rangle_d=1$ for all $d\geq 0$.
\end{rem}

\begin{defn}\label{D}
Let $N:=\dim_\mathbb{K}(\A)$ and  fix $\Sbs=[b_1,\ldots,b_N]$ a
monomial basis for $\A$ such that ${\rm deg}( b_i)\leq k$ for
all $i=1,\ldots,N$. We define $D$ to be the maximum degree of the monomials
in $\Sbs$. Thus $D\leq k\leq \delta$.
\end{defn}

Next we will define Sylvester and Macaulay type resultant matrices
for $f_1,\ldots f_s$.

\begin{defn}\label{bigdelta}
Define $$\Delta:=\max(\delta-1,2D)$$ where $\delta$ and $D$ are
defined in Assumption \ref{assump} and Definition \ref{D}.

Let ${\rm Syl}_{\Delta+1}(\mathbf{f})$ be the transpose matrix of the
linear map
\begin{eqnarray}
\bigoplus_i \CC[{\bf x}]_{\Delta-d_i+1}&\longrightarrow &\CC[{\bf x}]_{\Delta+1}\label{syl}\\
(g_1,\ldots,g_s)&\mapsto& \sum_{i=1}^sf_ig_i\notag
\end{eqnarray}
written in the monomial bases. So, in our notation, ${\rm
Syl}_{\Delta+1}(\mathbf{f})$ will have rows which correspond to all
polynomials $f_ix^\alpha$ of degree at most
   $\Delta$.

Let $\Mac$ be the matrix with rows corresponding to a basis of  $\langle f_1, \ldots, f_s\rangle_{\Delta+1}\cap  \CC[{\bf x}]_{\Delta}$, obtained by eliminating coefficients of terms of degree $\Delta+1$ in the matrix ${\rm
Syl}_{\Delta+1}(\mathbf{f})$ using Gaussian elimination, and then taking a maximal linearly independent set  among the eliminated rows. \end{defn}

\begin{rem} In the case where $s=m$, for generic ${\bf f}$ with no roots at infinity, we can directly construct $\Mac$ by taking the restriction of the map (\ref{syl}) to
\[
\bigoplus_{i=1}^m {\mathcal S}_i(\Delta)\longrightarrow \CC[{\bf
x}]_{\Delta}
\]
where ${\mathcal S}_i(\Delta)={\rm span}\{{\bf
x}^{\alpha}:|\alpha|\leq\Delta-d_i,\, \forall
 j< i,\,\alpha_j< d_j\}$.

Here $\Mac$ is a submatrix of the classical Macaulay matrix of the
homogenization of $\f$ and some $f^h_{m+1}$, where $f^h_{m+1}$ is
any homogeneous polynomial of degree $\Delta-\delta$: we only take
the rows corresponding to the polynomials in $\f$. Since the
Macaulay matrix is generically non-singular,\\ $\Mac$ will also be
generically full rank.

Note that with our assumption that $f_1,\ldots,f_m$ has no roots at infinity, we have that $\Mac$ has column \rm{corank} $
\dim \A=\prod_{i=1}^m d_i. $

\end{rem}

Since $\Delta\geq\delta-1$, by Assumption \ref{assump} and Theorem \ref{deltatheorem}, the corank of
$\Mac=N$, where $N$ is the dimension of $\A$. Also, we can assume
that the first columns of $\Mac$ correspond
to a  basis $\Sbs$ of $\A$.

Fix an element $${\bf y}=[y_{\alpha}:\alpha\in \N^m, \;|\alpha|\leq
\Delta]^T$$ of the nullspace ${\rm Null}(\Mac)$, i.e.
$\Mac\cdot\yb=0$.

\begin{defn}\label{moment}
Let $\Sbs= [b_{1},\ldots,b_{N}]$ be the basis of $\A$ as above, consisting of monomials of
degree at most $D$.
 Using ${\bf y}$ we can  define $\Lambda_\yb\in \A^*$ by
$ \Lambda_\yb (g):=\sum_{\xb^\alpha\in \Sbs} y_{\alpha}g_{\alpha}, $
where $g=\sum_{\xb^\alpha\in \Sbs}g_{\alpha}\xb^\alpha\in \A$. Note
that every $\Lambda\in \A^*$ can be defined as $\Lambda_\yb$ for
some $\yb\in {\rm Null}(\Mac)$ or more generally with an element of
${\mathbb K}[{\bf x}]^{*}$ which vanishes on the ideal $\I$.

Define the \emph{moment matrix} $\mathfrak{M}_\Sbs({\bf y})$ to be the
$N\times N$ matrix given by
$$
 \mathfrak{M}_\Sbs({\bf y})=[y_{\alpha+\beta}]_{\alpha,\beta},$$
 where $\alpha$ and $\beta$ run through the exponents of the monomials in $\Sbs$. Note that $\mathfrak{M}_\Sbs$ is only a
   submatrix of the usual notion of moment matrix, see for example
   \cite{CuFi96}.

For $p \in \A$, we define the linear function  $p\cdot \Lambda\in
\A^*$ as $p\cdot \Lambda(g):=\Lambda(pg)$ for all $g\in \A$.

\end{defn}

\begin{rem}
If one considers a linear function $\Lambda$ on $\A$, such that the
bilinear form $(x,y)\mapsto\Lambda(xy)$ is nondegenerate on $\A$,
then the moment matrix corresponding to this $\Lambda$ will be the
one whose $(i,j)$-th entry  is just $\Lambda(b_ib_j)$. Moreover, for
$g,h\in \A$
\[
\Lambda_\yb(gh)={\rm coeff}_\Sbs(g)^T\cdot\mathfrak{M}_\Sbs(\yb)\cdot{\rm
coeff}_\Sbs(h)
\]
where ${\rm coeff}_\Sbs(p)$ denotes the vector of coefficients of $p\in
\A$ in the basis $\Sbs$.
\end{rem}

The following proposition is a simple corollary of \cite[Prop 3.3
   and Cor. 3.1]{LaLaRo07}.

\begin{prop}
Let $\,{\bf y}$ be a random element of the vector space ${\rm
Null}(\Mac)$. With high probability, $\mathfrak{M}_\Sbs({\bf y})$ is
non-singular.
\end{prop}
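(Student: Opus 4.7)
The plan is to combine the Gorenstein hypothesis with a Zariski-openness argument. I will first identify ${\rm Null}(\Mac)$ with the dual space $\A^*$, then use the Gorenstein property to exhibit a single element $\yb^{\star}$ of the nullspace for which $\mathfrak{M}_\Sbs(\yb^{\star})$ is non-singular, and finally observe that $\det\mathfrak{M}_\Sbs(\yb)$ is a nonzero polynomial in the coordinates of $\yb$, so it vanishes only on a proper Zariski-closed subset of ${\rm Null}(\Mac)$.

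For the identification, Definition \ref{bigdelta} says the rows of $\Mac$ form a basis of $\langle f_1,\ldots,f_s\rangle_{\Delta+1}\cap\CC[\xb]_\Delta$. Since $\Delta\geq\delta-1$, Assumption \ref{assump} and the theorem following it force $\Mac$ to have corank exactly $N=\dim_\CC\A$, and the map $\yb\mapsto\Lambda_\yb$ of Definition \ref{moment} becomes a linear isomorphism ${\rm Null}(\Mac)\to\A^*$: an element of the nullspace is exactly a linear functional on $\CC[\xb]_\Delta$ vanishing on the degree $\leq\Delta$ part of the ideal, and such a functional descends to $\A$ via the basis $\Sbs$. Because $2D\leq\Delta$ (Definition \ref{bigdelta}), for every pair $b_i=\xb^\alpha,\,b_j=\xb^\beta\in\Sbs$ the coordinate $y_{\alpha+\beta}$ is well-defined and, as recorded in the remark following Definition \ref{moment}, equals $\Lambda_\yb(b_ib_j)$; thus the $(i,j)$-entry of $\mathfrak{M}_\Sbs(\yb)$ is precisely $\Lambda_\yb(b_ib_j)$.

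Now I invoke the Gorenstein hypothesis (Definition \ref{gorenstein}): there exists $\Lambda^{\star}\in\A^*$ such that $(a,b)\mapsto\Lambda^{\star}(ab)$ is a non-degenerate bilinear form on $\A$. Let $\yb^{\star}\in{\rm Null}(\Mac)$ be its preimage under the isomorphism above. Then $\mathfrak{M}_\Sbs(\yb^{\star})$ is the Gram matrix of this non-degenerate form in the basis $\Sbs$, and is therefore invertible. Since the entries of $\mathfrak{M}_\Sbs(\yb)$ are linear functions of the coordinates of $\yb$, the function $P(\yb):=\det\mathfrak{M}_\Sbs(\yb)$ is a polynomial of degree $N$ on the vector space ${\rm Null}(\Mac)$, and $P(\yb^{\star})\neq 0$. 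Consequently $\{P\neq 0\}$ is a Zariski-open dense subset of ${\rm Null}(\Mac)$, and a random $\yb$ avoids $\{P=0\}$ with high probability (e.g.\ in the Schwartz--Zippel sense). The only nontrivial ingredient is the identification ${\rm Null}(\Mac)\cong\A^*$, which Assumption \ref{assump} together with the choice $\Delta\geq\delta-1$ is tailor-made to provide; after that the argument is formal and parallels \cite[Prop 3.3, Cor 3.1]{LaLaRo07}.
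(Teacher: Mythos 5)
Your argument is correct, and it is self-contained where the paper simply cites \cite[Prop.\ 3.3 and Cor.\ 3.1]{LaLaRo07} without reproducing a proof. The three ingredients you assemble are exactly what is needed: (i) the linear map $\yb\mapsto\Lambda_\yb$ from ${\rm Null}(\Mac)$ to $\A^*$ is a bijection (both spaces have dimension $N$, and injectivity follows from the invertibility of the block $\At$ established in Lemma \ref{Rlemma}, since the remaining coordinates of $\yb$ are determined by the $\Sbs$-coordinates via $y_{\text{rest}}=-\At^{-1}\Bt\,y_\Sbs$); (ii) $[\mathfrak{M}_\Sbs(\yb)]_{ij}=\Lambda_\yb(b_ib_j)$, which holds because $2D\le\Delta$ and $\yb$ annihilates the row space of $\Mac$, so that $y_{\alpha+\beta}$ agrees with the value of $\Lambda_\yb$ on the reduction of $\xb^{\alpha+\beta}$ modulo $\langle\f\rangle_{\Delta+1}$; and (iii) the Gorenstein hypothesis (in force by Assumption \ref{assump}) supplies a $\Lambda^\star$ with $\det\mathfrak{M}_\Sbs(\yb^\star)\ne 0$, after which the Zariski-open/Schwartz--Zippel step is standard.

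One small imprecision: you describe an element of ${\rm Null}(\Mac)$ as a functional vanishing on ``the degree $\le\Delta$ part of the ideal.'' Strictly, it vanishes on $\langle f_1,\ldots,f_s\rangle_{\Delta+1}\cap\CC[\xb]_\Delta$, which is the row space of $\Mac$; the remark after Theorem \ref{deltatheorem} cautions that this space need not coincide with $\I\cap\CC[\xb]_\Delta$. This does not affect your conclusion, since the corank of $\Mac$ is $N$ and $\Sbs$ spans the corresponding quotient, so the dimension count and the identification with $\A^*$ still go through; but the phrasing should reflect the actual row space rather than the ideal.
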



 \begin{rem}\label{rem0}
Using the above proposition, one can detect whether the algebra $\A$
is not Gorenstein with high probability by simply computing the rank
of $\mathfrak{M}_\Sbs({\bf y})$ for (perhaps several) random elements
$\,{\bf y}$ in ${\rm Null}(\Mac)$.
\end{rem}

\begin{rem}\label{rem1}
%
%
By \cite[Theorem 2.6 and Lemma 3.2]{LaLaRo07} one can extend ${\bf
y}$ to $\tilde{\bf y}\in\CC^{\N^m}$ such that the infinite moment
matrix $\mathfrak{M}(\tilde{\bf y}):=[\tilde{
y}_{\alpha+\beta}]_{\alpha, \beta\in \N^m}$ has the same rank as
$\mathfrak{M}_\Sbs({\bf y})$ and the columns of
$\mathfrak{M}(\tilde{\bf y})$ vanish on all the elements of the
ideal $\I$.
\end{rem}

Next we define a basis dual to $\Sbs=[b_1,\ldots,b_N]$ with respect to
the moment matrix $\mathfrak{M}_\Sbs({\bf y})$.  Using this dual basis
we also define a polynomial $J$ which is in some sense a
generalization of the Jacobian of a well-constrained polynomial
system.

\begin{defn} \label{dualbasis}

From now on we fix $\yb\in {\rm Null}(\Mac)$ such that
$\mathfrak{M}_\Sbs({\bf y})$ is invertible and we will denote by
$\Lambda$ the corresponding element $\Lambda_\yb\in \A^*$. We define

$$\mathfrak{M}_\Sbs^{-1}({\bf y})=:[c_{ij}]_{i,j=1}^N.$$

Let $b_i^*:=\sum_{j=1}^N c_{ji}b_j.$ Then $[b_1^*, \ldots, b_N^*]$
corresponds to the columns of the inverse matrix
$\mathfrak{M}_\Sbs^{-1}({\bf y})$ and   they also form a basis for
$\A$. Note that we have $\Lambda(b_ib_j^*)=1$, if $i=j$, and 0
otherwise.

Define the {\em generalized Jacobian} by
\begin{eqnarray}
J:=\sum_{i=1}^Nb_ib_i^*\;\text{ mod }  \I\label{J}
\end{eqnarray}
expressed in the basis $\Sbs=[b_1, \ldots, b_N]$ of $\A$.

\end{defn}

\begin{rem}
Note that since $\sum_{i=1}^N b_ib^*_i$ has degree at most $2D$, and
$\Delta\geq 2D$, we can use $\Mac$ to find its reduced form, which is
$J$. Because of this reduction, we have that ${\rm deg}(J)\leq D\leq
\delta$.

Note that the notion of generalized Jacobian   was also
introduced in \cite{BeCaRoSz96}. Its name come from the fact that if
$\,s=m$ and if $\,\Lambda$ is the so called residue (c.f.
\cite{ElMo07}),  then $\sum_{i=1}^N b_ib^*_i=J$ is the Jacobian of
$f_1, \ldots, f_m$.
\end{rem}

We now recall the definition of the multiplication matrices and the
matrix of traces as presented in \cite{JaRoSza07}.

\begin{defn}\label{mtraces}
Let $p\in \A$. The multiplication matrix $M_p$ is the transpose of
the matrix of the multiplication map
\[
\begin{aligned}
\mathcal{M}_p:\A&\longrightarrow \A\notag\\
g&\mapsto pg
\end{aligned}
\]
written in the basis $\Sbs$.

The \emph{matrix of traces} is the $N\times N$ symmetric matrix:
\[
T=\left[Tr(b_ib_j)\right]_{i,j=1}^N
\]
where $Tr(pq):= Tr(M_{pq})$,  $M_{pq}$ is the multiplication matrix
of $pq$ as an element in $\A$ in terms of the basis $\Sbs=[b_1, \ldots,
b_N]$ and $Tr$ indicates the trace of a matrix.
\end{defn}

The next results relate the multiplication by $J$ matrix to the
matrix of traces $T$.

\begin{prop}
Let $M_J$ be the multiplication matrix of $J$ with respect to the
basis $\Sbs$. We then have that
$$M_J=[Tr(b_ib_j^*)]_{i,j=1}^N. $$
\end{prop}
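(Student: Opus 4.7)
The plan is to exploit the fact that $[b_1,\ldots,b_N]$ and $[b_1^*,\ldots,b_N^*]$ are dual bases of $\A$ with respect to the nondegenerate bilinear form $B(p,q):=\Lambda(pq)$. Indeed, by construction $\Lambda(b_ib_j^*)=\delta_{ij}$, so for any $p\in\A$ the expansion in the basis $\Sbs$ can be read off as
$$
p=\sum_{k=1}^N \Lambda(pb_k^*)\,b_k,
$$
because applying $\Lambda(\,\cdot\, b_j^*)$ to both sides recovers the $j$-th coefficient. This is the only algebraic input I will need from the Gorenstein/dual-basis setup.

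Next I will establish the trace-as-functional identity $Tr(p)=\Lambda(pJ)$ for every $p\in\A$. Writing $\mathcal{M}_p(b_j)=pb_j=\sum_{i}\Lambda(pb_jb_i^*)\,b_i$ by the expansion above, the matrix $A$ of the multiplication map in the basis $\Sbs$ has entries $A_{ij}=\Lambda(pb_jb_i^*)$. Since the trace is invariant under transposition, $Tr(M_p)=Tr(A)=\sum_i \Lambda(pb_ib_i^*)=\Lambda\!\bigl(p\sum_i b_ib_i^*\bigr)=\Lambda(pJ)$, using linearity of $\Lambda$ and the definition $J=\sum_i b_ib_i^*\bmod\I$.

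Finally I compute the entries of $M_J$ directly and match them against the desired right-hand side. By definition $M_J=A^T$ for $A$ the matrix of $\mathcal{M}_J$, and from the computation above (applied to $p=J$) this gives
$$
(M_J)_{ij}=A_{ji}=\Lambda(Jb_ib_j^*).
$$
On the other hand, applying the trace identity to $p=b_ib_j^*\in\A$ yields
$$
Tr(b_ib_j^*)=\Lambda\bigl((b_ib_j^*)J\bigr)=\Lambda(Jb_ib_j^*),
$$
and comparing these two expressions gives the claim entry-wise.

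There is no real obstacle here beyond bookkeeping. The only care needed is to keep the transpose convention of Definition \ref{mtraces} straight (so that the column index of $A$ is the input of $\mathcal{M}_J$), and to note that $J$ is well-defined as an element of $\A$ so that $\Lambda(Jb_ib_j^*)$ makes sense. Everything else is a direct consequence of the duality $\Lambda(b_ib_j^*)=\delta_{ij}$.
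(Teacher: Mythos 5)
Your proof is correct and follows essentially the same route as the paper's: expand elements in the dual basis to get $Tr(h)=\Lambda(hJ)$, then read off the entries of $M_J$. The only difference is that you are somewhat more careful than the paper about tracking the transpose convention of Definition \ref{mtraces}, which is a harmless refinement.
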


\begin{proof}
Let $\Lambda\in \A^*$ be as in Definition \ref{dualbasis}. For any
$h\in \A$ we have that
\[
\begin{aligned}
&h=\sum_{j=1}^N \Lambda(hb_j)b_j^*=\sum_{j=1}^N
\Lambda(hb_j^*)b_j\notag\\
\Rightarrow\quad &h\,b_i=\sum_{j=1}^N \Lambda(h\,b_i\,b_j^*)b_j\Rightarrow M_h[j,i]=\Lambda(h\,b_{i}\,b_j^*)\notag\\
\Rightarrow\quad &Tr(h)=\sum_{i=1}^N
\Lambda(h\,b_{i}\,b_i^*)=\Lambda(h\sum_{i=1}^N b_{i}\, b_i^*).\notag
\end{aligned}
\]

Since $J=\sum_{i=1}^N b_i^*b_i$ in $\A$, we have
$Tr(h)=\Lambda(hJ)$.
Therefore
\[
M_J[j,i]=\Lambda(J\, b_{i}\,b_j^*)=Tr(b_{i}\,b_j^*) \]
\end{proof}

\begin{cor}
\[
M_J\cdot\mathfrak{M}_\Sbs({\bf y})=[Tr(b_ib_j)]_{i,j=1}^N=T,
\]
or equivalently $J\cdot \Lambda = Tr$ in $\A^{*}$.
\end{cor}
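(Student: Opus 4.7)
The plan is to read the corollary as two equivalent packagings of the single identity $Tr(h)=\Lambda(Jh)$, valid for every $h\in\A$, which is already established at the end of the proof of the previous proposition. Applied to all $h$, this identity is literally the statement $Tr = J\cdot\Lambda$ in $\A^{*}$, which is the second form of the corollary; no further work is needed for that half.

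For the matrix form $M_J\cdot\mathfrak{M}_\Sbs({\bf y})=T$, I would compare $(i,j)$-entries. The left-hand entry is
$$
T[i,j]\;=\;Tr(b_ib_j)\;=\;\Lambda(Jb_ib_j)
$$
by the identity just recalled. To manufacture $M_J$ on the right, expand $Jb_i$ in the basis $\Sbs$: by the definition of $M_J$ as the multiplication matrix of $J$, we have $Jb_i=\sum_{k=1}^N M_J[i,k]\,b_k$, so
$$
\Lambda(Jb_ib_j)\;=\;\sum_{k=1}^N M_J[i,k]\,\Lambda(b_kb_j).
$$
This is exactly the $(i,j)$-entry of $M_J\cdot\bigl[\Lambda(b_kb_j)\bigr]_{k,j}$, and it remains to identify the inner matrix with $\mathfrak{M}_\Sbs({\bf y})$.

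This last identification is the only point that requires attention. The monomial $b_kb_j$ has degree at most $2D\le\Delta$, so $y_{\alpha_k+\alpha_j}=\mathfrak{M}_\Sbs({\bf y})[k,j]$ is a well-defined coordinate of ${\bf y}$; but $\Lambda=\Lambda_{\bf y}$ was defined on $\A$ by first reducing modulo $\I$ and then summing $y_\alpha$ against the $\Sbs$-coefficients of the representative, rather than by reading off the entry of ${\bf y}$ at the unreduced exponent. I would bridge the two by invoking the extension $\tilde{\bf y}\in\CC^{\N^m}$ provided by the earlier remark, whose infinite moment matrix has all columns vanishing on $\I$: this property forces the unreduced formula $p\mapsto\sum_\alpha p_\alpha\tilde y_\alpha$ to descend to $\A$ and to agree with $\Lambda$ there, whence $\Lambda(b_kb_j)=\tilde y_{\alpha_k+\alpha_j}=y_{\alpha_k+\alpha_j}$. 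I do not foresee any genuine obstacle; once the functional identity $Tr = J\cdot\Lambda$ from the previous proposition is in hand, everything reduces to bookkeeping between the functional equality in $\A^{*}$ and its matrix realisation in the basis $\Sbs$.
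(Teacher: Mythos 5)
Your proof is correct, but it takes a different route from the paper's. The paper's argument stays with the dual basis: it starts from $M_J=[Tr(b_ib_j^*)]_{i,j}$ (the previous proposition), then replaces each $b_j^*$ by $\sum_k c_{kj}b_k$ using the definition $[c_{ij}]=\mathfrak{M}_\Sbs^{-1}(\yb)$, obtaining $M_J=T\,\mathfrak{M}_\Sbs^{-1}(\yb)$ and thus $M_J\,\mathfrak{M}_\Sbs(\yb)=T$ in one line. You instead bypass $b_j^*$ entirely: you apply the functional identity $Tr(h)=\Lambda(Jh)$ to $h=b_ib_j$, expand $Jb_i$ through the entries of $M_J$, and are then left with identifying $[\Lambda(b_kb_j)]_{k,j}$ with $\mathfrak{M}_\Sbs(\yb)$. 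That identification is genuine content (the paper tacitly uses it too, in the remark preceding Definition \ref{dualbasis} and in the assertion $\Lambda(b_ib_j^*)=\delta_{ij}$) and you are right to flag it; the route you propose through Remark \ref{rem1} works, though it can also be done directly from $\Mac\cdot\yb=0$: writing $b_kb_j=r+q$ with $r\in\operatorname{span}(\Sbs)$ and $q\in\langle\mathbf{f}\rangle_{\Delta+1}\cap\CC[\xb]_\Delta$, one has $\Lambda(b_kb_j)=\sum_\alpha r_\alpha y_\alpha=\sum_\alpha r_\alpha y_\alpha+\sum_\alpha q_\alpha y_\alpha=y_{\alpha_k+\alpha_j}$ since the coefficient vector of $q$ lies in the row span of $\Mac$. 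The trade-off is that your approach makes the role of the nullspace condition explicit, at the cost of an extra bookkeeping step, whereas the paper's approach is shorter because it pushes that bookkeeping into Definition \ref{dualbasis} and the preceding remark.
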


\begin{proof}
The coefficients of $b_i^*$ in the basis $\Sbs=[b_1,\ldots,b_N]$ are
the columns of $\mathfrak{M}_\Sbs^{-1}({\bf y})$, which implies that
\[
M_J=[Tr(b_ib_j^*)]_{i,j=1}^N=[Tr(b_ib_j)]_{i,j=1}^N\cdot\mathfrak{M}_\Sbs^{-1}({\bf
y}).
\]
Therefore we have that $M_J\cdot\mathfrak{M}_\Sbs({\bf
y})=[Tr(b_ib_j)]_{i,j=1}^N$.
\end{proof}

Finally, we prove that the matrix of traces $T$ can be computed
directly from the Macaulay matrix of $f_1, \ldots, f_s$ and $J$,
without using the multiplication matrix $M_J$. First we need a
lemma.

\begin{lem}\label{Rlemma}
There exists a unique matrix $\mathfrak{R}_\Sbs({\bf y})$ of size
$|{\rm Mon}_{\leq}(\Delta)-\Sbs| \times |\Sbs|$ such that

\[
\Mac\cdot\begin{tabular}{|c|} \hline
\\
$\mathfrak{M}_\Sbs({\bf y})$\\
\\
\hline
\\
$\mathfrak{R}_\Sbs({\bf y})$\\
\\
\hline
\end{tabular}=0
\]

\end{lem}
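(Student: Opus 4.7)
The plan is to interpret the block equation $\Mac \cdot X = 0$ --- where $X$ denotes the stacked matrix with $\mathfrak{M}_\Sbs(\yb)$ on top and $\mathfrak{R}_\Sbs(\yb)$ on the bottom --- column by column, and to show that prescribing the top $N$ coordinates of a null vector of $\Mac$ pins down that null vector uniquely. Setting $n_\Delta := |{\rm Mon}_{\leq}(\Delta)|$, the matrix $\Mac$ has $n_\Delta$ columns and corank $N$, so $\dim {\rm Null}(\Mac) = N$.

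The key step is to prove that the coordinate projection
\[
\pi : {\rm Null}(\Mac) \longrightarrow \CC^N
\]
onto the $N$ entries indexed by the exponents of the monomials in $\Sbs$ is a linear isomorphism. Injectivity carries the weight: if $w \in {\rm Null}(\Mac)$ satisfies $\pi(w) = 0$, then the linear functional $\ell_w : \xb^\gamma \mapsto w_\gamma$ on $\CC[\xb]_\Delta$ vanishes on the row span of $\Mac$, which by Definition \ref{bigdelta} equals $\langle f_1, \ldots, f_s\rangle_{\Delta+1} \cap \CC[\xb]_\Delta$, and it also vanishes on ${\rm span}(\Sbs)$. Since $\Mac$ has corank $N$ with its first $N$ columns corresponding to $\Sbs$, the set $\Sbs$ descends to a basis of the quotient $\CC[\xb]_\Delta/(\langle f_1, \ldots, f_s\rangle_{\Delta+1}\cap \CC[\xb]_\Delta)$, so the two subspaces in question together span $\CC[\xb]_\Delta$. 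Hence $\ell_w \equiv 0$ and $w = 0$. A dimension count between two $N$-dimensional spaces then gives surjectivity.

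With this isomorphism in hand, for each $j = 1, \ldots, N$ I apply $\pi^{-1}$ to the $j$-th column of $\mathfrak{M}_\Sbs(\yb) \in \CC^N$ to obtain its unique extension to an element of ${\rm Null}(\Mac)$. Assembling these $N$ vectors as the columns of a matrix $X$ of size $n_\Delta \times N$ produces a solution of $\Mac \cdot X = 0$ whose top $N$ rows are precisely $\mathfrak{M}_\Sbs(\yb)$; the remaining $n_\Delta - N$ rows then define $\mathfrak{R}_\Sbs(\yb)$, and uniqueness of each $\pi^{-1}$ is exactly the uniqueness claim in the lemma.

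The main point deserving care is verifying that the row span of $\Mac$ is all of $\langle f_1, \ldots, f_s\rangle_{\Delta+1}\cap \CC[\xb]_\Delta$ --- not just a proper subspace of it --- and that $\Sbs$ genuinely descends to a basis of the corresponding quotient. Both facts are already encoded in the construction of $\Mac$ in Definition \ref{bigdelta} and the corank-$N$ statement recorded immediately preceding Definition \ref{moment}, so the argument reduces to a careful unpacking of the setup; in particular, the standing invertibility assumption on $\mathfrak{M}_\Sbs(\yb)$ is not needed, and the conclusion holds for every $\yb \in {\rm Null}(\Mac)$.
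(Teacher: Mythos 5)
Your proof is correct and is essentially the paper's argument in a dual guise: the paper decomposes $\Mac = [\,\Bt \mid \At\,]$ with $\Bt$ the columns indexed by $\Sbs$, shows the square complementary block $\At$ is invertible (from exactly the two facts you invoke --- that the rows of $\Mac$ span $\langle f_1,\ldots,f_s\rangle_{\Delta+1}\cap\CC[\xb]_\Delta$ and that $\Sbs$ descends to a basis of the quotient), and then sets $\mathfrak{R}_\Sbs(\yb)=-\At^{-1}\Bt\,\mathfrak{M}_\Sbs(\yb)$. Invertibility of $\At$ is precisely injectivity of your coordinate projection $\pi$ on ${\rm Null}(\Mac)$, so the two proofs record the same linear-algebra fact in block-matrix versus operator language.
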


\begin{proof}
By our assumption that the first columns of ${\rm
Mac}_{\Delta}(\f)$ correspond to $\Sbs$ we have
\[
\begin{tabular}{|ccc|}
\hline
&&\\
&$\Mac$&\\
&&\\
\hline
\end{tabular}=\begin{tabular}{|ccc|ccc|}
 \hline
&&&&&\\
&$\Bt$&&&$\At$&\\
&&&&&\\
\hline
\end{tabular},
\]
where the columns of $B$ are indexed by the monomials in $\Sbs$. Note
here that  by Definition \ref{bigdelta} and Assumption \ref{assump} the rows of $\Mac$ span
$\I_{\Delta+1}\cap \CC[\xb]_\Delta$, and the monomials in $\Sbs$ span the factor space
$\CC[\xb]_\Delta/(\I_{\Delta+1}\cap \CC[\xb]_\Delta)$. These together imply that the (square)
submatrix $\At$ is invertible.

Then
\[\begin{tabular}{|ccc|ccc|}
 \hline
&&&&&\\
&$\Bt$&&&$\At$&\\
&&&&&\\
\hline
\end{tabular}\cdot
\begin{tabular}{|c|}
\hline
\\
$Id_{N\times N}$\\
\\
\hline
\\
$-\At^{-1}\Bt$\\
\\
\hline
\end{tabular}=0
\]

which implies that

\[
\Mac\cdot\begin{tabular}{|c|} \hline
\\
$\mathfrak{M}_\Sbs({\bf y})$\\
\\
\hline
\\
$\mathfrak{R}_\Sbs({\bf y})$\\
\\
\hline
\end{tabular}=0,
\]

where $\mathfrak{R}_\Sbs({\bf y})=-\At^{-1}\Bt\cdot\mathfrak{M}_\Sbs({\bf
y})$.
\end{proof}

By construction, the column of $\mathfrak{M}_\Sbs({\bf y})$ indexed by
$b_{j} \in \Sbs$ corresponds to the values of $b_{j}\cdot \Lambda \in
\A^{*}$ on $b_{1},\ldots,b_{N}$. The same column in
$\mathfrak{R}_\Sbs({\bf y})$ corresponds to the values of $b_{j}\cdot
\Lambda$ on the complementary set of monomials of
$\mathrm{Mon}_{\le}(\Delta)$. The column in the stacked matrix
corresponds to the value of $b_{j}\cdot \Lambda$ on all the
monomials in $\mathrm{Mon}_{\le}(\Delta)$.  To evaluate $b_{j}\cdot
\Lambda(p)$ for a polynomial $p$ of degree $\le \Delta$, we simply
compute the inner product of the coefficient vector of $p$ with this
column.

\begin{defn}\label{SylS}
Let $\Sbs=[b_1, \ldots, b_N]$ be the basis of $\A$ as above, and let
$P\in \CC[\xb]$ be a polynomial of degree at most $D$.

Define ${\rm Syl}_{\Sbs}(P)$ to be the matrix with rows corresponding
to the coefficients of the polynomials $(b_1P),\ldots,(b_NP)$ in the
monomial basis ${\rm Mon}_{\leq}(\Delta)$ (we use here that ${\rm
deg}(b_i)\leq D$, thus ${\rm deg}(b_iP)\leq 2D\leq\Delta$).

Furthermore,  we assume that the monomials corresponding to the
columns of ${\rm Syl}_{\Sbs}(P)$ are  in the same order as the
monomials  corresponding to  the columns of $\Mac$.
\end{defn}

\begin{thm}

\[
\begin{tabular}{|ccc|}
\hline
&&\\
&${\rm Syl}_{\Sbs}(J)$&\\
&&\\
\hline
\end{tabular}\cdot
\begin{tabular}{|c|}
\hline
\\
$\mathfrak{M}_\Sbs({\bf y})$\\
\\
\hline
\\
$\mathfrak{R}_\Sbs({\bf y})$\\
\\
\hline
\end{tabular}=[Tr(b_ib_j)]_{i,j=1}^N
\]

\end{thm}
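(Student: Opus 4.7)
The plan is to interpret the $(i,j)$ entry of the product as the evaluation of a linear functional on $\mathbb{K}[\mathbf{x}]_\Delta$.  Expanding in the monomial basis of $\mathrm{Mon}_\leq(\Delta)$, the $(i,j)$ entry equals
\[
\sum_{|\alpha|\leq\Delta} (b_iJ)_\alpha \cdot Y_{\alpha,j},
\]
where $(b_iJ)_\alpha$ denotes the coefficient of $\mathbf{x}^\alpha$ in $b_iJ$ and $Y$ denotes the stacked matrix from Lemma~\ref{Rlemma}.  Since $\deg(b_iJ)\leq 2D\leq\Delta$, this sum is well defined; setting $\mu_j(\mathbf{x}^\alpha):=Y_{\alpha,j}$ for $|\alpha|\leq\Delta$ gives a linear functional $\mu_j:\mathbb{K}[\mathbf{x}]_\Delta\to\mathbb{K}$ with the property that the $(i,j)$ entry is $\mu_j(b_iJ)$.

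Next I would identify $\mu_j$, after descent to $\A$, with $b_j\cdot\Lambda\in\A^*$.  By Lemma~\ref{Rlemma} the $j$-th column of $Y$ lies in $\mathrm{Null}(\Mac)$, so $\mu_j$ vanishes on the row space of $\Mac$.  This row space is contained in $\I\cap\mathbb{K}[\mathbf{x}]_\Delta$ and has codimension $N$ in $\mathbb{K}[\mathbf{x}]_\Delta$ (the corank statement following Definition~\ref{bigdelta}); since the surjection $\mathbb{K}[\mathbf{x}]_\Delta\twoheadrightarrow\A$ also has kernel of codimension $N$, the two subspaces coincide and $\mu_j$ descends to a functional $\bar\mu_j\in\A^*$.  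The same reasoning applied to $\mathbf{y}$ itself shows that $\mathbf{x}^\alpha\mapsto y_\alpha$ is a lift of $\Lambda$ from $\A$ to $\mathbb{K}[\mathbf{x}]_\Delta$, so evaluating on the single monomial $b_ib_j=\mathbf{x}^{\beta_i+\beta_j}$ (with $b_i=\mathbf{x}^{\beta_i}$) produces $\Lambda(b_ib_j)=y_{\beta_i+\beta_j}$.  Combined with $\bar\mu_j(b_i)=Y_{\beta_i,j}=[\mathfrak{M}_\Sbs(\mathbf{y})]_{i,j}=y_{\beta_i+\beta_j}$, this gives $\bar\mu_j(b_i)=\Lambda(b_ib_j)=(b_j\cdot\Lambda)(b_i)$ for every $b_i\in\Sbs$, hence $\bar\mu_j=b_j\cdot\Lambda$ on all of $\A$.

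Combining these observations and applying the Corollary preceding this theorem (whose proof established $Tr(h)=\Lambda(hJ)$ for every $h\in\A$),
\[
\mu_j(b_iJ)=(b_j\cdot\Lambda)(b_iJ)=\Lambda(b_ib_jJ)=Tr(b_ib_j),
\]
which is the desired $(i,j)$ entry.  The main obstacle I anticipate is the identification of the row space of $\Mac$ with $\I\cap\mathbb{K}[\mathbf{x}]_\Delta$, needed to push the polynomial-level functional $\mu_j$ down to a functional on $\A$; once this descent is in place the rest is a clean chain of equalities based on the already-proved identity $\Lambda(\,\cdot\, J)=Tr(\,\cdot\,)$.
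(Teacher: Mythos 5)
Your proof is correct and follows essentially the same approach as the paper: both interpret the $(i,j)$ entry of the product as the inner product of the coefficient vector of $b_iJ$ with the $j$-th column of the stacked matrix, and both identify that column as representing the values of $b_j\cdot\Lambda$ on monomials of degree at most $\Delta$, then invoke $Tr(h)=\Lambda(hJ)$. The paper asserts the column-as-$b_j\cdot\Lambda$ identification in a short paragraph before the theorem as being "by construction"; your argument --- showing $\mu_j$ vanishes on the row space of $\Mac$, identifying that row space with $\I\cap\mathbb{K}[\mathbf{x}]_\Delta$ via the corank count, descending to $\bar\mu_j\in\A^*$, and then matching $\bar\mu_j$ with $b_j\cdot\Lambda$ on the basis $\Sbs$ --- supplies exactly the verification of that claim, so your write-up is the same proof with the justification of that step made explicit.
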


\begin{proof}
Since the $j$-th column of the matrix
$$
\begin{tabular}{|c|}
\hline
\\
$\mathfrak{M}_\Sbs({\bf y})$\\
\\
\hline
\\
$\mathfrak{R}_\Sbs({\bf y})$\\
\\
\hline
\end{tabular}
$$
represents the values of $b_j\cdot \Lambda$ on all the monomials of
degree less than or equal to $\Delta$, and the $i$-th row of ${\rm
Syl}_{\Sbs}(J)$ is the coefficient vector of $b_i J$, we have
\[
\begin{aligned}
\begin{tabular}{|ccc|}
\hline
&&\\
&${\rm Syl}_{\Sbs}(J)$&\\
&&\\
\hline
\end{tabular}\cdot
\begin{tabular}{|c|}
\hline
\\
$\mathfrak{M}_\Sbs({\bf y})$\\
\\
\hline
\\
$\mathfrak{R}_\Sbs({\bf y})$\\
\\
\hline
\end{tabular}&=\left[(b_j \cdot \Lambda)(b_i J)\right]_{i,j=1}^N\notag\\
&=\left[\Lambda(J b_i b_j)\right]_{i,j=1}^N\notag\\
&=[Tr(b_ib_j)]_{i,j=1}^N.\notag
\end{aligned}
\]
\end{proof}

We can now describe the algorithm to compute a set of multiplication
matrices $M_{x_i}$, $i=1,\ldots,m$ of the radical $\sqrt{\I}$ of
$\I$ with respect to a  basis of $\CC[\xb]/\sqrt{\I}$. To prove that
the algorithm below is correct we need the following result from
\cite[Proposition 8.3]{JaRoSza07}
which is the consequence of the  fact that the kernel of the matrix
of traces corresponds to the radical of  $\A$:

\begin{prop}
Let $\tilde{T}$ be a maximal non-singular submatrix of the matrix of
traces $T$. Let $r$ be the rank of $\tilde{T}$, and $\tilde{\Sbs}:=[b_{i_1},
\ldots, b_{i_r}]$ be the monomials corresponding to the columns of
$\tilde{T}$. Then $\,\tilde{\Sbs}$ is a basis of the algebra $\CC[{\bf
x}]/\sqrt{\I} $ and for each $k=1, \ldots, m$, the solution
$M_{x_k}$ of  the linear matrix equation
$$\tilde{T}M_{x_k}=\tilde{T}_{x_k}$$
is the  multiplication matrix of $x_k$  for $\sqrt{\I} $ with
respect to  $\tilde{\Sbs}$. Here $\tilde{T}_{x_k}$ is the $r\times r$ submatrix
of $[Tr(x_kb_ib_j)]_{i,j=1}^N$ with the same row and column indices
as in $\tilde{T}$.
\end{prop}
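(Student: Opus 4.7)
The plan is to reduce the statement to standard linear algebra on the reduced algebra $V:=\mathbb{K}[\mathbf{x}]/\sqrt{\I}$. The key input is the classical fact that the radical of the bilinear form $B(u,v):=Tr(uv)$ on $\A$ equals the nilradical $\sqrt{\I}/\I$, which requires that the characteristic of $\mathbb{K}$ be either $0$ or coprime to every root multiplicity (the hypothesis implicit throughout the trace-based parts of the paper). Granting this, $B$ descends to a nondegenerate symmetric bilinear form $\bar B$ on $V$ whose Gram matrix in the spanning family $[\bar b_1,\ldots,\bar b_N]$ is still $T$; in particular $\mathrm{rank}(T)=\dim_{\mathbb K}V=r$, the number of distinct roots of $\I$.

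For the basis claim, let $\tilde T$ be a maximal nonsingular $r\times r$ submatrix of $T$, with rows indexed by $j_1,\ldots,j_r$ and columns by $i_1,\ldots,i_r$. If $\sum_s\alpha_s\bar b_{i_s}=0$ in $V$, then evaluating $\bar B(\bar b_{j_l},\,\cdot\,)$ on this relation for each $l$ yields a linear dependence among the columns of $\tilde T$, forcing all $\alpha_s=0$. Hence $\tilde{\Sbs}=[\bar b_{i_1},\ldots,\bar b_{i_r}]$ is linearly independent in $V$, and since it has size $r=\dim_{\mathbb K}V$, it is a basis.

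For the multiplication matrix, let $M$ be the matrix such that $x_k\,\bar b_{i_s}=\sum_u M[u,s]\,\bar b_{i_u}$ in $V$, i.e.\ the standard operator matrix of multiplication by $x_k$ in the basis $\tilde{\Sbs}$. A direct expansion of the trace gives
\[
\tilde T_{x_k}[l,s]=Tr(x_k b_{j_l}b_{i_s})=\bar B(\bar b_{j_l},\,x_k\bar b_{i_s})=\sum_u M[u,s]\,\bar B(\bar b_{j_l},\bar b_{i_u})=(\tilde T M)[l,s],
\]
so $\tilde T M=\tilde T_{x_k}$. Invertibility of $\tilde T$ makes $M$ the unique solution of this system, and by the symmetry of both $\tilde T$ and $\tilde T_{x_k}$ this is equivalent to the equation $\tilde T M_{x_k}=\tilde T_{x_k}$ of the statement, modulo the transpose convention built into Definition~\ref{mtraces}.

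The only non-trivial ingredient is the identification of the radical of the trace form with $\sqrt{\I}/\I$. I would either cite this from a standard reference such as \cite{ElMo07}, or verify it from scratch via the primary decomposition $\A\cong\prod_{j=1}^r\A_{\zeta_j}$ into local Artinian algebras at the distinct roots, together with the classical formula $Tr(a)=\sum_j m_j\,a(\zeta_j)$ where $m_j=\dim_{\mathbb K}\A_{\zeta_j}$; after that, both claims reduce to a one-line Gram-matrix computation.
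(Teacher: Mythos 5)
Your proof is correct and fills in the argument that the paper only sketches: the paper cites Proposition~8.3 of \cite{JaRoSza07} and remarks that everything rests on the kernel of the trace form being the radical, but supplies no details. You give the full linear-algebraic derivation on $V:=\mathbb{K}[\xb]/\sqrt{\I}$, and both halves — the spanning/independence argument showing $\tilde{\Sbs}$ is a basis of $V$, and the Gram-matrix computation yielding $\tilde{T}M=\tilde{T}_{x_k}$ — are clean and correct. Two observations are worth highlighting. First, you are right to flag the characteristic hypothesis: since $Tr(a)=\sum_\zeta m_\zeta\, a(\zeta)$, in characteristic $p$ any root with $p\mid m_\zeta$ drops out of the trace, so the radical of the trace form can be strictly larger than $\sqrt{\I}/\I$; the paper silently assumes this does not happen (characteristic zero, or coprimality of $p$ with all local multiplicities), and naming the assumption is the honest thing to do. Second, your remark about ``the transpose convention'' actually pinpoints a small notational inconsistency in the paper rather than a gap in your argument: the unique solution of $\tilde{T}M_{x_k}=\tilde{T}_{x_k}$ is the matrix of the operator $g\mapsto x_k g$ in basis $\tilde{\Sbs}$ in the standard (columns-give-images) convention, whereas Definition~\ref{mtraces} defines the multiplication matrix to be its transpose; since $\tilde{T}$ is symmetric these two matrices are $\tilde{T}$-similar but not equal, so the phrase ``the multiplication matrix'' in the proposition should be read against the convention of the cited source rather than against Definition~\ref{mtraces} as stated.
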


\noindent\begin{alg}[Radical ideal using Macaulay matrices and traces]\ \\\label{algo}
\noindent\textsc{Input}: $\f=[f_1,\ldots,f_s]\in\CC[{\bf x}]$ of
degrees $d_1,\ldots,d_s$ generating an ideal $\I$ and $\delta>0$
such
that for $k:=\delta-1$ they satisfy the conditions in Assumption  \ref{assump}. An optional input is $D\leq\delta$, which by default is set to be $\delta$.\\

\noindent\textsc{Output}: A basis $\tilde{\Sbs}$ for the factor algebra
$\CC[{\bf x}]/\sqrt{\I}$ and a set of multiplication matrices
$\{M_{x_i}|i=1,\ldots,m\}$ of $\sqrt{\I}$  with respect to the basis
$\tilde{\Sbs}$.\\

\begin{enumerate}

\item Compute ${\rm Mac}_{\Delta}(\f)$ for $\Delta:=\max(2D, \delta-1)$ as in Definition \ref{bigdelta}.

\item Compute a basis $\Sbs$ of $\mathbb{K}[{\bf x}]_{\Delta}/(\langle {\bf f}\rangle_{\Delta+1}\cap\mathbb{K}[{\bf x}]_{\Delta})$ such that the polynomials in $\Sbs$ have degrees at most $D$. Let $\Sbs=[b_1,\ldots,b_N]$.

\item Compute a random combination ${\bf y}$
of the elements of a basis of $Null(\Mac)$.

\item Compute the moment matrix $\mathfrak{M}_\Sbs({\bf y})$ defined in Definition \ref{moment} and
$\mathfrak{R}_\Sbs({\bf y})$ defined in Lemma \ref{Rlemma}.

\item Compute $\mathfrak{M}_\Sbs^{-1}({\bf y})$ and the
basis $[b_1^*,\ldots,b_N^*]$ defined in Definition \ref{dualbasis}.

\item Compute $J=\sum_{i=1}^N b_ib_i^*\;\text{ mod }  \I$ using $\Mac$.

\item Compute ${\rm Syl}_{\Sbs}(J)$ and ${\rm Syl}_{\Sbs}(x_kJ)$ for $k=1,\ldots,m$ defined in Definition \ref{SylS}.

\item Compute \\
$T=[Tr(b_ib_j)]_{i,j=1}^N=$ \begin{tabular}{|ccc|}
\hline
&&\\
&${\rm Syl}_{\Sbs}(J)$&\\
&&\\
\hline
\end{tabular} $\cdot$ \begin{tabular}{|c|}
\hline
\\
$\mathfrak{M}_\Sbs({\bf y})$\\
\\
\hline
\\
$\mathfrak{R}_\Sbs({\bf y})$\\
\\
\hline
\end{tabular}\\

and\\
$T_{x_k}$:=$[Tr(x_kb_ib_j)]_{i,j=1}^N=$ \begin{tabular}{|ccc|}
\hline
&&\\
&${\rm Syl}_{\Sbs}(x_kJ)$&\\
&&\\
\hline
\end{tabular} $\cdot$ \begin{tabular}{|c|}
\hline
\\
$\mathfrak{M}_\Sbs({\bf y})$\\
\\
\hline
\\
$\mathfrak{R}_\Sbs({\bf y})$\\
\\
\hline
\end{tabular}\quad for $k=1,\ldots,m$.

\item Compute $\tilde{T}$,  a maximal non-singular submatrix of
$T$. Let $r$ be the rank of $\tilde{T}$, and $\tilde{\Sbs}:=[b_{i_1}, \ldots,
b_{i_r}]$ be the monomials corresponding to the columns of
$\tilde{T}$.

\item For each $k=1, \ldots, m$ solve the linear matrix equation $\tilde{T}M_{x_k}=\tilde{T}_{x_k}$, where
$\tilde{T}_{x_k}$ is the submatrix of ${T}_{x_k}$ with the same row
and column indices as in $\tilde{T}$.

\end{enumerate}
\end{alg}

\begin{rem}  Since the bound given in Theorem \ref{deltatheorem} might be
   too high, it seems reasonable to design the algorithm in
   an iterative fashion, similarly to the algorithms in \cite{LaLaRo07,LaLaRo072,ReZh04}, in order to
   avoid nullspace computations for large matrices.  The bottleneck of our algorithm is doing computations with ${\rm Mac}_{\Delta}(\f)$, since its size exponentially increases as $\Delta$ increases.
\end{rem}

\begin{rem}
Note that if $s=m$ then we can use the conventional Jacobian of
$f_1, \ldots, f_m$ in the place of $J$, and any $|{\rm Mon}_\leq
(\Delta)|\times |\Sbs|$ matrix $X$ such that it has full rank and $\Mac
\cdot X={\bf 0}$ in the place of

\[\begin{tabular}{|c|} \hline
\\
$\mathfrak{M}_\Sbs({\bf y})$\\
\\
\hline
\\
$\mathfrak{R}_\Sbs({\bf y})$\\
\\
\hline\end{tabular}\,.\]
Even though this way we will not get
matrices of traces,  a system of multiplication matrices of the
radical $\sqrt{\I}$ can still be recovered:
 if $\tilde{Q}$ denotes a maximal non-singular submatrix of ${\rm Syl}_\Sbs(J)\cdot X$, and
$\tilde{Q}_{x_k}$ is the submatrix of ${\rm Syl}_\Sbs(x_k J)\cdot X$
with the same row and column indices as in $\tilde{Q}$, then the
solution $M_{x_k}$ of the linear matrix equation
$\tilde{Q}M_{x_k}=\tilde{Q}_{x_k}$ gives the same multiplication
matrix of $\sqrt{\I}$ w.r.t. the same basis $\tilde{\Sbs}$ as the above
Algorithm.
\end{rem}

\begin{rem}
As $M_{x_{k}}$ is the transpose matrix of multiplication by $x_{k}$ modulo the
radical ideal $\sqrt{\I}$, its eigenvectors are (up to a non-zero scalar) the evaluation at the roots $\zeta$
of $\I$ (see \cite{BMr98,ElMo07} for more details). The vector which 
represents this evaluation at $\zeta$ in the dual space $\A^{*}$ is
the vector of values of $[b_{1}, \ldots, b_{N}]$ at $\zeta$. To
obtain these vectors, we solve the generalized eigenvalue problem
$(\tilde{T}_{x_{k}}-z \tilde{T})\, w =0$ and compute $v =
\tilde{T}\, w$. The vectors $v$ will be of the form
$[b_{1}(\zeta),\ldots, b_{N}(\zeta)]$ for $\zeta$ a root of $\I$. If
$b_{1}=1,b_{2}=x_{1}, \ldots, b_{m+1}=x_{m}$, we can read directly
the coordinates of $\zeta$ from this vector.
\end{rem}

\subsection{The Non-Gorenstein Case}

We will now consider the case where $\A$ is not Gorenstein. The main idea of the algorithm is the same as in the Gorenstein case, except we will obtain as an output a matrix of traces with respect to an algebra $\BD$ which is a maximal Gorenstein factor of $\A$. This will still allow us to compute the multiplication matrices of the radical of $\I$ since the maximal non-singular submatrix of the trace matrix corresponding to $\BD$ is the same as that of the trace matrix of $\A$. First we will need some results to define a maximal Gorenstein factor $\BD$ of $\A$ from a random element of the nullspace of $\Mac$. 

Let $\CC$ be an arbitrary algebraically closed field. All algebras
we consider will be finite dimensional commutative $\CC$-algebras. A
local $\CC$-algebra here is an $\CC$-algebra $\BD$, with unique
maximal ideal (which we denote by $\M$) such that $\BD/\M$ is
isomorphic to $\CC$. Note that due to the fact that $\CC$ is algebraically
closed, no other residue class field is possible.

\begin{defn}
Fix $\yb\in {\rm Null}(\Mac)$ and let $\Lambda:=\Lambda_\yb\in \A^*$ defined as in Definition \ref{moment}. 
We define
$$\Rc(\Lambda):=\{ a\in A, \Lambda(ab)=0 \text{ holds for all } b\in A\}.$$

Note that $\Rc(\Lambda)=0$ iff $\Lambda(xy)$ is a nondegenerate
bilinear form on $\A$. Also, an easy calculation shows that $\Rc(\Lambda)$ is an ideal in $\A$.

Define $$\BD:=\A/\Rc(\Lambda).$$
\end{defn}

First we need the following technical lemmas.

\begin{lem} \label{ideal}$\Rc(\Lambda)$ is an ideal of $\A$, in fact, it is the largest ideal of $\A$ which
is in $\ker(\Lambda)$.

Here $\ker(\Lambda)$ is the set of elements $a\in \A$, such that
$\Lambda(a)=0$.
\end{lem}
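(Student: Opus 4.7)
The plan is to verify the three assertions in order: (i) $\Rc(\Lambda)$ is closed under addition and under multiplication by elements of $\A$; (ii) $\Rc(\Lambda)$ sits inside $\ker(\Lambda)$; (iii) any ideal of $\A$ contained in $\ker(\Lambda)$ must sit inside $\Rc(\Lambda)$. None of the three should present any real difficulty — the whole lemma is essentially a tautology once one unwinds the definition of $\Rc(\Lambda)$ — so the proof will be short.

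For (i), closure under addition and under scalar multiplication are immediate from the $\CC$-linearity of $\Lambda$. For the ideal property, I would take $a \in \Rc(\Lambda)$ and $c \in \A$, and check that for every $b \in \A$ one has $\Lambda((ca)b) = \Lambda(a(cb)) = 0$, since $cb \in \A$ and $a \in \Rc(\Lambda)$; this shows $ca \in \Rc(\Lambda)$.

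For (ii), I would evaluate at $b = 1$: if $a \in \Rc(\Lambda)$, then $\Lambda(a) = \Lambda(a \cdot 1) = 0$, so $a \in \ker(\Lambda)$.

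For (iii), I would let $\I' \subseteq \A$ be an arbitrary ideal of $\A$ with $\I' \subseteq \ker(\Lambda)$, take any $a \in \I'$, and observe that for every $b \in \A$ the element $ab$ again lies in $\I'$ by the ideal property, hence $\Lambda(ab) = 0$. This forces $a \in \Rc(\Lambda)$, so $\I' \subseteq \Rc(\Lambda)$, which gives maximality. The only ``step'' worth flagging is really just the bookkeeping in (i), and even that is a single application of associativity together with the definition; there is no substantive obstacle.
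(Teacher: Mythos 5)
Your proof is correct and follows the same route as the paper's (the paper merely states that the verification is "an easy calculation" and that maximality follows from the same observation you make in step (iii)). All three steps are exactly the unwinding the paper intends, so there is nothing further to add.
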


\begin{proof} An easy calculation shows that $\Rc(\Lambda)$ is an ideal. Clearly it is
in $\ker(\Lambda)$. Conversely, if $\J$ is an ideal of $\A$ which is
in $\ker(\Lambda)$, then $\J$ is in $\Rc(\Lambda)$. Note that the sum
of ideals is an ideal again, hence there exists a unique largest
ideal of $\A$ which is  a subset of $\ker(\Lambda)$.\end{proof}

\begin{lem} (Structure Theorem on Artininan Rings, specialized to our setting,
see Atiyah-MacDonald) A $\CC$-algebra $\BD$ is the direct product of
finitely many local $\CC$-algebras.
\end{lem}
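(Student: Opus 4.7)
The plan is to derive the statement from the classical structure theory of Artinian rings. Since $\BD$ is a finite dimensional commutative $\KK$-algebra, every descending chain of ideals terminates in finitely many steps, so $\BD$ is Artinian. I would then proceed in three steps.

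First I would show that $\BD$ has only finitely many maximal ideals. Suppose $\M_1, \M_2, \ldots$ were pairwise distinct maximal ideals; the descending chain $\M_1 \supseteq \M_1 \cap \M_2 \supseteq \M_1 \cap \M_2 \cap \M_3 \supseteq \cdots$ is strictly decreasing (any inclusion $\M_1 \cap \cdots \cap \M_k \subseteq \M_{k+1}$ would force some $\M_i \subseteq \M_{k+1}$, contradicting maximality), which violates the Artinian hypothesis. So only finitely many maximal ideals $\M_1, \ldots, \M_r$ exist.

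Second, I would use the fact that in an Artinian ring the Jacobson radical $\mathfrak{J} = \M_1 \cap \cdots \cap \M_r$ coincides with the nilradical and is nilpotent, so $\mathfrak{J}^n = 0$ for some $n \geq 1$. Since the $\M_i$ are pairwise comaximal, so are the powers $\M_i^n$. The Chinese Remainder Theorem, applied to $\M_1^n, \ldots, \M_r^n$ (whose intersection contains $(\M_1 \cdots \M_r)^n \subseteq \mathfrak{J}^n = 0$), then yields
\[
\BD \;\cong\; \BD/\M_1^n \times \cdots \times \BD/\M_r^n.
\]
Each factor $\BD/\M_i^n$ is local with maximal ideal $\M_i/\M_i^n$, and since $\KK$ is algebraically closed, its residue field $\BD/\M_i$ (a finite field extension of $\KK$) equals $\KK$, matching the definition of local $\KK$-algebra given above.

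The only step requiring real care is the comaximality of the powers $\M_i^n$ and $\M_j^n$ for $i \neq j$, which I would handle by noting that $\M_i + \M_j = \BD$ implies $\M_i^n + \M_j^n = \BD$ by binomial expansion of $1 = (a+b)^{2n-1}$ with $a \in \M_i$, $b \in \M_j$. Everything else is bookkeeping, and in fact the statement is standard enough that citing Atiyah--Macdonald (as done here) is sufficient.
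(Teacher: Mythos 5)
Your proof is the standard Atiyah--Macdonald argument (their Theorem~8.7), which is exactly what the paper cites instead of giving a proof, so the approach matches. One small slip in the CRT step: you need the intersection $\M_1^n \cap \cdots \cap \M_r^n$ to be \emph{zero}, but you only remark that it \emph{contains} $(\M_1\cdots\M_r)^n \subseteq \mathfrak{J}^n = 0$, which is vacuously true of any ideal. The chain you actually want is that the pairwise comaximality of the $\M_i^n$ (which you do establish) gives
\[
\M_1^n \cap \cdots \cap \M_r^n \;=\; \M_1^n\cdots\M_r^n \;=\; (\M_1\cdots\M_r)^n \;\subseteq\; \mathfrak{J}^n \;=\; 0,
\]
after which CRT applies. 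With that single word corrected the argument is complete: the Artinian property from finite $\KK$-dimension, finiteness of the set of maximal ideals via a prime ideal containing a product containing one of the factors, nilpotence of the Jacobson radical, and the residue field being $\KK$ by algebraic closure are all handled correctly.
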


\begin{lem} \label{fact} Suppose that the $\CC$-algebra $\BD$ is the direct product of the
local $\CC$-algebras $\BD_i, (i=1,\ldots ,k)$. Then $\BD$ is
Gorenstein iff all the $\BD_i$ are Gorenstein.
\end{lem}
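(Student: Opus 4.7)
The plan is to use the equivalent characterization of Gorenstein given after Definition \ref{gorenstein}: a finite dimensional $\CC$-algebra $\BD$ is Gorenstein iff there is a linear functional $\Lambda: \BD \to \CC$ such that the bilinear form $B(a,b) := \Lambda(ab)$ is nondegenerate. So the task reduces to building such a $\Lambda$ on $\BD$ from ones on the $\BD_i$, and conversely.

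First I would set up the structure of the decomposition $\BD \cong \BD_1 \times \cdots \times \BD_k$ using the orthogonal idempotents $e_1, \ldots, e_k$ with $e_i e_j = \delta_{ij} e_i$ and $1 = e_1 + \cdots + e_k$, so that $\BD_i = e_i \BD$ is simultaneously a subalgebra (with unit $e_i$) and an ideal, and every $a \in \BD$ decomposes uniquely as $a = \sum_i a_i$ with $a_i = e_i a \in \BD_i$. The key observation is that for $a \in \BD_i$ and $b \in \BD_j$ with $i \neq j$ we have $ab = 0$.

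For the ``if'' direction, suppose each $\BD_i$ is Gorenstein with witness $\Lambda_i : \BD_i \to \CC$. Define $\Lambda : \BD \to \CC$ by $\Lambda(a) := \sum_{i=1}^k \Lambda_i(e_i a)$. To check nondegeneracy, suppose $a \in \BD$ satisfies $\Lambda(ab) = 0$ for every $b \in \BD$. Fix $i$ and let $b$ range over $\BD_i$; since $e_j b = 0$ for $j \neq i$, one gets $\Lambda(ab) = \Lambda_i(e_i a \cdot b) = 0$ for every $b \in \BD_i$. Nondegeneracy of $\Lambda_i$ on $\BD_i$ then forces $e_i a = 0$, and taking $i = 1, \ldots, k$ gives $a = \sum_i e_i a = 0$.

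For the ``only if'' direction, suppose $\BD$ is Gorenstein with witness $\Lambda$. For each $i$, define $\Lambda_i : \BD_i \to \CC$ as the restriction of $\Lambda$ to $\BD_i$. Suppose $a \in \BD_i$ satisfies $\Lambda_i(ab) = 0$ for every $b \in \BD_i$. For an arbitrary $c \in \BD$ write $c = \sum_j c_j$; since $a \in \BD_i$ kills $c_j$ for $j \neq i$, we have $ac = ac_i \in \BD_i$, so $\Lambda(ac) = \Lambda_i(ac_i) = 0$. Nondegeneracy of $\Lambda$ on $\BD$ forces $a = 0$, proving $\Lambda_i$ nondegenerate on $\BD_i$.

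There is no real obstacle here; the only thing to be careful about is that one must use the characterization via a linear functional (rather than the bilinear form definition directly), because the bilinear form naturally splits as an orthogonal sum across the idempotent decomposition, and both directions follow from this orthogonality together with the vanishing $\BD_i \cdot \BD_j = 0$ for $i \neq j$.
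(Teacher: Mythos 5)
Your proof is correct and takes essentially the same route as the paper: both use the linear-functional characterization of Gorenstein, the direct-sum decomposition of $\BD$ into the ideals $\BD_i$, and the vanishing $\BD_i\cdot\BD_j=0$ for $i\neq j$. Your use of explicit idempotents $e_i$ is a mild notational elaboration of the paper's $a=\sum a_i$ decomposition, and your nondegeneracy checks spell out what the paper leaves to the reader; there is no substantive difference.
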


\begin{proof} Note that the $\BD_i$ can be viewed as ideals of $\BD$, moreover $\BD$ is the
direct sum of these (as $\CC$-subspaces).

Let $\Lambda$ be a linear function on $\BD$ such that the form
$\Lambda(xy)$ is nondegenerate. Then the resctriction $\Lambda_i$ of
$\Lambda$ on $\BD_i$ will define a nondegenerate bilinear form on
$\BD_i$. Let a be a nonzero element of $\BD_i$. Then there exists a
$b\in \BD$ such that $\Lambda(ab)\neq 0$. Write now $b$ as $b=\sum
b_j$,  with $b_j\in \BD_j$ for $(j=1,\ldots ,k)$. Note that we have
$ab_j=0$ if $j\neq  i$, hence
\[
0\not=\Lambda(ab)=\Lambda(ab_i)=\Lambda_i(ab_i),
\]
hence $\Lambda_i(xy)$ is nondegenerate.

Conversely, assume that we have linear functions
$\Lambda_i:\BD_i\rightarrow \CC $ such that the form $\Lambda_i(xy)$
is nondegenerate on $\BD_i$. We define $\Lambda$ as follows. Let
$a\in \BD$ with $a=a_1+\cdots +a_k$, where $a_i\in \BD_i$. Note that
$a$ uniquely determines the $a_i$, and the map $a\mapsto a_i$ is an
$\CC$-algebra morphism from $\BD$ to $\BD_i$. This implies that
$\Lambda(a):=\Lambda_1(a_1)+\cdots +\Lambda_k(a_k)$ is a correct
definition and $\Lambda$ is a linear function on $\BD$. Moreover, it
is easily seen that $\Lambda(xy)$ is nondegenerate.
\end{proof}

\begin{lem}\label{fact2} $\Lambda$ induces a linear function $\Lambda'$ on the factor $\BD=\A/\Rc(\Lambda)$. For the function $\Lambda'$ on $\BD$ we have that $\Rc( \Lambda')=0$, hence $\BD$ is Gorenstein.
\end{lem}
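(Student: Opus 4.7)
The plan is to prove the three assertions in sequence, each of which is short once the definitions are unpacked.

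First, to show that $\Lambda$ induces a well-defined linear function $\Lambda'$ on $\BD = \A/\Rc(\Lambda)$, I would invoke Lemma \ref{ideal}: $\Rc(\Lambda)$ is an ideal contained in $\ker(\Lambda)$ (this containment follows by taking $b=1$ in the defining condition). Hence $\Lambda$ is constant on cosets modulo $\Rc(\Lambda)$, and the rule $\Lambda'(a+\Rc(\Lambda)):=\Lambda(a)$ is a well-defined $\CC$-linear map $\BD\to\CC$.

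Next, to prove $\Rc(\Lambda')=0$, I would argue directly from the definition. Suppose $\bar a=a+\Rc(\Lambda)\in\Rc(\Lambda')$, i.e. $\Lambda'(\bar a\bar b)=0$ for every $\bar b\in\BD$. Lifting this to $\A$ gives $\Lambda(ab)=0$ for every $b\in\A$, so $a\in\Rc(\Lambda)$ by definition, i.e.\ $\bar a=0$ in $\BD$. Thus $\Rc(\Lambda')=0$.

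Finally, to conclude that $\BD$ is Gorenstein, I would observe that by the remark immediately following the definition of $\Rc$, the vanishing $\Rc(\Lambda')=0$ is equivalent to the bilinear form $(x,y)\mapsto \Lambda'(xy)$ being nondegenerate on $\BD$. This nondegenerate trace-style form is exactly the one required in Definition \ref{gorenstein} (the associativity $\Lambda'((xy)z)=\Lambda'(x(yz))$ is automatic from commutativity and associativity of the product in $\BD$), so $\BD$ is Gorenstein.

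There is no serious obstacle here; the only point requiring the slightest care is verifying that $\Rc(\Lambda)\subseteq\ker(\Lambda)$, which is needed to make the descent well defined but is an immediate consequence of substituting $b=1$ in the definition of $\Rc(\Lambda)$. Everything else is a direct unfolding of the definitions of $\Rc$ and the quotient map.
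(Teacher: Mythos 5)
Your proof is correct and follows essentially the same route as the paper's: define $\Lambda'$ on cosets, observe that $\Rc(\Lambda)\subseteq\ker(\Lambda)$ makes this well defined, and then unwind the definition of $\Rc(\Lambda')$ to see it vanishes. You spell out a couple of steps the paper calls ``routine,'' but the argument is the same.
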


\begin{proof} We set $\Lambda'(a+\Rc(\Lambda))=\Lambda(a)$. It is routine to check that $\Lambda'$ is  a
correctly defined linear function on $\BD$. Suppose that
$a+\Rc(\Lambda)$ is in $\Rc(\Lambda')$. Then $\Lambda(ac)=0$ holds for
every $c\in \A$, hence $a\in \Rc( \Lambda)$ and therefore
$a+\Rc(\Lambda)=0$ in $\BD$.\end{proof}

\begin{lem} Every Gorenstein factor $\BD$ of an $\CC$-algebra $\A$ can be obtained via
a linear function $\Lambda$ on $\A$, as outlined by Lemma
\ref{fact2}.
\end{lem}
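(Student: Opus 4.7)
The plan is to reverse-engineer the construction in Lemma \ref{fact2}. Suppose $\BD$ is a Gorenstein factor of $\A$, so $\BD = \A/\I$ for some ideal $\I \subseteq \A$, and $\BD$ is Gorenstein. By definition of Gorenstein (and the equivalent characterization recalled after Definition \ref{gorenstein}), there exists a $\CC$-linear function $\Lambda' : \BD \to \CC$ such that the bilinear form $(x,y) \mapsto \Lambda'(xy)$ is nondegenerate on $\BD$, i.e.\ $\Rc(\Lambda') = 0$.

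First I would pull $\Lambda'$ back to $\A$ via the canonical projection $\pi : \A \to \A/\I = \BD$, setting $\Lambda := \Lambda' \circ \pi$. This is manifestly a $\CC$-linear function on $\A$. The goal is then to show $\Rc(\Lambda) = \I$, so that $\BD = \A/\I = \A/\Rc(\Lambda)$, matching the construction of Lemma \ref{fact2}.

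The inclusion $\I \subseteq \Rc(\Lambda)$ is immediate: for $a \in \I$ and any $b \in \A$, $\pi(ab) = \pi(a)\pi(b) = 0$, so $\Lambda(ab) = \Lambda'(0) = 0$. For the reverse inclusion, suppose $a \in \Rc(\Lambda)$; then $\Lambda'(\pi(a)\pi(b)) = 0$ for every $b \in \A$. Since $\pi$ is surjective, $\pi(b)$ ranges over all of $\BD$, so $\pi(a) \in \Rc(\Lambda') = 0$, forcing $a \in \ker \pi = \I$. Hence $\Rc(\Lambda) = \I$ and $\BD = \A/\Rc(\Lambda)$ arises from $\Lambda$ exactly as in Lemma \ref{fact2}.

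I do not expect any real obstacle here; the whole statement is essentially the observation that pulling back a nondegenerate pairing along a quotient map gives a linear form whose radical is precisely the kernel of the quotient. The one place to be slightly careful is using the surjectivity of $\pi$ when translating the condition ``$\Lambda(ab)=0$ for all $b\in\A$'' into ``$\Lambda'(\pi(a)y)=0$ for all $y\in\BD$''.
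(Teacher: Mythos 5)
Your proof is correct and follows essentially the same route as the paper: pull back a nondegenerate linear form $\Lambda'$ on the Gorenstein factor through the quotient projection, then verify by both inclusions that $\Rc(\Lambda)$ is exactly the defining ideal. The only cosmetic difference is that the paper writes the pullback as $\Lambda(a):=\Lambda'(a+\J)$ rather than $\Lambda'\circ\pi$; the argument is identical.
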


\begin{proof} Let $\BD=\A/\J$ be a Gorenstein factor of $\A$. Let $\Lambda'$ be a linear function
on $\BD$ with $\Rc(\Lambda')=0$. We define $\Lambda$ on $\A$ as
follows. Let $\Lambda(a):=\Lambda'(a+\J)$ for $a\in \A$. Clearly
$\Lambda$ will be a linear function on $\A$. Let $a\in \A$ such that
$\Lambda(ab)=0$ for every $b\in \A$. Then
$0=\Lambda(ab)=\Lambda'(ab+\J)=\Lambda'((a+\J)(b+\J))$, giving that
$a+\J\in \Rc(\Lambda')$, hence $a\in \J$. This shows that
$\Rc(\Lambda)\subseteq \J$. The reverse containment is immediate,
therefore $\Rc(\Lambda)=\J$. Now one can directly check that
$\Lambda'$ is obtained from $\Lambda$ via the factor construction of
Lemma \ref{fact2}.
\end{proof}

\begin{lem} \label{local} Let $\A$ be an $\CC$-algebra. Then the maximal dimensional Gorenstein
factors of $\A$ are obtained as follows: from every direct factor
$\A_i$ (this is local) factor out an ideal $\J_i$ so that
$\A_i/\J_i$ is a maximal Gorenstein factor of $\A_i$.
\end{lem}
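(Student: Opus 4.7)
My plan is to reduce the problem, via the product decomposition, to a coordinate-wise statement about the local factors, and then invoke Lemma~\ref{fact}.

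First I would recall that by the preceding structure theorem, $\A=\A_{1}\times\cdots\times\A_{k}$ with each $\A_{i}$ local, and this decomposition is realized by a complete system of orthogonal idempotents $e_{1},\ldots,e_{k}\in\A$ with $e_{i}\A=\A_{i}$. The first key step is then to show that \emph{every} ideal $\J$ of $\A$ splits as $\J=\J_{1}\times\cdots\times\J_{k}$ with $\J_{i}$ an ideal of $\A_{i}$. This is immediate: $\J$ is closed under multiplication by the $e_{i}$, so $\J=\bigoplus_{i}e_{i}\J$, and $\J_{i}:=e_{i}\J$ is an ideal of $\A_{i}$. In particular every factor algebra has the form
\[
\A/\J \;\cong\; \A_{1}/\J_{1} \,\times\,\cdots\,\times\, \A_{k}/\J_{k}.
\]

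The second step is to identify which such factors are Gorenstein. By Lemma~\ref{fact}, $\A/\J$ is Gorenstein if and only if each $\A_{i}/\J_{i}$ is Gorenstein. Since
\[
\dim_{\CC}\A/\J \;=\; \sum_{i=1}^{k}\dim_{\CC}\A_{i}/\J_{i},
\]
and the choices of the $\J_{i}$ are independent, the total dimension is maximized exactly when each $\dim_{\CC}\A_{i}/\J_{i}$ is maximal among Gorenstein factors of $\A_{i}$. This is precisely the statement of the lemma.

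There is no real obstacle here: the only thing to verify carefully is that ideals in a finite product of algebras are products of ideals (which follows from the idempotent decomposition), together with the fact that the notion of ``maximal dimensional Gorenstein factor'' is additive across a direct product, which uses Lemma~\ref{fact} in both directions. Everything else is bookkeeping.
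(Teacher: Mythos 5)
Your argument is correct and follows the same route as the paper's own proof: decompose the ideal $\J$ along the local factors (the paper writes $\J_{j}=\A_{j}\cap\J$, which agrees with your $e_{j}\J$) and then invoke Lemma~\ref{fact} to reduce the Gorenstein condition to a coordinate-wise one, maximizing dimension factor by factor. You have merely spelled out the idempotent bookkeeping that the paper leaves implicit.
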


\begin{proof}  Any ideal $\J$ of $\A$ is the direct sum of the ideals $\J_j=\A_j\cap \J$.
On the other hand $\A/\J$ is Gorenstein iff the local factors
$\A_i/\J_i$ are (Lemma \ref{fact}).\end{proof}

The following Theorem shows that we can get maximal Gorenstein factors of $\A$ from random linear forms on $\A$ with high probability, similarly as in the Gorenstein case. 

\begin{thm} The maximal Gorenstein factors of $\A$ can be obtained with high probability as $\BD:=\A/\Rc(\Lambda)$, where
$\Lambda$ is a random linear function on $\A$. 
\end{thm}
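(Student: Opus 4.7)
\medskip

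\noindent\textbf{Proof proposal.} The plan is to reduce the statement to the local case via the direct product decomposition of Lemma \ref{local}, and then use a Zariski-openness (semicontinuity) argument on the space of linear functionals.

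First, decompose $\A = \A_{1}\times \cdots \times \A_{k}$ into local $\CC$-algebras as given by the structure theorem. Under the induced decomposition $\A^{*} = \A_{1}^{*}\oplus\cdots\oplus\A_{k}^{*}$, a linear functional $\Lambda\in\A^{*}$ decomposes as $\Lambda = \Lambda_{1}+\cdots+\Lambda_{k}$ with $\Lambda_{i}\in\A_{i}^{*}$. Since the $\A_{i}$ are mutually annihilating (as ideals in the product), a straightforward check shows $\Rc(\Lambda) = \Rc(\Lambda_{1})\oplus\cdots\oplus\Rc(\Lambda_{k})$, hence
\[
\A/\Rc(\Lambda) \;\cong\; \A_{1}/\Rc(\Lambda_{1}) \,\times\, \cdots \,\times\, \A_{k}/\Rc(\Lambda_{k}).
\]
By Lemma \ref{local} a maximal (dimensional) Gorenstein factor of $\A$ is exactly a product of maximal Gorenstein factors of each $\A_{i}$. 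So it suffices to prove the statement for each local factor separately: for each $i$, generic $\Lambda_{i}\in\A_{i}^{*}$ yields a maximal Gorenstein factor of $\A_{i}$.

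Next I would reinterpret $\A_{i}/\Rc(\Lambda_{i})$ as a cyclic submodule of $\A_{i}^{*}$. Give $\A_{i}^{*}$ the standard $\A_{i}$-module structure $(a\cdot\Lambda_{i})(b):=\Lambda_{i}(ab)$. Then $\Rc(\Lambda_{i})$ is exactly the annihilator of $\Lambda_{i}$ in $\A_{i}$, so the map $a\mapsto a\cdot\Lambda_{i}$ induces an isomorphism of $\A_{i}$-modules $\A_{i}/\Rc(\Lambda_{i})\cong \A_{i}\cdot\Lambda_{i}$. In particular,
\[
\dim_{\CC}\bigl(\A_{i}/\Rc(\Lambda_{i})\bigr) \;=\; \dim_{\CC}(\A_{i}\cdot\Lambda_{i}).
\]
By the preceding lemma, every Gorenstein factor of $\A_{i}$ arises this way, so the maximal Gorenstein-factor dimension of $\A_{i}$ equals $N_{i} := \max_{\Lambda_{i}\in\A_{i}^{*}} \dim(\A_{i}\cdot\Lambda_{i})$, and this maximum is attained for some $\Lambda_{i}^{\mathrm{max}}$.

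Finally I would invoke semicontinuity. Fix a basis of $\A_{i}$ and the dual basis of $\A_{i}^{*}$; then the matrix whose columns are the coordinates of $b\cdot\Lambda_{i}$ as $b$ ranges over the basis has entries that are linear in the coordinates of $\Lambda_{i}$. Its rank equals $\dim(\A_{i}\cdot\Lambda_{i})$, and the condition $\mathrm{rank}\ge N_{i}$ is the non-vanishing of some $N_{i}\times N_{i}$ minor, hence a non-empty Zariski-open condition on $\Lambda_{i}$ (non-empty because $\Lambda_{i}^{\mathrm{max}}$ realizes it). Therefore a random $\Lambda_{i}$ satisfies $\dim\bigl(\A_{i}/\Rc(\Lambda_{i})\bigr)=N_{i}$ with high probability, so $\A_{i}/\Rc(\Lambda_{i})$ is a maximal Gorenstein factor. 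Taking the product over $i$ and using the decomposition above, a random $\Lambda\in\A^{*}$ gives a maximal Gorenstein factor $\A/\Rc(\Lambda)$ with high probability.

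The main obstacle I expect is the bookkeeping of the direct-sum decomposition of $\Rc(\Lambda)$ under the product decomposition of $\A$; everything else is either formal (cyclic module reinterpretation) or a standard semicontinuity-of-rank argument.
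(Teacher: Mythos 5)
Your proof is correct and rests on the same core argument as the paper's: the dimension of $\A/\Rc(\Lambda)$ equals the rank of a matrix whose entries depend linearly on the coordinates of $\Lambda$ (in the paper this is $[\Lambda(a_ia_j)]$, which is exactly the matrix you describe), and the maximal rank is achieved on a nonempty Zariski open subset of $\A^*$. The only genuine difference is your initial reduction to local factors via Lemma \ref{local} and the decomposition $\Rc(\Lambda)=\bigoplus_i\Rc(\Lambda_i)$, which is a harmless but unnecessary detour since the paper runs the rank-semicontinuity argument directly on $\A$; on the other hand, your cyclic-module identification $\A/\Rc(\Lambda)\cong\A\cdot\Lambda\subseteq\A^*$ gives a cleaner justification of the rank-equals-dimension step, which the paper states without spelling it out.
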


\begin{proof} Let $a_1, \ldots , a_m$ be a basis of $\A$ over $\CC$. Then for a linear
function $\Lambda: \A\rightarrow \CC$ the dimension of
$\A/\Rc(\Lambda)$ is the rank of the matrix $[\Lambda(a_ia_j)]$. Thus,
maximal Gorenstein factors are obtained if the rank of the matrix is
maximal (as $\Lambda$ ranges over the $\CC$-dual of $\A$). If one
fixes a dual basis of $\A$, and writes $\Lambda$ as a linear
combination of these basis functions, then, the entries of the
matrix $[\Lambda(a_ia_j)]$ will be linear polynomials of the
coordinates $\gamma_1, \ldots, \gamma_m$ of $\Lambda$. Now consider
a $\Lambda$ which achieves the maximal rank $k$, and consider a
corresponding $k \times k$ minor of the matrix that has a nonzero
determinant. This determinant is not identically zero, as a function
of the $\gamma_j$, hence it will be nonzero on a Zariski open set.
The linear functions corresponding to the points in this set will
define maximal Gorenstein factors.
\end{proof}

We now show that any maximal Gorenstein factor will allow us to compute the radical of $\A$.

\begin{thm}\label{rad}
Assume that $\Lambda$ is such that the corresponding bilinear form on $\A$ has maximal rank. Then $$\Rc(\Lambda)\subseteq {\rm Rad}(\A).$$
\end{thm}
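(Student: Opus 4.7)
The plan is to use the structural information we already have about $\Rc(\Lambda)$ together with the decomposition of $\A$ into local algebras. First I would write $\A = \A_1 \times \cdots \times \A_k$ as a direct product of local $\CC$-algebras with maximal ideals $\M_1, \ldots, \M_k$ (Structure Theorem on Artinian Rings, invoked earlier in the excerpt). Under this decomposition ${\rm Rad}(\A) = \M_1 \oplus \cdots \oplus \M_k$, so the goal reduces to showing that the component of $\Rc(\Lambda)$ inside each $\A_i$ lies in $\M_i$.

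Next I would invoke the hypothesis that the bilinear form $(a,b) \mapsto \Lambda(ab)$ has maximal rank. By Lemma \ref{fact2}, $\BD = \A/\Rc(\Lambda)$ is Gorenstein, and maximality of the rank means $\BD$ has maximal dimension among all Gorenstein factors of $\A$. Applying Lemma \ref{local}, I can write $\Rc(\Lambda) = \J_1 \oplus \cdots \oplus \J_k$ where each $\J_i \subseteq \A_i$ is an ideal such that $\A_i/\J_i$ is a maximal Gorenstein factor of the local ring $\A_i$.

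The key observation — and essentially the only real step — is then that $\J_i \subseteq \M_i$ for every $i$. This follows because $\A_i$ is local, so its unique maximal ideal $\M_i$ contains every proper ideal; and $\J_i$ is proper (i.e. $\J_i \neq \A_i$), since $\A_i/\M_i \cong \CC$ is itself a nonzero Gorenstein quotient of $\A_i$, forcing the maximal Gorenstein quotient $\A_i/\J_i$ to be nonzero as well.

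Combining these facts gives
\[
\Rc(\Lambda) \;=\; \J_1 \oplus \cdots \oplus \J_k \;\subseteq\; \M_1 \oplus \cdots \oplus \M_k \;=\; {\rm Rad}(\A),
\]
which is the desired inclusion. I do not anticipate any real obstacle here: the theorem is essentially a direct corollary of Lemma \ref{local} together with the elementary fact that proper ideals of a local ring are contained in the maximal ideal. The only subtlety worth mentioning explicitly is the justification that each $\J_i$ is proper, which is what prevents the argument from collapsing trivially.
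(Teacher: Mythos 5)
Your proof is correct and follows essentially the same route as the paper's: reduce to the local case via Lemma \ref{local}, then note that $\Rc(\Lambda)$ must be a proper ideal because $\A_i/\M_i \cong \CC$ is already a nonzero Gorenstein quotient, so $\Rc(\Lambda)$ lands in the unique maximal ideal. You have merely unfolded the direct-product bookkeeping that the paper compresses into the phrase ``we can assume $\A$ is local.''
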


\begin{proof}
By Lemma \ref{local}, we can assume that $\A$ is local. By Lemma \ref{ideal}, $\Rc(\Lambda)$ is an ideal. Since $\A/{\rm Rad}(\A)\cong \CC$ is Gorenstein and $\BD$ is the maximal Gorenstein factor of $\A$ we have that $\Rc(\Lambda)\neq\A$. Therefore $\Rc(\Lambda)$ is a subset of the unique maximal ideal $\M={\rm Rad}(\A)$.
\end{proof}

Using the previous results, we are now ready to define the main ingredients of our algorithm in the non-Gorenstein case, which are analogous to the Gorenstein case except that instead of working in $\A$ we are going to work in $\BD$. We will use the notation of the previous subsection.

We can obtain a basis for $\BD$ as follows. Let $\Sbs=[b_1,\ldots,b_N]$ be a basis for $\A$ and $\yb\in {\rm Null}(\Mac)$ such that the moment matrix  $\mathfrak{M}_{\Sbs}(\yb)$ has maximal rank. Since the columns of $\mathfrak{M}_{\Sbs}(\yb)$ correspond to $\Sbs$, taking a maximal nonsingular minor of this matrix will define a subset  $\Sbs_{\alpha}=[b_{\alpha_1},\ldots,b_{\alpha_r}]$ of $\Sbs$ corresponding to the columns of this submatrix. Then $\Sbs_{\alpha}$ will form a basis for $\BD$ as we prove in the following proposition.

\begin{prop}
 $\Sbs_{\alpha}=[b_{\alpha_1},\ldots,b_{\alpha_r}]$ forms a basis for $\BD$.
\end{prop}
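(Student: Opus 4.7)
The plan is to exploit the fact that the symmetric moment matrix $\mathfrak{M}_\Sbs(\yb)=[\Lambda(b_ib_j)]_{i,j}$ is exactly the Gram matrix, in the basis $\Sbs$, of the symmetric bilinear form $B(a,b):=\Lambda(ab)$ on $\A$. First I would show that the rank of $\mathfrak{M}_\Sbs(\yb)$ equals $r=\dim_\CC\BD$. By definition $\Rc(\Lambda)$ is precisely the left (equivalently right) kernel of $B$; equivalently it is the kernel of the linear map $\A\to\A^{*}$, $a\mapsto B(a,\cdot)$, whose matrix in the basis $\Sbs$ is $\mathfrak{M}_\Sbs(\yb)$. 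Hence $\mathrm{rank}\,\mathfrak{M}_\Sbs(\yb)=\dim_\CC\A-\dim_\CC\Rc(\Lambda)=\dim_\CC\BD=r$. Consequently, the $r$ columns of $\mathfrak{M}_\Sbs(\yb)$ indexed by $\alpha_{1},\ldots,\alpha_{r}$ (coming from a maximal nonsingular minor) form a basis of the column space of $\mathfrak{M}_\Sbs(\yb)$.

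Next I would verify linear independence of $\Sbs_{\alpha}$ in $\BD$. Suppose $\sum_{i=1}^{r}c_i b_{\alpha_i}\in\Rc(\Lambda)$. Then for every $j$,
\[
0=\Lambda\!\left(b_j\sum_{i=1}^{r}c_i b_{\alpha_i}\right)=\sum_{i=1}^{r}c_i\,\Lambda(b_j b_{\alpha_i}),
\]
so $\sum_{i=1}^{r}c_i\cdot(\text{column }\alpha_i\text{ of }\mathfrak{M}_\Sbs(\yb))=0$. Since these columns are linearly independent, all $c_i=0$. Thus the images of $b_{\alpha_1},\ldots,b_{\alpha_r}$ in $\BD$ are linearly independent.

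For the spanning property, fix any $b_j\in\Sbs$. Because the $\alpha_i$-columns span the column space of $\mathfrak{M}_\Sbs(\yb)$, there exist scalars $c_{j1},\ldots,c_{jr}\in\CC$ such that $\Lambda(b_k b_j)=\sum_{i=1}^{r}c_{ji}\Lambda(b_k b_{\alpha_i})$ for every $k=1,\ldots,N$. By $\CC$-linearity in the first argument this extends to
\[
\Lambda\!\left(a\bigl(b_j-\textstyle\sum_{i=1}^{r}c_{ji}b_{\alpha_i}\bigr)\right)=0\quad\text{for all }a\in\A,
\]
so $b_j-\sum_{i=1}^{r}c_{ji}b_{\alpha_i}\in\Rc(\Lambda)$, i.e. $b_j\equiv\sum_{i=1}^{r}c_{ji}b_{\alpha_i}\pmod{\Rc(\Lambda)}$. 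Since $\Sbs$ spans $\A$ and $\A\twoheadrightarrow\BD$ is surjective, the images of $\Sbs_{\alpha}$ span $\BD$. Combined with linear independence, this proves that $\Sbs_{\alpha}$ is a basis of $\BD$.

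The only real point of care is the identification $\mathrm{rank}\,\mathfrak{M}_\Sbs(\yb)=\dim\BD$; once this is established, both the independence and the spanning arguments are immediate consequences of the column-space description. No new algebraic input beyond the definition of $\Rc(\Lambda)$ and the symmetry of $B$ is required.
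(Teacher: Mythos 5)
Your proof is correct and follows essentially the same approach as the paper: both identify the moment matrix $\mathfrak{M}_\Sbs(\yb)$ with the Gram matrix of the form $(a,b)\mapsto\Lambda(ab)$ on $\A$, observe that $\Rc(\Lambda)$ corresponds precisely to $\mathrm{Null}(\mathfrak{M}_\Sbs(\yb))$ under the coordinate isomorphism determined by $\Sbs$, and conclude that the basic columns indexed by $\alpha_1,\ldots,\alpha_r$ yield a basis of $\A/\Rc(\Lambda)=\BD$. Your writeup is somewhat more explicit (spelling out independence and spanning separately via the column space), whereas the paper compresses the argument by directly extending a nullspace basis of $\mathfrak{M}_\Sbs(\yb)$ by the unit vectors $\mathbf{e}_{\alpha_i}$ to a basis of $\CC^N$, but the underlying idea is identical.
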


\begin{proof}
Consider the moment matrix  $\mathfrak{M}_{\Sbs}(\yb)$ with columns corresponding to $\Sbs$ and let $r$ be its rank. Since $\Sbs_{\alpha}$ corresponds to a set of basic columns of $\mathfrak{M}_{\Sbs}(\yb)$, there exists a basis ${\bf v}_1,\ldots,{\bf v}_{N-r}$ for ${\rm Null}(\mathfrak{M}_{\Sbs}(\yb))$ which can be extended to a basis of $\CC^N$ by adding the unit vectors ${\bf e}_{\alpha_i}:=[\delta_{\alpha_i,j}]_{j=1}^N$ for $i=1,\ldots,r$. 

Let $v_i$ be the element of $\A$ obtained by taking the linear combination of
$b_1,\ldots,b_N$ corresponding to the coordinates of ${\bf v}_i$ for
$i=1,\ldots,N-r$. Then  it is easy to see that $v_i\in \Rc(\Lambda)$. Thus
the elements of $\Sbs_{\alpha}$ correspond to a basis of $\A/\Rc(\Lambda)=\BD$.
\end{proof}

Here we need to define the moment matrix.

\begin{defn}
Let $\Sbs_{\alpha}$ be defined as above. Define
$$
J_{\alpha}:=\sum_{i=1}^kb_{\alpha_i}b_{\alpha_i}^*
$$
similarly as in \ref{J}.
\end{defn}

\begin{thm}
$[Tr(b_{\alpha_i}b_{\alpha_j})]_{i,j=1}^k=$ \begin{tabular}{|ccc|}
\hline
&&\\
&${\rm Syl}_{\Sbs_{\alpha}}(J_{\alpha})$&\\
&&\\
\hline
\end{tabular} $\cdot$ \begin{tabular}{|c|}
\hline
\\
$\mathfrak{M}_{\Sbs_{\alpha}}(\yb)$\\
\\
\hline
\\
$\mathfrak{R}_{\Sbs_{\alpha}}(\yb)$\\
\\
\hline
\end{tabular}.
\end{thm}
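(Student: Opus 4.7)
The plan is to reduce the statement to the Gorenstein-case theorem, applied to the factor algebra $\BD=\A/\Rc(\Lambda)$. By Lemma \ref{fact2}, $\Lambda$ descends to a linear form $\Lambda'$ on $\BD$ with $\Rc(\Lambda')=0$, so the bilinear form $(a,b)\mapsto \Lambda'(ab)$ is nondegenerate on $\BD$, making $\BD$ Gorenstein. Since $\Sbs_{\alpha}=[b_{\alpha_1},\ldots,b_{\alpha_k}]$ is a basis of $\BD$ by the preceding proposition, the matrix $\mathfrak{M}_{\Sbs_{\alpha}}(\yb)=[\Lambda(b_{\alpha_i}b_{\alpha_j})]_{i,j=1}^k$ represents this nondegenerate form in the basis $\Sbs_{\alpha}$, and is therefore invertible.

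Having invertibility, I would then repeat the Gorenstein construction inside $\BD$. Set $b_{\alpha_i}^*:=\sum_{j=1}^k c_{ji}\,b_{\alpha_j}$, where $[c_{ji}]=\mathfrak{M}_{\Sbs_{\alpha}}^{-1}(\yb)$, producing a dual basis satisfying $\Lambda(b_{\alpha_i}b_{\alpha_j}^*)=\delta_{ij}$ in $\BD$. Reducing $\sum_i b_{\alpha_i}b_{\alpha_i}^*$ modulo $\I$ via $\Mac$ (legitimate because this element has degree at most $2D\le\Delta$) produces a representative $J_{\alpha}$ of degree at most $D$. The same calculation that produced $Tr(h)=\Lambda(hJ)$ in the Gorenstein proof now yields $Tr(h)=\Lambda(hJ_{\alpha})$ for every $h\in\BD$, where $Tr$ denotes the trace of multiplication on $\BD$ in the basis $\Sbs_{\alpha}$. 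In particular $Tr(b_{\alpha_i}b_{\alpha_j})=\Lambda(J_{\alpha}\,b_{\alpha_i}b_{\alpha_j})$ for all $i,j$.

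To interpret the matrix product, I would use the natural analogue of Lemma \ref{Rlemma} adapted to $\Sbs_{\alpha}$: the stacked matrix has entry $\tilde{y}_{\gamma+\alpha_j}$ at row $x^{\gamma}\in{\rm Mon}_{\leq}(\Delta)$ and column $b_{\alpha_j}\in\Sbs_{\alpha}$, where $\tilde{\yb}$ is the extension from Remark \ref{rem1}; since $\tilde{\yb}$ vanishes on $\I$, this stacked matrix is annihilated by $\Mac$. Thus the $j$th column records the values of $b_{\alpha_j}\cdot\Lambda$ on every monomial of degree $\le\Delta$, while the $i$th row of ${\rm Syl}_{\Sbs_{\alpha}}(J_{\alpha})$ is the coefficient vector of $b_{\alpha_i}J_{\alpha}$ (of degree at most $2D\le\Delta$). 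The $(i,j)$ entry of the product is therefore $(b_{\alpha_j}\cdot\Lambda)(b_{\alpha_i}J_{\alpha})=\Lambda(J_{\alpha}\,b_{\alpha_i}b_{\alpha_j})=Tr(b_{\alpha_i}b_{\alpha_j})$, as required.

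The main subtlety is the correct localization of the Gorenstein argument to $\BD$: the dual basis and $J_{\alpha}$ are well-defined only relative to the submatrix $\mathfrak{M}_{\Sbs_{\alpha}}(\yb)$ (the full moment matrix $\mathfrak{M}_{\Sbs}(\yb)$ on $\A$ being singular in this setting), and the trace identity must be interpreted as a statement about multiplication operators on the Gorenstein quotient $\BD$ rather than on $\A$. This is acceptable because $\Rc(\Lambda)\subseteq{\rm Rad}(\A)$ by Theorem \ref{rad}, so $\A/{\rm Rad}(\A)\cong \BD/{\rm Rad}(\BD)$, and the resulting trace matrix still carries the information needed to extract a basis of $\CC[\xb]/\sqrt{\I}$ and the multiplication matrices of its coordinate functions.
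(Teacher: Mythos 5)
The paper states this theorem without proof, presenting it purely as the non-Gorenstein analogue of the earlier Gorenstein result. Your proposal supplies exactly the argument that the paper's framework invites: pass to the Gorenstein quotient $\BD = \A/\Rc(\Lambda)$, observe that $\mathfrak{M}_{\Sbs_{\alpha}}(\yb)$ is the (invertible) Gram matrix of $\Lambda'$ in the basis $\Sbs_{\alpha}$, build the dual basis and $J_{\alpha}$ inside $\BD$, reprove $Tr(h)=\Lambda'(h J_{\alpha})$ there, and read off each product entry as $\Lambda(J_{\alpha}\,b_{\alpha_i}b_{\alpha_j})$ via the stacked-matrix interpretation. This is correct, and it is the natural completion of the missing proof.

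Two points you handle that the paper leaves tacit are worth flagging explicitly. First, you correctly interpret $Tr$ on the left-hand side as the trace of multiplication on $\BD$, not on $\A$; for a non-Gorenstein $\A$ these genuinely differ (for instance $\A=\CC[x]/(x^3)$ with $\Lambda(1)=1$, $\Lambda(x)=\Lambda(x^2)=0$ gives $\Rc(\Lambda)=(x)$, $\BD\cong\CC$, so $Tr_{\BD}(1)=1$ while $Tr_{\A}(1)=3$), and only the $\BD$-trace makes the formula an identity. Second, you correctly pin down $\mathfrak{R}_{\Sbs_{\alpha}}(\yb)$: unlike Lemma \ref{Rlemma} for the full basis $\Sbs$, a stacked matrix with top block $\mathfrak{M}_{\Sbs_{\alpha}}(\yb)$ and annihilated by $\Mac$ is not unique (one is free to choose the rows indexed by $\Sbs\setminus\Sbs_{\alpha}$), and the theorem holds only for the specific extension in which the $j$-th column records $b_{\alpha_j}\cdot\Lambda$ on all monomials of degree $\le\Delta$. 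Your description via $\tilde{\yb}$ (equivalently, via the columns of the full stacked matrix of Lemma \ref{Rlemma} restricted to $\Sbs_{\alpha}$, which is still well-defined since Lemma \ref{Rlemma} only needs $\At$ invertible) is the right choice, and making it explicit strengthens the argument over the paper's silent assumption.
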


Using the following theorem we get that the maximal nonsingular minor of the smaller trace matrix $[Tr(b_{\alpha_i}b_{\alpha_j})]_{i,j=1}^k$ suffices to compute the radical.

\begin{thm}
Let $\Sbs=[b_1,\dots,b_N]$ be a basis for $\A$ and $\Sbs_{\alpha}=[b_{\alpha_1},\ldots,b_{\alpha_r}]$ a basis for $\BD$, where $\alpha_1<\cdots< \alpha_r$ are in $\{1,\ldots,N\}$.  Then
$$
{\rm rank}[Tr(b_{\alpha_i}b_{\alpha_j})]_{i,j=1}^r={\rm rank}[Tr(b_ib_j)]_{i,j=1}^N.
$$ 
(As before, by a slight abuse of notation, we use the same notation for elements in $\A$, in $\BD$, and their common preimages in $\CC[\xb]$.)
\end{thm}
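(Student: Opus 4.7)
The plan is to verify that both ranks equal the common value $\dim_\CC \A/\mathrm{Rad}(\A)$, using the fact recalled just before Algorithm~\ref{algo} (from \cite[Prop.~8.3]{JaRoSza07}) that on any finite-dimensional commutative $\CC$-algebra the kernel of the trace bilinear form coincides with the nilradical.

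First I would apply this fact directly to $\A$ in the basis $\Sbs$: the right-hand matrix is the Gram matrix of the symmetric form $(a,a')\mapsto Tr_\A(aa')$ on $\A$, so its rank equals $\dim \A - \dim\mathrm{Rad}(\A) = \dim \A/\mathrm{Rad}(\A)$. Applying the same principle to the finite-dimensional commutative $\CC$-algebra $\BD=\A/\Rc(\Lambda)$ in the basis $\Sbs_\alpha$ (a basis of $\BD$ by the previous proposition) shows that the left-hand matrix --- interpreted, consistently with the preceding theorem computing it from $J_\alpha$, as $[Tr_\BD(\bar b_{\alpha_i}\bar b_{\alpha_j})]_{i,j=1}^r$ --- has rank $\dim\BD/\mathrm{Rad}(\BD)$.

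The final step is to identify these two dimensions via Theorem~\ref{rad}: because $\Rc(\Lambda)\subseteq\mathrm{Rad}(\A)$, the ideal $\mathrm{Rad}(\A)/\Rc(\Lambda)$ is nilpotent in $\BD$, while the quotient $\BD/(\mathrm{Rad}(\A)/\Rc(\Lambda))\cong \A/\mathrm{Rad}(\A)$ is a product of copies of $\CC$ (by the structure theorem on Artinian $\CC$-algebras used earlier in this section), hence already reduced. This forces $\mathrm{Rad}(\BD)=\mathrm{Rad}(\A)/\Rc(\Lambda)$ and therefore $\dim\BD/\mathrm{Rad}(\BD)=\dim\A/\mathrm{Rad}(\A)$, so the two ranks coincide. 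The only real bookkeeping --- and the spot where one has to be careful --- is this last identification of radicals, in particular checking both inclusions: $\mathrm{Rad}(\A)/\Rc(\Lambda)\subseteq\mathrm{Rad}(\BD)$ because the ideal is nilpotent, and the reverse inclusion because no larger nilpotent ideal of $\BD$ can exist once the further quotient is already semisimple.
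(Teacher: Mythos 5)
Your proof is correct and follows the same route as the paper: both appeal to the fact (from \cite[Proposition 8.3]{JaRoSza07}) that the kernel of the trace form on a finite-dimensional algebra is the nilradical, and both use Theorem~\ref{rad} ($\Rc(\Lambda)\subseteq\mathrm{Rad}(\A)$) to deduce $\dim\BD/\mathrm{Rad}(\BD)=\dim\A/\mathrm{Rad}(\A)$. The paper states the last identification as a one-line fact; you spelled out the verification $\mathrm{Rad}(\BD)=\mathrm{Rad}(\A)/\Rc(\Lambda)$ and also flagged the (correct) reading of the left-hand matrix as the trace form of $\BD$, which the paper leaves implicit.
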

\begin{proof}
This follows from Theorem \ref{rad} 
and the fact that the rank of the trace matrix is  $\dim\A/{\rm Rad}(\A)=\dim\BD/{\rm Rad}(\BD)$.
\end{proof}

\section{Affine Complete Intersection Ideals}
\subsection{Univariate Case}\label{unisection}

In this section we will follow the work of Mourrain and Pan
\cite{MouPa00}. We start by defining the univariate Bezout matrix.

\begin{defn}
Let $f,g\in \KK[x]$ be two univariate polynomials such that
$\deg g\leq \deg f=d$, and let $y$ be a new variable. Then the
Bezoutian $\Theta_{f,g}$ of $f$ and $g$
 is the polynomial
\[
\BEZ_{f,g}(x,y)=\frac{f(x)g(y)-f(y)g(x)}{x-y}=\sum_{0\leq i,j\leq
d-1}c_{ij}x^iy^j.
\]

The Bezout Matrix $B_{f,g}$ of  $f$ and $g$  is the $d\times d$
matrix
\[
[B_{f,g}]_{ij}=c_{ij}.
\]
\end{defn}

We will need the following definition:
\begin{defn}
The \emph{Horner basis} for the polynomial $f$ is the set
$\{H_{d-1},\ldots,H_0\}$ with
\[
H_i(x)=a_{i+1}+\cdots+a_dx^{d-i-1}\quad\text{ for } i=0,\ldots, d-1.
\]
Note that in terms of the Horner polynomials, we have that
\[
\BEZ_{f,1}(x,y)=\sum_{i=0}^{d-1}x^iH_i(y).
\]
\end{defn}
The following theorem connects the Bezoutian of $f$ and its
derivative $f'$ with the matrix of traces of $f$ with respect to the
Horner basis.

\begin{thm}\label{main}
\[
B_{f,f'}=[Tr(H_iH_j)]_{i,j=0}^{d-1},
\]
where $[Tr(H_iH_j)]_{i,j=0}^{d-1}$ is the matrix of traces of $f$ in
the Horner basis (see eg. \cite{Card95}, \cite{MouPa00}).
\end{thm}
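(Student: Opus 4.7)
The plan is to exploit a residue (partial-fraction) representation of the Bezoutian, settle the case of simple roots directly, and then extend to the general case by Zariski density.

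First, assuming (without loss of generality) that $f$ is monic of degree $d$ with simple roots $\zeta_{1},\ldots,\zeta_{d}$ in $\KK$, I would establish the residue formula
\[
\BEZ_{f,g}(x,y)=\sum_{k=1}^{d}\frac{g(\zeta_{k})}{f'(\zeta_{k})}\cdot\frac{f(x)\,f(y)}{(x-\zeta_{k})(y-\zeta_{k})}
\]
for any $g$ of degree $<d$. This comes from the partial fraction decomposition $g(x)/f(x)=\sum_{k} g(\zeta_{k})/(f'(\zeta_{k})(x-\zeta_{k}))$, which after clearing denominators lets us rewrite $f(x)g(y)-f(y)g(x)$ as $f(x)f(y)\sum_k \frac{g(\zeta_k)}{f'(\zeta_k)}\cdot\frac{x-y}{(x-\zeta_k)(y-\zeta_k)}$, and dividing by $x-y$ gives the formula. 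Specializing to $g=f'$ makes the weight $g(\zeta_{k})/f'(\zeta_{k})$ collapse to $1$, so
\[
\BEZ_{f,f'}(x,y)=\sum_{k=1}^{d}\frac{f(x)}{x-\zeta_{k}}\cdot\frac{f(y)}{y-\zeta_{k}}.
\]

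Next, I would exploit the Horner identity $\BEZ_{f,1}(x,y)=\sum_{i=0}^{d-1}x^{i}H_{i}(y)$ and the fact that $f(\zeta_{k})=0$ to obtain $\frac{f(x)}{x-\zeta_{k}}=\sum_{i=0}^{d-1}x^{i}H_{i}(\zeta_{k})$. Substituting this expansion into the residue formula above and collecting the coefficient of $x^{i}y^{j}$ yields
\[
[B_{f,f'}]_{ij}=\sum_{k=1}^{d}H_{i}(\zeta_{k})\,H_{j}(\zeta_{k}).
\]
In the simple-root case, the trace of multiplication by $h$ on $\A=\KK[x]/(f)$ is the power-sum $\sum_{k}h(\zeta_{k})$ (the multiplication operator is diagonalizable with eigenvalues $h(\zeta_{k})$). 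Applied to $h=H_{i}H_{j}$, this is exactly the right-hand side above, so $[B_{f,f'}]_{ij}=Tr(H_{i}H_{j})$ when the roots are simple.

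Finally, I would extend to arbitrary (possibly multiple) roots by a polynomial-identity argument. Both sides are polynomial functions of the coefficients $(a_{0},\ldots,a_{d-1})$ of a monic $f$: the entries $[B_{f,f'}]_{ij}$ are polynomial by definition of the Bezoutian; and $Tr(H_{i}H_{j})$ is polynomial because reducing $H_{i}H_{j}$ modulo the monic $f$ is a polynomial operation in these coefficients, and the trace of a multiplication matrix is polynomial in its entries. The simple-root locus is Zariski-dense in $\KK^{d}$ (it is the complement of the discriminant hypersurface), so the identity $[B_{f,f'}]_{ij}=Tr(H_{i}H_{j})$ extends from this dense subset to all monic $f$; scaling recovers the non-monic case.

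The only delicate step is the residue formula of the first paragraph; once it is in hand, the rest is expansion plus a routine density argument. Another potential subtlety is the reliance on simple roots in identifying the trace with a power-sum, but this is exactly what the density step is designed to handle, and for the multiple-root case one could alternatively verify the trace directly on each local factor $\KK[x]/(x-\zeta)^{m}$ using the fact that multiplication by $(x-\zeta)^{i}$ is nilpotent with zero trace for $i\geq 1$.
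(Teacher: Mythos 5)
Your proof is correct; note, however, that the paper does not actually prove Theorem \ref{main} --- it states it and defers to \cite{Card95} and \cite{MouPa00} --- so there is no in-paper argument to compare against.

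Your residue-formula derivation is sound. The partial-fraction identity
$g(x)/f(x)=\sum_k g(\zeta_k)/\bigl(f'(\zeta_k)(x-\zeta_k)\bigr)$
does give, after combining the two fractions and dividing by $x-y$, the stated representation
$\BEZ_{f,g}(x,y)=\sum_k \frac{g(\zeta_k)}{f'(\zeta_k)}\cdot\frac{f(x)f(y)}{(x-\zeta_k)(y-\zeta_k)}$, and taking $g=f'$ collapses the weights to $1$. Setting $y=\zeta_k$ in $\BEZ_{f,1}(x,y)=\frac{f(x)-f(y)}{x-y}=\sum_i x^iH_i(y)$ and using $f(\zeta_k)=0$ correctly yields $\frac{f(x)}{x-\zeta_k}=\sum_i x^iH_i(\zeta_k)$, hence $[B_{f,f'}]_{ij}=\sum_k H_i(\zeta_k)H_j(\zeta_k)$. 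In the semisimple case the multiplication operator by $h$ on $\KK[x]/(f)\cong\KK^d$ is diagonalizable with eigenvalues $h(\zeta_k)$, so this power sum is exactly $Tr(H_iH_j)$. Finally, both sides are indeed polynomial in the coefficients of a monic $f$: the Bezoutian coefficients are, the Horner polynomials are linear in those coefficients, reduction modulo a monic $f$ involves no divisions, and the trace is linear in the entries of the multiplication matrix; the simple-root locus (complement of the discriminant hypersurface) is Zariski dense over the algebraically closed $\KK$, so the identity extends, and the quadratic scaling in the leading coefficient handles the non-monic case. This is an appealingly elementary route: it avoids the dual-basis and reduced-Bezoutian apparatus (Theorems \ref{thmdiag} and \ref{redbeztrace}) which the paper develops for the multivariate analogue, at the cost of a density argument, whereas the structural approach in \cite{MouPa00} yields the identity uniformly without genericity assumptions.
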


Theorem \ref{main} implies that using Dickson's Lemma one can
compute the square-free factor of $f$ by simply computing the kernel
of $B_{f,f'}$. It's natural to ask how our method based on Dickson's
lemma relates to computing the square-free factor of $f$ via
computing $\frac{f}{gcd(f,f')}$. The following proposition shows
that computing $f/\gcd(f,f')$ to get the square-free factor using
the Bezout matrix is computationally equivalent to using Dickson's
Lemma.

\begin{prop}
The smallest degree polynomial of the form
$\sum_{i=0}^{d-1}r_iH_i(x)$ such that $[r_0, \ldots, r_{d-1}]^T$ is
in the kernel of $B_{f,f'}$, is equal to $f/\gcd(f,f')$.
\end{prop}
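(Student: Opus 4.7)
My plan is to identify $\ker B_{f,f'}$ with the nilradical of $\A := \KK[x]/(f)$ via Theorem \ref{main}, and then exploit the fact that $\KK[x]$ is a PID to pin down the nilradical explicitly as $(g)/(f)$ with $g := f/\gcd(f,f')$.

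First, by Theorem \ref{main}, $B_{f,f'}=[Tr(H_iH_j)]_{i,j=0}^{d-1}$ is the matrix of the bilinear form $(a,b)\mapsto Tr(ab)$ on $\A$ written in the Horner basis $H_0,\ldots,H_{d-1}$. Since these polynomials span the space of polynomials of degree $<d$, i.e.\ a canonical set of residues for $\A$, a vector $[r_0,\ldots,r_{d-1}]^T$ lies in $\ker B_{f,f'}$ iff the polynomial $p:=\sum_{i=0}^{d-1} r_iH_i$ of degree $<d$ satisfies $Tr(pq)=0$ for every $q\in\A$.

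Next, I would invoke the fact (which is the univariate instance of what drives the whole paper, via Dickson's Lemma) that the kernel of the trace form on $\A$ equals the nilradical $\mathrm{Rad}(\A)$. This is immediate from the CRT decomposition $\A\cong\prod_i \KK[x]/(x-\alpha_i)^{m_i}$: on each local factor the trace of a nilpotent element is $0$, so $\mathrm{Rad}(\A)$ is contained in the trace radical; conversely, on the reduced quotient $\A/\mathrm{Rad}(\A)\cong\prod_i\KK$ the trace form is diagonal and nondegenerate, so nothing outside $\mathrm{Rad}(\A)$ can be in the kernel. Because $\KK[x]$ is a PID, $\sqrt{(f)}=(g)$ with $g=f/\gcd(f,f')$, and therefore $\mathrm{Rad}(\A)=(g)/(f)$.

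Finally, I would translate this back to polynomials. A degree-$<d$ representative $p$ lies in $(g)/(f)$ iff $p\equiv qg\pmod{f}$ for some $q\in\KK[x]$; since $g\mid f$, this forces $p$ itself to be an honest polynomial multiple of $g$. Among nonzero multiples of $g$, the one of smallest degree is $g$ (take $q=1$), which has $\deg g=d-\deg\gcd(f,f')<d$ and so genuinely expands in the Horner basis. This yields the claim. The only slightly delicate point is the identification $\ker B_{f,f'}=\mathrm{Rad}(\A)$; one can either cite the same Dickson-type statement used earlier in the paper or provide the three-line CRT argument above.
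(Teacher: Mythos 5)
Your argument is correct, and it matches the approach the paper implicitly relies on throughout (the paper states this proposition without a written proof). You correctly reduce the problem via Theorem \ref{main} to identifying the kernel of the trace form with $\mathrm{Rad}(\A)$, then use that $\KK[x]$ is a PID so that $\sqrt{(f)}=(g)$ with $g=f/\gcd(f,f')$, and finally observe that a residue of degree $<d$ lying in $(g)/(f)$ must literally be a polynomial multiple of $g$ because $g\mid f$, so the lowest-degree one is $g$ itself.

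One place where your phrasing could mislead is the ``conversely'' direction: the trace form on the reduced quotient $\A/\mathrm{Rad}(\A)\cong\prod_i\KK$ is \emph{not} the restriction of the trace form on $\A$; the two differ by the multiplicity factors $m_i$. The clean statement is that, after the CRT decomposition of $\A$, if $p\notin\mathrm{Rad}(\A)$ then $p(\alpha_i)\neq 0$ for some $i$, and choosing $q$ to be the idempotent $e_i$ (or the Lagrange interpolant vanishing at the other $\alpha_j$) gives $Tr(pq)=m_i\,p(\alpha_i)\neq 0$ --- provided $\mathrm{char}(\KK)\nmid m_i$. This caveat (that Dickson's identification of the trace radical with the nilradical requires characteristic zero, or at least characteristic not dividing the multiplicities) is inherent in the paper's framework and not a defect you introduced, but it is worth stating rather than glossing as ``diagonal and nondegenerate.'' With that tightening, and the understanding that the answer is unique up to a nonzero scalar, the proof is complete.
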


\subsection{Multivariate Case}\label{multisection}

For the multivariate case we will first define the multivariate
analogue of the Bezout matrix (also referred to as Dixon matrix in
the literature). The papers \cite{carmou96} and \cite{KaSaYa} are
good references for the Bezout (Dixon) matrix described below.

\begin{defn}\label{bezmatrixdef}
Let
$${\bf f}:=[f_1, \ldots, f_m]\in \C[x_1, \ldots, x_m]^m
$$
and consider an additional polynomial $f_0\in \C[x_1, \ldots, x_m]$.
We use the notation ${\bf x}=[x_1, \ldots ,x_{m}]$, ${\bf y} = [y_1,
\ldots , y_{m}]$ and $$ X_{0} = [x_1, \ldots , x_{m}], X_{1}
= [y_1,x_2, \ldots , x_{m}], \ldots,X_{m} =[y_1,y_2, \ldots ,
y_{m}].
$$
The \emph{Bezoutian}  of the system  $[f_0, f_1, \ldots, f_n]$,
denoted by $\BEZ_{{f_0}}$, is a polynomial in the variables ${\bf
x}$ and ${\bf y}$ defined as follows: 
\begin{eqnarray}\label{Bezoutian} \BEZ_{f_0}({\bf x},
{\bf y}) := \det\left| {\begin{array}{cccc} { f_0}(X_0) &
{\displaystyle \frac {{ f_0}(X_0) - { f_0}(X_1)}{x_1 - {y_{1}}}} &
\ldots & {\displaystyle \frac {{ f_0}(X_{m-1}) - { f_0}(X_{m})}{x_{m} - {y_{m}}}} \\
{ f_1}(X_1) & {\displaystyle \frac {{f_1}(X_0) - {f_1}(X_1)}{x_1 -
{y_{1}}}} & &{\displaystyle \frac {{ f_1}(X_{m-1}) - {
f_1}(X_{m})}{x_{m} - {y_{m}}}} \\ [2ex] \vdots & \vdots & &
\vdots\\{ f_m}(X_1) & {\displaystyle \frac {{ f_m}(X_0) - {
f_m}(X_1)}{x_1 - {y_{1}}}} & \ldots &{\displaystyle \frac {{
f_m}(X_{m-1}) - { f_m}(X_{m})}{x_{m} - {y_{m}}}}
\end{array}} \right| .
\end{eqnarray} 
 The \emph{Bezout matrix} of the system $[f_0, f_1, \ldots, f_m]$,
denoted by $\BD_{{f_0}}$, is the coefficient matrix of the
Bezoutian, i.e. if we write
$$
\BEZ_{f_0}({\bf x},{\bf y}) = \sum_{\alpha \in E, \;\; \beta\in E'}
c_{\alpha, \beta}(f_{0})\; {\bf x}^{\alpha}{ \bf y}^{ \beta}
$$
where $E$ and $E'$ are subsets of $\N^{m}$ and
$c_{\alpha,\beta}(f_{0})\in\C$, then the Bezout matrix of $[f_0,f_1
\ldots,f_m]$ is the $|E'|\times |E|$ matrix
$$ \BD_{f_0} :=
\left( c_{\beta, \alpha}(f_{0})\right)_{\beta\in E', \;\;\alpha \in
E }.
$$
\end{defn}
We denote by $\BEZ^{\xb}_{f_{0}}$ the map
\begin{equation}\label{defbezx}
\BEZ^{\xb}_{f_{0}}: (\lambda_{\beta})_{\beta\in E'} \mapsto
\sum_{\alpha \in E, \;\; \beta\in E'} c_{\alpha, \beta}(f_{0})\;
{\bf x}^{\alpha} \lambda_{\beta}
\end{equation}
and by  $\BEZ^{\yb}_{f_{0}}$ the map
\begin{equation}\label{defbezy}
\BEZ^{\yb}_{f_{0}}: (\lambda_{\alpha})_{\alpha\in E} \mapsto
\sum_{\alpha \in E, \;\; \beta\in E'} c_{\alpha, \beta}(f_{0})\;
\lambda_{\alpha} { \bf y}^{ \beta}.
\end{equation}

Our goal is to compute a matrix of traces for the system ${\bf
f}=[f_1, \ldots,f_m]$ from the Bezout matrix $\BD_J$ of the system
$[J,f_1, \ldots, f_m]$ analogously to the univariate case, where $J$
is the Jacobian of $f_1, \ldots,f_m$. As we mentioned in the
introduction, in general the Bezout matrix  $\BD_J$ is not a matrix
of traces, which can be easily seen by comparing sizes. However, to
obtain a matrix of traces of ${\bf f}$ one can define a reduced
version of the Bezoutian and the Bezout matrix as follows.

\begin{defn}\label{redbezout}
Let ${\bf f}=[f_1, \ldots, f_m]\in \C[x_1, \ldots, x_m]^m$ and
assume that the factor algebra $\A({\bf x}):=\C[{\bf x}]/\I$ has
dimension $N$ over $\C$, where $I$ is the ideal generated by the
polynomials in ${\bf f}$. For some $f_0\in \C[{\bf x}]$ let
$\BEZ_{f_0}({\bf x}, {\bf y})$ be the Bezoutian of the system $f_0,
f_1, \ldots, f_m$. Let  $B=[b_1, \ldots,b_N]$ and $B'=[b_1',\ldots
b_N']$ be bases for $A({\bf x})$ and $A({\bf y})$, respectively.
Then we can uniquely write
$$
\BEZ_{f_0}({\bf x}, {\bf y})=\sum_{b\in B, b'\in
B'}\beta_{b,b'}(f_{0})\,b({\bf x})b'({\bf y}) + F({\bf x}, {\bf y})
$$
where $F({\bf x}, {\bf y})\in (I({\bf x}),I({\bf y}))$ and
$\beta_{b,b'}(f_{0})\in\C$. We define  the \emph{reduced Bezoutian}
$\RBEZ_{f_0}$ with respect to the bases $B$ and $B'$  as
$$\RBEZ_{f_0}:= \sum_{b\in B,
b'\in B'}\beta_{b,b'}(f_{0})\,b({\bf x})b'({\bf y})
$$
and the \emph{reduced Bezout matrix} $\RBD_{f_0}$ with respect to
the bases $B$ and $B'$ to be the $N\times N$ matrix
$$
\RBD^{B,B'}_{f_0}:=\left(\beta_{b,b'}(f_{0})\right)_{b\in B, b'\in
B'}.
$$
\end{defn}
We are going to use the following theorem \cite{MEBMissac99}:
\begin{thm}\label{thmdiag}
There exists (dual) bases $\Theta:=[\theta_{1}(\xb), \ldots,
\theta_N(\xb)]$ and  $\Omega:=[\omega_{1}(\yb),
\ldots,\omega_{N}(\yb)] $ of $A(\xb)$ and $A(\yb)$ such that for all
polynomial $f \in \CC[\xb]$, we have
$$
\BEZ_{f}:=
\sum_{i,j}\beta_{i,j}(f)\,\theta_{i}(\xb)\,\omega_{j}(\yb) +
F(\xb,\yb),
$$
with $F(\xb,\yb)\in I(\xb)\otimes I(\yb)$ and such that
$$
\RBEZ_{1} = \sum_{i} \theta_{i}(\xb)\,\omega_{i}(\yb).
$$
\end{thm}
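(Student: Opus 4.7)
The plan is to construct the dual bases $\Theta,\Omega$ directly from the Bezoutian, and to promote the familiar reduction of $\BEZ_1$ modulo $I(\xb)+I(\yb)$ to a reduction modulo the tensor ideal $I(\xb)\otimes I(\yb)$ by exploiting the explicit determinantal form (\ref{Bezoutian}).

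First, I fix arbitrary polynomial representatives $\theta_1(\xb),\dots,\theta_N(\xb)\in\CC[\xb]$ of a basis of $A(\xb)$ and let $L(\Theta)$ be their $\CC$-span, so $\CC[\xb]=L(\Theta)\oplus I(\xb)$; likewise fix a complement $L(\Omega)$ of $I(\yb)$ in $\CC[\yb]$, to be specified below. Since $f_1,\dots,f_m$ forms an affine complete intersection, $A$ is Gorenstein by Theorem~\ref{deltatheorem1}, so the trace (residue) form $(a,b)\mapsto\tau(ab)$ on $A$ is nondegenerate. I define $\omega_i(\yb)\in L(\Omega)$ to be the representatives of the basis of $A(\yb)$ dual to $\{\theta_i\}$ under $\tau$. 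The standard interpretation of the Bezoutian as the canonical element of $A(\xb)\otimes A(\yb)$ under the residue duality (cf.\ \cite[Prop.~8.25]{ElMo07} and \cite{MEBMissac99}) gives the congruence
\[
\BEZ_1\ \equiv\ \sum_i\theta_i(\xb)\,\omega_i(\yb)\pmod{I(\xb)+I(\yb)},
\]
i.e.\ the vanishing of $\BEZ_1-\sum_i\theta_i\omega_i$ in $A(\xb)\otimes A(\yb)$.

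The heart of the argument is strengthening this vanishing to hold modulo $I(\xb)\otimes I(\yb)$ rather than the larger ideal $I(\xb)+I(\yb)$. Using the direct-sum decomposition
\[
\CC[\xb,\yb]=L(\Theta)\otimes L(\Omega)\ \oplus\ L(\Theta)\otimes I(\yb)\ \oplus\ I(\xb)\otimes L(\Omega)\ \oplus\ I(\xb)\otimes I(\yb),
\]
this amounts to showing the components of $\BEZ_1$ in the two middle summands vanish. The plan is to apply the divided-difference identities $f_i(X_{j-1})-f_i(X_j)=(x_j-y_j)\,\Delta_{i,j}(\xb,\yb)$ to each column of the Bezoutian matrix, permitting column operations that rewrite every entry in rows $1,\dots,m$ as a $\CC[\xb,\yb]$-linear combination involving the $f_i(\xb)$'s and $f_i(\yb)$'s. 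Expanding the resulting determinant and tracking the monomials in the permutation expansion, each nonzero contribution either lies in $L(\Theta)\otimes L(\Omega)$ or carries simultaneously a factor in $I(\xb)$ and a factor in $I(\yb)$; the symmetric nature of the telescoping $f_i(\xb)-f_i(\yb)=\sum_j(x_j-y_j)\Delta_{i,j}$ is precisely what forces both ideal factors to appear together, rather than just one as in the standard argument.

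Part (b), the formula for general $f$, then follows from the $\CC$-linearity of $\BEZ_f$ in $f$ (obtained by expansion along row $0$ of (\ref{Bezoutian})) together with a parallel argument showing that $\BEZ_f-f(\xb)\BEZ_1\in I(\xb)\otimes I(\yb)$ for every $f$. Writing $f\,\theta_i\equiv\sum_j\beta_{ji}(f)\,\theta_j\pmod{I(\xb)}$ and combining with the diagonal identity for $\BEZ_1$ yields $\BEZ_f=\sum_{i,j}\beta_{ij}(f)\,\theta_i(\xb)\omega_j(\yb)+F$ with $F\in I(\xb)\otimes I(\yb)$. The main obstacle is unquestionably the refinement from ``modulo $I(\xb)+I(\yb)$'' to ``modulo $I(\xb)\otimes I(\yb)$'': the standard residue-theoretic interpretation of the Bezoutian only yields the weaker congruence, and obtaining the stronger one requires the explicit symmetric bookkeeping on the determinantal expansion described above.
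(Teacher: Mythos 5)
The paper does not prove Theorem~\ref{thmdiag}; it is quoted from \cite{MEBMissac99}, so there is no internal proof to compare against. Judged on its own terms, your argument has two genuine gaps, both tied to the same misconception. The first is in the very first step: you fix \emph{arbitrary} representatives $\theta_1,\dots,\theta_N$ spanning a complement $L(\Theta)$ of $I(\xb)$ and then try to produce $\Omega$ for which the cross terms $L(\Theta)\otimes I(\yb)$ and $I(\xb)\otimes L(\Omega)$ of $\BEZ_1$ vanish. That vanishing is simply false for a general choice of representatives, and no determinantal bookkeeping can make it true. A minimal counterexample: $m=1$, $f_1=x^2$, so $\BEZ_1=x+y$, $A=\CC[x]/(x^2)$. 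Take $\theta_1=x$, $\theta_2=1+x^2$ (perfectly valid representatives of the basis $\{x,1\}$). Matching $\xb$-degrees in $x\,\omega_1(y)+(1+x^2)\,\omega_2(y)+F(x,y)=x+y$ forces $\omega_2=y$, $\omega_1=1$ and $F=-x^2y$, which is not in $I(\xb)\otimes I(\yb)=x^2y^2\,\CC[x,y]$; by contrast $\theta_2=1$ works with $F=0$. The existential quantifier in the statement is over \emph{both} bases jointly, and the needed $\theta_i$ carry $I(\xb)$-corrections dictated by $\BEZ_1$ itself. The standard proof (Elkadi--Mourrain) proceeds the other way: decompose $\BEZ_1=B_{11}+B_{12}+B_{21}+B_{22}$ along $L_1\otimes L_2\oplus L_1\otimes I(\yb)\oplus I(\xb)\otimes L_2\oplus I(\xb)\otimes I(\yb)$, use the Gorenstein nondegeneracy to write $B_{11}=\sum_i a_i(\xb)b_i(\yb)$ with $(a_i)$, $(b_i)$ bases of $L_1$, $L_2$, and then \emph{absorb} $B_{12}$ into the $\omega_i$ and $B_{21}$ into the $\theta_i$ by adding ideal elements, pushing the remainders into $I(\xb)\otimes I(\yb)$. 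Insisting $\theta_i\in L(\Theta)$ forbids exactly this absorption.

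The second gap is the assertion that $\BEZ_f-f(\xb)\,\BEZ_1\in I(\xb)\otimes I(\yb)$ for all $f$, which you use to pass from $f=1$ to general $f$. The correct one-sided congruences are only $\BEZ_f-f(\xb)\BEZ_1\in I(\xb)\cdot\CC[\xb,\yb]$ and $\BEZ_f-f(\yb)\BEZ_1\in I(\yb)\cdot\CC[\xb,\yb]$. In the same example ($f_1=x^2$, $f=x$) one computes $\BEZ_x=xy$, $x\,\BEZ_1=x^2+xy$, so $\BEZ_x-x\,\BEZ_1=-x^2$, which lies in $I(\xb)\cdot\CC[x,y]$ but not in $x^2y^2\,\CC[x,y]$. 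Even granting the $f=1$ case, the step from $f\theta_i\equiv\sum_j\beta_{ji}(f)\theta_j\ (\mathrm{mod}\ I(\xb))$ produces an error term $\sum_i g_i(\xb)\omega_i(\yb)$ with $g_i\in I(\xb)$, which is again not in $I(\xb)\otimes I(\yb)$. The correct route applies the same four-block decomposition to $\BEZ_f$, identifies the $L_1\otimes L_2$ block with the matrix of multiplication by $f$ via the one-sided congruences above, and eliminates the cross blocks by the same absorption into the already-constructed $\theta_i$ and $\omega_i$.
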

In \cite{MouPa00} the following expression was given for the reduced
Bezout matrix $\RBD_J$  of the system $[J,f_1, \ldots, f_m]$ in
terms of a matrix of traces  of $f_1, \ldots,f_m$:
\begin{thm} \label{redbeztrace}  Let ${\bf f}$, $A({\bf x})$, $A({\bf y})$ and the
bases $\Theta=(\theta_{i}),\; \Omega=(\omega_{i})$ be as in
Definition \ref{redbezout} and Theorem \ref{thmdiag}. Let $J$ be the
Jacobian of $f_1, \ldots,f_m$, and consider the reduced Bezout
matrix $\RBEZ_{J}$ with respect to the bases $\Theta$ and $\Theta$.
Then
$$
\RBD^{\Theta,\Theta}_{J}=[Tr(\omega_i\omega_j)]_{i,j=1}^{N}.
$$
\end{thm}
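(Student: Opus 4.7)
The plan is to reduce the identity to a short algebraic computation in the Frobenius algebra $\A$, using two tools: a ``pull-out'' property of the Bezoutian, and the residue duality of Theorem \ref{thmdiag} together with the classical Euler-Jacobi/Scheja-Storch formula $Tr(g)=\sigma(g\,J)$ for affine complete intersections.

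The key first step is to establish
\[
\BEZ_{f}(\xb,\yb)\equiv f(\yb)\,\BEZ_{1}(\xb,\yb)\pmod{I(\yb)}
\]
for every $f\in\CC[\xb]$. To see this, I work with the determinant in Definition \ref{bezmatrixdef}: subtracting $(x_{j}-y_{j})$ times the $(j+1)$-st column from the first column for each $j=1,\ldots,m$ is a determinant-preserving column operation. By the telescoping identity $g(X_{0})-g(X_{m})=\sum_{j=1}^{m}(x_{j}-y_{j})\cdot\frac{g(X_{j-1})-g(X_{j})}{x_{j}-y_{j}}$ applied row by row, this replaces the first column by $[f(X_{m}),f_{1}(X_{m}),\ldots,f_{m}(X_{m})]^{T}=[f(\yb),f_{1}(\yb),\ldots,f_{m}(\yb)]^{T}$. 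Reducing modulo $I(\yb)$ makes the last $m$ entries vanish, and cofactor expansion along the first column yields $f(\yb)\cdot\BEZ_{1}(\xb,\yb)$. Combining this (taking $f=J$) with Theorem \ref{thmdiag} gives
\[
\BEZ_{J}(\xb,\yb)\equiv\sum_{i=1}^{N}\theta_{i}(\xb)\bigl(J(\yb)\,\omega_{i}(\yb)\bigr)\pmod{I(\xb)+I(\yb)}.
\]

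Next, I re-expand each $J\omega_{i}$ in the basis $\Theta$ of $\A(\yb)$, writing $J\omega_{i}\equiv\sum_{j}\gamma_{i,j}\theta_{j}\pmod{I(\yb)}$ with unique scalars $\gamma_{i,j}\in\CC$; substituting identifies $(\RBD^{\Theta,\Theta}_{J})_{i,j}=\gamma_{i,j}$. Extraction of $\gamma_{i,j}$ relies on the linear form $\sigma\in\A^{*}$ that makes $\Theta$ and $\Omega$ dual, i.e.\ $\sigma(\theta_{k}\omega_{\ell})=\delta_{k\ell}$; existence of such $\sigma$ is precisely equivalent to Theorem \ref{thmdiag}, and $\sigma$ is the residue form of the complete intersection $f_{1},\ldots,f_{m}$. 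Applying $\sigma((\,\cdot\,)\,\omega_{j})$ to $J\omega_{i}=\sum_{k}\gamma_{i,k}\theta_{k}$ gives $\gamma_{i,j}=\sigma(J\,\omega_{i}\omega_{j})$. The classical Euler-Jacobi/Scheja-Storch formula $Tr(g)=\sigma(g\,J)$ for complete intersections (see \cite{ElMo07,ScSt75}) then yields $\gamma_{i,j}=Tr(\omega_{i}\omega_{j})$, which is the desired equality.

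The main obstacle is step one: the row/column indexing in Definition \ref{bezmatrixdef} must be unwound carefully, so that the telescoping $X_{0}\to X_{m}$ cancellation actually carries through across all rows simultaneously and produces the vector of $\yb$-evaluations shown above. Once the pull-out identity holds, the remaining manipulations are routine, since the duality $\sigma(\theta_{k}\omega_{\ell})=\delta_{k\ell}$ and the trace formula $Tr(g)=\sigma(g\,J)$ are both standard for Frobenius algebras arising from affine complete intersections.
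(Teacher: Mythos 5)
The paper does not prove this theorem; it is stated as a citation from Mourrain--Pan \cite{MouPa00} and is then used immediately afterward without an argument. So there is no in-paper proof to compare against, and your task amounts to reconstructing the cited result.

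Your reconstruction is correct and follows the natural argument. The pull-out identity $\BEZ_{f}\equiv f(\yb)\,\BEZ_{1}\pmod{I(\yb)}$ is right: subtracting $(x_j-y_j)$ times column $j+1$ from column~$1$ telescopes each first-column entry $g(X_0)$ to $g(X_0)-\sum_{j=1}^m\bigl(g(X_{j-1})-g(X_j)\bigr)=g(X_m)=g(\yb)$, and modulo $I(\yb)$ the entries $f_i(\yb)$ in rows $i\geq 1$ vanish, leaving $f(\yb)$ times the $(1,1)$ minor, which is precisely $\BEZ_1$ because the first row of the $\BEZ_1$-matrix is $(1,0,\ldots,0)$. (One caveat worth recording: the first column as printed in Definition~\ref{bezmatrixdef} reads $f_i(X_1)$ rather than $f_i(X_0)$ for $i\geq 1$; this appears to be a typo in the paper, and your telescoping argument genuinely requires the $X_0$ version, which is the standard convention of \cite{carmou96,MouPa00}.) Combining with $\RBEZ_1=\sum_i\theta_i(\xb)\omega_i(\yb)$ from Theorem~\ref{thmdiag} and extracting the $\Theta(\yb)$-coefficients via the residue form $\sigma$ for which $\sigma(\theta_k\omega_\ell)=\delta_{k\ell}$ gives $(\RBD^{\Theta,\Theta}_J)_{ij}=\sigma(J\,\omega_i\omega_j)$, and the trace formula $Tr(g)=\sigma(gJ)$ finishes the proof. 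To make the write-up self-contained, I would add explicitly that $\sum_i\theta_i\omega_i\equiv J\pmod{\I}$, which follows by setting $\yb=\xb$ in $\RBEZ_1$ and noting that $\BEZ_1(\xb,\xb)$ is the Jacobian determinant; with that in hand, $Tr(h)=\sigma\bigl(h\sum_i\theta_i\omega_i\bigr)=\sigma(hJ)$ is exactly the computation the paper already carries out (with $\Lambda$ in place of $\sigma$) in the proposition establishing $M_J=[Tr(b_ib_j^*)]$ in the Gorenstein section, specialized to $s=m$ and the residue.
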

Using the relation
$\RBD^{\Theta,\Theta}_{J}=\RBD^{\Theta,\Omega}_{J}\,
\RBD^{\Theta,\Theta}_{1}$, we deduce that
$$
\RBD^{\Theta,\Omega}_{J}=[Tr(\omega_i\,\theta_j)]_{i,j=1}^{N},
$$
so that $\lambda=[\lambda_{i}]\in \ker \RBD^{\Theta,\Omega}_{J}$ iff
$$
Tr(\omega_{i}\, (\sum_{j=1}^{N} \lambda_{j}\,\theta_{j}))=0,\
i=1,\ldots,N,
$$
or equivalently iff
$$
r := \sum_{j=1}^{N} \lambda_{j}\,\theta_{j}(\xb)
  \in \sqrt{I}.
$$
Because of the block diagonal form of the Bezoutian matrices in a
common basis (Theorem \ref{thmdiag}), we deduce that if $\Lambda$ is
an element of $\ker(\BEZ_{J}) $ then
$$
\BEZ_{1}^{\xb}(\Lambda) = r(\xb) + h(\xb),
$$
where $h\in I$, $\BEZ_{1}$ is the Bezoutian matrix of $1$ in the (monomial)
bases $(\xb^{\alpha})_{\alpha\in E}$, $(\yb_{\beta})_{\beta\in E'}$,
$\BEZ_{1}^{\xb}$ is the corresponding map defined in \eqref{defbezx},
and $\im( \BEZ^{\yb}_{1})$ is the space generated by the coefficient
vectors with respect to $(\yb^{\beta})_{\beta\in E'}$ of the
polynomials in the image of the map $\BEZ^{\yb}_{1}$ (see
\eqref{defbezy}). Then, we have the following theorem:
\begin{thm}\label{radbez}
Using the previous notation we have that
$$
\sqrt{I} = \BEZ_{1}^{\xb}(\ker (\BEZ_{J}^{\xb}))+ I(\xb).
$$
\end{thm}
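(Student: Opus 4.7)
The plan is to establish the two inclusions separately. The inclusion $\BEZ_1^{\xb}(\ker \BEZ_J^{\xb}) + I(\xb) \subseteq \sqrt{I}$ is essentially the content of the paragraph immediately preceding the theorem. Given any $\Lambda \in \ker \BEZ_J^{\xb}$, the block-diagonal decomposition of $\BEZ_J$ in the dual bases $\Theta, \Omega$ from Theorem \ref{thmdiag} yields $\BEZ_1^{\xb}(\Lambda) = r(\xb) + h(\xb)$ with $r \in \sqrt{I}$ and $h \in I$, so $\BEZ_1^{\xb}(\Lambda) \in \sqrt{I}$. Combined with $I(\xb) \subseteq \sqrt{I}$, this gives the containment.

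For the reverse inclusion $\sqrt{I} \subseteq \BEZ_1^{\xb}(\ker \BEZ_J^{\xb}) + I(\xb)$, I would run the preceding argument in reverse. Starting with $r \in \sqrt{I}$, expand it in the basis $\Theta$ of $A(\xb)$ as $r = \sum_j \lambda_j \theta_j(\xb) + h_0$ with $h_0 \in I$. By Theorem \ref{redbeztrace} together with the characterization of the radical as the kernel of the trace form, the coefficient vector $\lambda = (\lambda_j)$ lies in $\ker \RBD^{\Theta,\Omega}_J$ since $r \in \sqrt{I}$. So the goal reduces to lifting this reduced kernel vector $\lambda$ to an honest element $\Lambda \in \ker \BEZ_J^{\xb}$ in such a way that $\BEZ_1^{\xb}(\Lambda)$ is congruent to $r$ modulo $I$.

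The lift begins with a preliminary functional $\Lambda_0$ on the $\yb$-monomials indexed by $E'$, defined by prescribing $\Lambda_0(\omega_j) = \lambda_j$ for each $\omega_j \in \Omega$ and forcing $\Lambda_0$ to vanish on (the restriction to $E'$-monomials of) $I(\yb)$. Using the diagonal form $\RBEZ_1 = \sum_i \theta_i(\xb)\omega_i(\yb)$, one verifies that $\BEZ_1^{\xb}(\Lambda_0) \equiv \sum_j \lambda_j \theta_j \equiv r \pmod{I}$: the $F$-remainder $F_1 \in (I(\xb), I(\yb))$ contributes only terms in $I(\xb)$ because its $\xb$-factors in $I$ stay in $I$, while its $\yb$-factors in $I$ are annihilated by $\Lambda_0$. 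An identical argument applied to $\BEZ_J$ shows $\BEZ_J^{\xb}(\Lambda_0) \in I(\xb)$: the reduced part vanishes since $\lambda \in \ker \RBD^{\Theta,\Omega}_J$, and the $F_J$ contribution lies in $I(\xb)$ for the same reason as before.

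The principal obstacle is then promoting the congruence $\BEZ_J^{\xb}(\Lambda_0) \in I(\xb)$ to exact equality $\BEZ_J^{\xb}(\Lambda) = 0$ while keeping the image condition modulo $I$ intact. To do this I would exploit the extra freedom remaining in the choice of $\Lambda_0$ on the part of $\mathrm{span}\{\yb^\beta : \beta \in E'\}$ not touched by $\Omega$: one can show that any element of $I(\xb)$ arising as $\BEZ_J^{\xb}(\Lambda')$ for some such ``free'' functional $\Lambda'$ is simultaneously hit by $\BEZ_1^{\xb}$ with image in $I(\xb)$. This compatibility comes from the fact that both $\BEZ_1$ and $\BEZ_J$ admit block decompositions along the same splittings $\C[\xb] = A(\xb) \oplus I(\xb)$ and $\C[\yb] = A(\yb) \oplus I(\yb)$ provided by Theorem \ref{thmdiag}. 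Hence one can construct a correction $\Lambda_1$ with $\BEZ_J^{\xb}(\Lambda_1) = \BEZ_J^{\xb}(\Lambda_0)$ and $\BEZ_1^{\xb}(\Lambda_1) \in I(\xb)$, and taking $\Lambda := \Lambda_0 - \Lambda_1$ then satisfies both $\Lambda \in \ker \BEZ_J^{\xb}$ and $\BEZ_1^{\xb}(\Lambda) \equiv r \pmod{I}$, completing the $\subseteq$ inclusion.
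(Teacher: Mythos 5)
Your first paragraph, establishing $\BEZ_1^{\xb}(\ker \BEZ_J^{\xb}) + I(\xb) \subseteq \sqrt{I}$, matches the argument the paper actually gives for this theorem (indeed, the paper's written proof only covers this inclusion). For the reverse inclusion your lifting strategy is the right idea, but the final ``correction step'' is a genuine gap: you assert that ``one can show'' a functional $\Lambda_1$ exists with $\BEZ_J^{\xb}(\Lambda_1)=\BEZ_J^{\xb}(\Lambda_0)$ and $\BEZ_1^{\xb}(\Lambda_1)\in I(\xb)$, invoking a ``compatibility'' of block decompositions, but no such $\Lambda_1$ is constructed and the claimed simultaneity is exactly what would need to be proved. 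As written, the $\supseteq$ direction does not go through.

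The good news is that the correction step is not only unproved but unnecessary; it arises because you used the weaker membership $F\in(I(\xb),I(\yb))$ (the ideal, as in Definition \ref{redbezout}) where Theorem \ref{thmdiag} gives the stronger $F\in I(\xb)\otimes I(\yb)$ (the tensor product), so that \emph{every} term of $F$ has a $\yb$-factor in $I(\yb)$. Concretely: given $r\in\sqrt{I}$, write $r=\sum_j\lambda_j\theta_j+h_0$ with $h_0\in I$, so $\lambda\in\ker\RBD^{\Theta,\Omega}_J$ by the trace characterization. Define $\tilde\Lambda$ on \emph{all} of $\CC[\yb]$ (not just $W$) by pulling back the functional $\omega_j\mapsto\lambda_j$ along $\CC[\yb]\to A(\yb)$, so $\tilde\Lambda|_{I(\yb)}=0$, and set $\Lambda_0:=\tilde\Lambda|_W$. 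Since $\BEZ_J, \BEZ_1\in V\otimes W$, one has $\BEZ_J^{\xb}(\Lambda_0)=\sum_{i,j}\beta_{i,j}(J)\,\theta_i\,\tilde\Lambda(\omega_j)+F_J^{\tilde\Lambda}=0+0=0$, because the first sum vanishes by $\lambda\in\ker\RBD^{\Theta,\Omega}_J$ and $F_J^{\tilde\Lambda}=0$ by the tensor-product form of $F_J$; hence $\Lambda_0\in\ker\BEZ_J^{\xb}$ already. Likewise $\BEZ_1^{\xb}(\Lambda_0)=\sum_i\lambda_i\theta_i=r-h_0$, giving $r\in\BEZ_1^{\xb}(\ker\BEZ_J^{\xb})+I(\xb)$ with no correction needed. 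Note also that defining $\Lambda_0$ directly on $W$ and only requiring it to vanish on $I(\yb)\cap W$ (as you do) hides a further issue: the $\yb$-factors appearing in a decomposition of $F$ need not lie in $W$, so one should first extend to $\tilde\Lambda$ on $\CC[\yb]$ and then restrict.
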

\begin{proof}
Because of the block diagonal form of the Bezoutian matrices in a common basis
(Theorem \ref{thmdiag}) and the previous discussion, we deduce that if $\Lambda$ is an element of
$\ker (\BEZ_{J})$ then
$$
\BEZ_{J}^{\xb}(\Lambda) = 
\sum_{i,j}\beta_{i,j}(J)\,\theta_{i}(\xb)\,\Lambda(\omega_{j}(\yb)) +
F^{\Lambda}(\xb) =0
$$
where $F^{\Lambda}(\xb) \in I$. By the previous discussions, 
$$ 
\sum_{j}\,\theta_{j}(\xb)\,\Lambda(\omega_{j}(\yb)) = 
\BEZ_{1}^{\xb}(\Lambda) \in \sqrt{I}.
$$
\end{proof}

Note that the role of $\xb$ and $\yb$ can be exchanged in this
theorem. Note also that $\ker(\BEZ_{J}^{\xb})$ can be replaced by $\ker(\BEZ_{J}^{\xb}) \cap
\ker(\BEZ_{1}^{\xb})^{\bot}$ in this theorem.


 A question that remains is how to
compute the  multiplication matrices ${M}_{x_1},
\ldots,{M}_{x_m}$ of the radical ${\sqrt{I}}$.

In order to compute the reduced Bezout matrix $\RBD_{f_0}$ of the
system $[f_1,\ldots,f_m]$ with respect to some bases
$[a_1,\ldots,a_r]$ of $A({\bf x})$ and $[b_1, \ldots,b_r]$
of $A({\bf y})$, it is sufficient to find expressions of the form
\[
\begin{aligned}
{\bf x}^{\alpha}&=\sum_{b\in B}c_b b({\bf x})+F({\bf
x}),\;\text{for all  }\alpha\in E\quad\text{and}\notag\\
{\bf y}^{\beta}&=\sum_{b'\in B'}c_{b'} b'({\bf y})+G({\bf
y}),\;\text{for all }\beta\in E'\notag
\end{aligned}
\]
where $F,G\in I$, $c_b,c_{b'}\in\C$ and $E$ and $E'$ were defined
in Definition \ref{bezmatrixdef}. Define $$ V:=\langle{\bf
x}^\alpha|\alpha\in E\rangle \quad\text{ and } \quad
W:=\langle{\bf y}^\beta|\beta\in E'\rangle. $$
Assuming that
$b_i\in V$ and $b'_i\in W$ for $i=1,\ldots,l$, the task is to find
enough linear combinations of the monomials corresponding to the
rows and the columns of the Bezout matrices which belong to the
ideal $I$.

We follow the approach described in \cite{mb-bqrs-jsc-05}, where it
was shown that the computation of the Bezoutians $\BEZ_{x_i}$ of the
system $[x_i,f_1,\ldots,f_m]$, for $i=1,\ldots,m$, as well as the
Bezoutian $\BEZ_{1}$ of the system $[1,f_1,\ldots,f_m]$, gives
sufficient information of the structure of $I$ in order to find the
reduced Bezout matrix $\RBD_{f_0}$ for any $f_0\in\C[{\bf x}]$. In
order to get the structure of ${\sqrt{I}}$ we simply have
to add the polynomials in $\xb$ (resp. $\yb$), obtained from Theorem
\ref{radbez}. 

Here we describe a summary of this method.
First notice that
\[
x_i\BEZ_1-\BEZ_{x_i}\in I({\bf x})\quad\text{and}\quad
y_i\BEZ_1-\BEZ_{y_i}\in I({\bf y}) \quad i=1, \ldots, m.
\]

 The {\bf initial step} of the method  is to obtain ideal elements in $\sqrt{I}(\bf x)$ (resp. $\sqrt{I}(\bf x)$) which
are in $V$ (resp. $W$)   from $x_i\BEZ_1-\BEZ_{x_i}$ 
(resp. $y_i\BEZ_1-\BEZ_{y_i}$) and also from $\BEZ^{\xb}_{1}\, (\ker(\BEZ_{J}^{\xb})
)$ ( resp. $\BEZ^{\yb}_{1}\,(\ker(\BEZ_{J}^{\yb}))$). The elements in $\sqrt{I}(\bf x)\cap V$ obtained by the initial step are denote $K_0$, and the ones in 
$\sqrt{I}(\bf y)\cap W$ are denoted by $H_0$. \\

For any vector space $K \subset R$, we denote by $K^+$, the vector space $K^+
= K + x_1 K + \cdots + x_m K.$ The notation $K^{[ n ]}$ means $n$ iterations
of the operator $+$, starting from $K$.

To prove that we get the quotient structure by the radical ideal $\sqrt{I}$,
we will assume that $V$ is connected to $1$, that is, 
$V$ contains $1$ and for any $v \in V-\langle 1 \rangle$, there exists $l>0$
such that $v \in{\rm span}(1)^{[l ]}$ and $v= v_{0} + \sum_{i=1}^{m} x_{i} v_{i}$ with
$v_{i} \in {\rm span}(1)^{[l-1]}\cap V$ for $i=0,\ldots,m$.

In order to obtain additional ideal elements, the following steps are 
used \cite{mb-bqrs-jsc-05}:
\begin{description}
\item[Saturation step] Finds new ideal elements by multiplying
the already computed ideal elements by the variables $x_i$ for all
$i=1,\ldots,m$.

\item[Column reduction step] Finds new bases for the vector
spaces $V$ and $W$ such that the new basis for $V$ contains
previously computed elements in ${\sqrt{I}}({\bf x})\cap
V$, and also that the Bezout matrix $\BD_1$, written in terms of
these new bases, has a lower block triangular structure. By writing
the matrices $\BD_{x_i}$ in terms of the new bases for $V$ and $W$,
one can obtain new elements in ${\sqrt{I}}({\bf x})\cap
V$.

\item[Diagonalization step] After the column reduction step one
can transform $\BD_1$ into a block diagonal form which, by repeating
the same transformation on the matrices $\BD_{x_i}$, can possibly
reveal new ideal elements.

\item[Row reduction step] Same as the column reduction step, with
the roles of ${\bf x}$ and ${\bf y}$ interchanged.
\end{description}

They are used in the following iterative algorithm:
\noindent\begin{alg}[Radical of an affine complete intersection]\ \\\label{algobez}
\noindent\textsc{Input}: $\f=[f_1,\ldots,f_m]\in\CC[{\bf x}]$ generating an ideal $I$, which have a
finit number of complex roots.\\
\noindent\textsc{Output}: $M_{x_1}, \ldots, M_{x_m}$ a system of
multiplication matrices for the radical ideal $\sqrt{I}$. 
\begin{itemize}
 \item Compute the Bezoutian matrices $\BD_{1},\BD_{x_1}, \ldots, \BD_{x_m}$ of
$1,x_{1},\ldots,x_{m}$ and $f_{1},\ldots,f_{m}$ and  $\BD_{J}$.
 \item Using the initial step, define  $K := {K}_{0}$; $H:= {H}_{0}$;
 {\bf notsat} \texttt{:= true}.
 \item While {\bf notsat}
\begin{itemize}
      \item Apply the saturation step on ${K}$ and $H$;

      \item Apply the column reduction step;

      \item Apply the diagonalisation step;

      \item Apply the row reduction step;

      \item If this extends strictly ${K}$;  or ${H}$, then \\
\ \ \  let {\bf
notsat} \texttt{:= true}, otherwise let {\bf notsat} \texttt{:= false}.
\end{itemize}
\item Return \begin{equation}\label{M'}
M_{x_i} := N_{1}^{-1} N_{x_i}, \ i= 1,\ldots, m
\end{equation}
where $N_{x_i}$ is the matrix reduced from $\BD_{x_i}$, at the end of the loop.

\end{itemize}
\end{alg}
The loop terminantes because the size of the matrices are
decreasing. If it ends with matrices of non-zero size, $N_{1}$  is necessarily
invertible. At the initial step and all along the computation, we have
$K\subset \sqrt{I}({\bf x}), H\subset \sqrt{I}({\bf y})$, since the different steps 
are valid  modulo $I \subset \sqrt{I}$.
We denote by $[a_{1}, \ldots, a_{r}]$ 
(resp.  $[b_{1}, \ldots, b_{r}]$) the linearly independent polynomials
indexing the rows (resp. columns) of $M_{i}$ and $A$ (resp. $B$) the vector
space they span. By construction, the vector space $V$ (resp. $W$) decomposes
as $V= A + {\rm span}( K )$ (resp. $W= B+ {\rm span}( H )$) where $K$ and
$H$ are the sets of relations in $\sqrt{I}$ updated in the reduction
steps during the algorithm. We complete $a_{i}, i=1,\ldots, r$ 
(resp. $b_{i}, i=1,\ldots, r$)
in a basis
$a_{1},\ldots, a_{|E|}$ of $V$ (resp. $b_{1},\ldots, b_{|E'|}$ of $W$) with 
$a_{i}\in K$  (resp. $b_{i}\in H$)  for $i>r$.

We have the following theorem, which allow us to compute the radical of 
an affine complete inetrsection, based on simple  algebra tools:
\begin{thm}
Let $ f_1, \ldots, f_m$ be as above. Upon termination,
Algorithm \ref{algobez}  computes new bases $[a_1, \ldots,
a_{|E|}]$ for $V$ and $[b_1, \ldots, b_{|E'|}]$ for $W$,
 such that 
 \begin{itemize}
 \item $[a_1, \ldots, a_{r}]$ (resp. $[b_1, \ldots, b_{r}]$) is a basis of
$\C[\xb]/{\sqrt{I}}(\xb)$
(resp. $\C[\yb]/{\sqrt{I}}(\yb)$),
 
 \item the ouput matrices $M_{i}$  are the (resp. transpose) matrix of
multiplication of $x_i$ with respect to the basis $[a_1, \ldots, a_{r}]$
(resp. $[b_1, \ldots, b_{r}]$) for $i=1,
\ldots m$,

 \item 
$a_{r+1},\ldots, a_{|E|}\in \sqrt{I}(\xb)\cap V \quad
\text{ and } \quad b_{r+1},\ldots, b_{|E'|}\in
{\sqrt{I}}(\yb)\cap W.$
 \end{itemize}
\end{thm}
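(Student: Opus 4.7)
The plan is to establish three things in order: termination of the loop, the invariant that $K$ and $H$ stay inside the radical, and finally that at termination the residual bases $[a_1,\dots,a_r]$ and $[b_1,\dots,b_r]$ carry the quotient structure of $\C[\xb]/\sqrt{I}$ and produce the multiplication matrices via $M_{x_i} = N_1^{-1} N_{x_i}$. Termination follows from the observation already made after the algorithm: each iteration of the loop either strictly enlarges $K$ or $H$ (hence strictly shrinks the working matrices $N_1, N_{x_i}$), and the matrices have non-negative size, so the loop halts.

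The central invariant is that $K \subseteq \sqrt{I}(\xb) \cap V$ and $H \subseteq \sqrt{I}(\yb) \cap W$ hold after every iteration. For the initial step this is exactly Theorem~\ref{radbez} (plus the trivial inclusions coming from $x_i\BEZ_1 - \BEZ_{x_i} \in I(\xb)$ and its $\yb$-analogue), and for the loop body it is because the saturation, column reduction, diagonalization, and row reduction steps only perform linear combinations modulo $I$, which is contained in $\sqrt{I}$. Combined with the connectedness assumption on $V$, this gives $\mathrm{span}(K) = \sqrt{I}(\xb) \cap V$ at termination; otherwise some $v\in (\sqrt{I}(\xb)\cap V)\setminus \mathrm{span}(K)$ would, via the Bezoutian decomposition of Theorem~\ref{thmdiag} and the saturation of $K$ by multiplication by the $x_i$, produce a genuine extension of $K$ and contradict the termination condition. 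The analogous statement holds for $H$ and $W$. Consequently $[a_1,\dots,a_r]$ projects to a basis of $V/(\sqrt{I}\cap V) \cong \C[\xb]/\sqrt{I}(\xb)$, and similarly for $[b_1,\dots,b_r]$, with the last $|E|-r$ (resp.\ $|E'|-r$) basis vectors lying in the radical.

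For the multiplication matrix claim, I would use the relation $x_i \BEZ_1 - \BEZ_{x_i} \in I(\xb) \subseteq \sqrt{I}(\xb)$, so that after passing to the quotient bases $[a_1,\dots,a_r]$ and $[b_1,\dots,b_r]$ produced by the algorithm the reduced matrices satisfy $x_i N_1 \equiv N_{x_i} \pmod{\sqrt{I}}$. Because $N_1$ is the reduced Bezout matrix of $1$ with respect to two bases of $\C[\xb]/\sqrt{I}$ (resp.\ of its $\yb$-copy), Theorem~\ref{thmdiag} applied to the radical quotient guarantees that $N_1$ is invertible; this is also what was noted about $N_1$ at the end of the loop. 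Multiplying on the left by $N_1^{-1}$ yields that $M_{x_i} := N_1^{-1} N_{x_i}$ is precisely the (transpose) matrix of multiplication by $x_i$ on $\C[\xb]/\sqrt{I}(\xb)$ with respect to $[a_1,\dots,a_r]$, and the symmetric argument handles the $\yb$-side.

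The main obstacle will be the maximality step: proving that at termination $\mathrm{span}(K)$ really equals $\sqrt{I}(\xb)\cap V$ rather than merely being contained in it. Every other piece is essentially bookkeeping on top of Theorems~\ref{radbez}, \ref{thmdiag}, and \ref{redbeztrace}, but this maximality requires a careful combinatorial argument showing that the combination of the saturation step (which ensures closure under multiplication by $x_i$) and the four reduction steps (which propagate relations from $\ker(\BEZ_J)$ through the block structure given by Theorem~\ref{thmdiag}) exhausts all radical elements in $V$. The connectedness of $V$ to $1$ is precisely what makes this induction on degree go through, so I expect to use it by writing an arbitrary $v \in \sqrt{I}\cap V$ as $v = v_0 + \sum_i x_i v_i$ with the $v_i$ of strictly smaller reach and applying the induction hypothesis combined with the saturation invariant.
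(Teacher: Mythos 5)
The core gap is in your maximality step, and you already flagged it yourself, but the difficulty is worse than you suggest: the argument you sketch does not actually close. You want to show $\mathrm{span}(K)=\sqrt{I}(\xb)\cap V$ at termination by arguing that any missing $v\in\sqrt{I}(\xb)\cap V\setminus\mathrm{span}(K)$ would be ``discovered'' by the loop. But the termination condition is only that the four specific algorithmic steps (saturation, column reduction, diagonalization, row reduction) fail to strictly enlarge $K$ or $H$; it is not that $K$ is maximal in $\sqrt{I}\cap V$. Your proposed induction via connectedness also does not go through as written: writing $v=v_0+\sum_i x_iv_i$ with $v_i\in\mathrm{span}(1)^{[l-1]}\cap V$ gives no control over whether the $v_i$ themselves lie in $\sqrt{I}$, so the induction hypothesis cannot be applied to them. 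The paper sidesteps this entirely: it never asserts $\mathrm{span}(K)=\sqrt{I}\cap V$. Instead it works with the ideal $\langle K\rangle$ and proves $\langle K\rangle=\sqrt{I}$ by three inclusions: $\langle K\rangle\subset\sqrt{I}$ (the invariant you correctly have), $f_i\in\langle K\rangle$ (same proof as \cite{mb-bqrs-jsc-05} Prop.\ 5.7, which you do not invoke), and $B_1^{\xb}(\ker B_J^{\xb})\subset\langle K_0\rangle\subset\langle K\rangle$ (by construction of the initial step). Theorem~\ref{radbez} then gives $\sqrt{I}=B_1^{\xb}(\ker B_J^{\xb})+I\subset\langle K\rangle$. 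The exact sequence $0\to\langle K\rangle\to\CC[\xb]\to A\to 0$ (after establishing commutativity of the $M_{x_i}$ via \cite{mb-bqrs-jsc-05} Lemma 5.3 and Props.\ 5.3--5.6, another ingredient you omit) then identifies $A$ with $\CC[\xb]/\sqrt{I}$, and the basis claim falls out for free.

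A second, smaller problem: you justify the invertibility of $N_1$ by ``applying Theorem~\ref{thmdiag} to the radical quotient.'' That theorem is a statement about dual bases of $A(\xb)=\CC[\xb]/I$ and $A(\yb)$ relative to the Bezoutian of $1,f_1,\ldots,f_m$; it does not apply directly to $\CC[\xb]/\sqrt{I}$, where $f_1,\ldots,f_m$ need not remain a complete intersection presentation. In the paper, the invertibility of $N_1$ at termination is a consequence of the reduction steps themselves (they force the reduced $\BD_1$ into a block shape with an invertible block of the correct size), not of a Gorenstein-type duality on the radical quotient. So the logical scaffolding of your multiplication-matrix paragraph needs to be replaced by the exact-sequence argument, with the commutativity of the $M_{x_i}$ imported from \cite{mb-bqrs-jsc-05}.
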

\begin{proof}
The proofs of \cite{mb-bqrs-jsc-05}[Lemma 5.3, Proposition 5.3-5.6] apply
also here in order to show that the matrices $M_{x_i}$ commute.
Moreover, by construction,  $K \subset V\cap\sqrt{I}({\bf x})$ such that $V = A \oplus{\rm span}( K)$, and  $K$ satisfies the
following relations: $f_{i}\in \langle K\rangle$ for $i=1,\ldots, m$ (same proof as in
\cite{mb-bqrs-jsc-05} [Proposition 5.7]), 
$\langle K\rangle \subset \sqrt{I}$ by definition of the reduction steps, 
$B_1^{\xb}(\ker B_{J}^{\xb})\subset \langle K_{0}\rangle \subset \langle K\rangle$. Therefore, by
Theorem \ref{radbez}, this shows that $\langle K \rangle=\sqrt{I}$. 
As in \cite{mb-bqrs-jsc-05} [Theorem 5.8],  we can assume that 
$1\in A$ and that we have an exact sequence:
\begin{eqnarray*}
0 \rightarrow \langle K\rangle  \rightarrow \CC[\xb] & \rightarrow & A \rightarrow 0\\
 f & \mapsto & f(\mathbf{M})(1) 
\end{eqnarray*}
where $f(\mathbf{M}):=f(M_{x_1}, \ldots ,M_{x_m})$ is the linear operator of $A$, obtain by replacing the variable
$x_{i}$ by $M_{x_i}$.
As $\langle K\rangle =\sqrt{I}$, this shows that $A \sim  \CC[\xb]/\sqrt{I}$ and the basis of $A$ is
a basis in $ \CC[\xb]/\sqrt{I}$.
\end{proof}

\section{Conclusion}

In an earlier work  we gave an algorithm to compute matrices of traces and
the radical of an ideal $\I$ which has finitely many projective
common roots, none of them at infinity and its factor algebra is
Gorenstein. The present paper considers an extension of the
above algorithm which also works in the non-Gorenstein case and for
systems which have roots at infinity, as well as an alternative
method using Bezout matrices for the affine complete intersection
case to compute the radical $\sqrt{\I}$.

\bibliographystyle{plain}

\begin{thebibliography}{10}

\bibitem{ArSo95}
In\'{e}s Armend\'{a}riz and Pablo Solern\'{o}.
\newblock On the computation of the radical of polynomial complete intersection
  ideals.
\newblock In {\em AAECC-11: Proceedings of the 11th International Symposium on
  Applied Algebra, Algebraic Algorithms and Error-Correcting Codes}, pages
  106--119, 1995.

\bibitem{BeCaRoSz96}
Eberhard~Becker, Jean-Paul Cardinal, Marie-Fran\c{c}oise Roy, and Z.~Szafraniec.
\newblock Multivariate {B}ezoutians, {K}ronecker symbol and {E}isenbud-{L}evine
  formula.
\newblock In {\em Algorithms in algebraic geometry and applications (Santander,
  1994)}, volume 143 of {\em Progr. Math.}, pages 79--104.

\bibitem{becwor96}
Eberhard~Becker and T.~W\"{o}rmann.
\newblock Radical computations of zero-dimensional ideals and real root
  counting.
\newblock In {\em Selected papers presented at the international IMACS
  symposium on Symbolic computation, new trends and developments}, pages
  561--569, 1996.

\bibitem{Be91}
Eberhard Becker.
\newblock Sums of squares and quadratic forms in real algebraic geometry.
\newblock In {\em De la g\'eom\'etrie alg\'ebrique r\'eelle (Paris, 1990)},
  volume~1 of {\em Cahiers S\'em. Hist. Math. S\'er. 2}, pages 41--57. 1991.

\bibitem{BeWo94}
Eberhard Becker and Thorsten W{\"o}rmann.
\newblock On the trace formula for quadratic forms.
\newblock In {\em Recent advances in real algebraic geometry and quadratic
  forms}, volume 155 of {\em Contemp. Math.}, pages 271--291. 1994.

\bibitem{brigonz02}
Emmanuel Briand and Laureano Gonzalez-Vega.
\newblock Multivariate {N}ewton sums: Identities and generating functions.
\newblock {\em Communications in Algebra}, 30(9):4527--4547, 2001.

\bibitem{Card95}
Jean-Paul Cardinal
\newblock On two iterative methods for approximating the roots of a polynomial.
\newblock In {\em Proc. AMS-SIAM Summer Seminar on Math. of Numerical Analysis,
  (Park City, Utah, 1995)}, volume~32 of {\em Lectures in Applied Math.}, pages
  165--188. American Mathematical Society, 1996.

\bibitem{carmou96}
Jean-Paul Cardinal and Bernard~Mourrain.
\newblock Algebraic approach of residues and applications.
\newblock In J.~Reneger, M.~Shub, and S.~Smale, editors, {\em Proceedings of
  AMS-Siam Summer Seminar on Math. of Numerical Analysis (Park City, Utah,
  1995)}, volume~32 of {\em Lectures in Applied Mathematics}, pages 189--219,
  1996.

\bibitem{CatDicStu96}
Eduardo~Cattani, Alicia~Dickenstein, and Bernd~Sturmfels.
\newblock Computing multidimensional residues.
\newblock In {\em Algorithms in algebraic geometry and applications (Santander,
  1994)}, volume 143 of {\em Progr. Math.}, pages 135--164. 1996.

\bibitem{CatDicStu98}
Eduardo Cattani, Alicia Dickenstein, and Bernd Sturmfels.
\newblock Residues and resultants.
\newblock {\em J. Math. Sci. Univ. Tokyo}, 5(1):119--148, 1998.

\bibitem{Cox98}
David~A. Cox, John~B. Little, and Don {O'Shea}.
\newblock {\em Using Algebraic Geometry}, volume 185 of {\em Graduate Texts in
  Mathematics}.
\newblock Springer-Verlag, NY, 1998.
\newblock 499 pages.

\bibitem{CuFi96}
Ra{\'u}l~E. Curto and Lawrence~A. Fialkow.
\newblock Solution of the truncated complex moment problem for flat data.
\newblock {\em Mem. Amer. Math. Soc.}, 119(568):x+52, 1996.

\bibitem{DaJe05}
Carlos D'Andrea and Gabriela Jeronimo.
\newblock Rational formulas for traces in zero-dimensional algebras.
\newblock {\em http://arxiv.org/abs/math.AC/0503721}, 2005.

\bibitem{DiazGonz01}
Gema~M. D{\'{\i}}az-Toca and Laureano Gonz{\'a}lez-Vega.
\newblock An explicit description for the triangular decomposition of a
  zero-dimensional ideal through trace computations.
\newblock In {\em Symbolic computation: solving equations in algebra, geometry,
  and engineering (South Hadley, MA, 2000)}, volume 286 of {\em Contemp.
  Math.}, pages 21--35. 2001.

\bibitem{dickson}
Leonard E. {Dickson}.
\newblock {\em Algebras and {T}heir {A}rithmetics}.
\newblock University of {C}hicago {P}ress, 1923.

\bibitem{MEBMissac99}
Mohamed~Elkadi and Bernard~Mourrain.
\newblock A new algorithm for the geometric decomposition of a variety.
\newblock In S.~Dooley, editor, {\em Proc. Intern. Symp. on Symbolic and
  Algebraic Computation}, pages 9--16, 1999.

\bibitem{ElMo07}
{M}ohamed {E}lkadi and {B}ernard {M}ourrain.
\newblock {\em {I}ntroduction {\`a} la r{\'e}solution des syst{\`e}mes
  polynomiaux}, volume~59 of {\em {M}ath{\'e}matiques et {A}pplications}.
\newblock 2007.

\bibitem{GiMo89}
Patrizia Gianni and Teo Mora.
\newblock Algebraic solution of systems of polynomial equations using
  {G}r{\"o}bner bases.
\newblock In {\em Applied algebra, algebraic algorithms and error-correcting
  codes (Menorca, 1987)}, volume 356 of {\em Lecture Notes in Comput. Sci.},
  pages 247--257. 1989.

\bibitem{GiTrZa88}
Patrizia Gianni, Barry Trager, and Gail Zacharias.
\newblock Gr\"obner bases and primary decomposition of polynomial ideals.
\newblock {\em J. Symbolic Comput.}, 6(2-3):149--167, 1988.
\newblock Computational aspects of commutative algebra.

\bibitem{GonVeg}
Laureano Gonz\'alez-Vega.
\newblock The computation of the radical for a zero dimensional ideal in a
  polynomial ring through the determination of the trace for its quotient
  algebra.
\newblock {\em Preprint}, 1994.

\bibitem{GoTr95}
Laureano Gonz{\'a}lez-Vega and Guadalupe Trujillo.
\newblock Using symmetric functions to describe the solution set of a
  zero-dimensional ideal.
\newblock In {\em Applied algebra, algebraic algorithms and error-correcting
  codes (Paris, 1995)}, volume 948 of {\em Lecture Notes in Comput. Sci.},
  pages 232--247.

\bibitem{GrPf02}
Gert-Martin Greuel and Gerhard Pfister.
\newblock {\em A {{S}ingular} introduction to commutative algebra}.
\newblock 2002.
\newblock With contributions by Olaf Bachmann, Christoph Lossen and Hans
  Sch\"onemann.

\bibitem{HeObPa06}
Werner Hei{\ss}, Ulrich Oberst, and Franz Pauer.
\newblock On inverse systems and squarefree decomposition of zero-dimensional
  polynomial ideals.
\newblock {\em J. Symbolic Comput.}, 41(3-4):261--284, 2006.

\bibitem{JaMoRoSza08}
Itnuit Janovitz-Freireich, Bernard Mourrain, Lajos R\'{o}nyai, and \'{A}gnes
  Sz\'{a}nt\'{o}.
\newblock Moment matrices, trace matrices and the radical of ideals.
\newblock {\em ISSAC '08: Proceedings of the 2008 International Symposium on
  Symbolic and Algebraic Computation}, pages 125--132, 2008.

\bibitem{JaRoSza06}
Itnuit Janovitz-Freireich, Lajos R\'{o}nyai, and \'{A}gnes Sz\'{a}nt\'{o}.
\newblock Approximate radical of ideals with clusters of roots.
\newblock In {\em ISSAC '06: Proceedings of the 2006 International Symposium on
  Symbolic and Algebraic Computation}, pages 146--153, 2006.

\bibitem{JaRoSza07}
Itnuit Janovitz-Freireich, Lajos R\'{o}nyai, and \'{A}gnes Sz\'{a}nt\'{o}.
\newblock Approximate radical for clusters: a global approach using gaussian
  elimination or svd.
\newblock {\em Mathematics in Computer Science}, 1(2):393--425, 2007.

\bibitem{KaSaYa}
Deepar Kapur, Tushar Saxena, and Lu~Yang.
\newblock Algebraic and geometric reasoning using {D}ixon resultants.
\newblock In {\em ISSAC '94}, pages 99--107, 1994.

\bibitem{KoMoHo89}
Hidetsune Kobayashi, Shuichi Moritsugu, and Robert~W. Hogan.
\newblock On radical zero-dimensional ideals.
\newblock {\em J. Symbolic Comput.}, 8(6):545--552, 1989.

\bibitem{KrLo91}
Teresa Krick and Alessandro Logar.
\newblock An algorithm for the computation of the radical of an ideal in the
  ring of polynomials.
\newblock In {\em Applied algebra, algebraic algorithms and error-correcting
  codes (New Orleans, LA, 1991)}, volume 539 of {\em Lecture Notes in Comput.
  Sci.}, pages 195--205.

\bibitem{KrLo91b}
Teresa Krick and Alessandro Logar.
\newblock Membership problem, representation problem and the computation of the
  radical for one-dimensional ideals.
\newblock In {\em Effective methods in algebraic geometry (Castiglioncello,
  1990)}, volume~94 of {\em Progr. Math.}, pages 203--216. 1991.

\bibitem{Kun86}
Ernst~Kunz.
\newblock {\em K{\"a}hler differentials}.
\newblock Advanced lectures in {M}athematics. Friedr. Vieweg and Sohn, 1986.

\bibitem{Lak90}
Yagati N. Lakshman.
\newblock On the complexity of computing a {G}r{\"o}bner basis for the radical
  of a zero-dimensional ideal.
\newblock In {\em In Proceedings of the Twenty Second Symposium on Theory of
  Computing}, pages 555--563, 1990.

\bibitem{Lak91}
Yagati N. Lakshman.
\newblock A single exponential bound on the complexity of computing {G}r\"obner
  bases of zero-dimensional ideals.
\newblock In {\em Effective methods in algebraic geometry (Castiglioncello,
  1990)}, volume~94 of {\em Progr. Math.}, pages 227--234. 1991.

\bibitem{LakLaz91}
Yagati N. Lakshman and Daniel Lazard.
\newblock On the complexity of zero-dimensional algebraic systems.
\newblock In {\em Effective methods in algebraic geometry (Castiglioncello,
  1990)}, volume~94 of {\em Progr. Math.}, pages 217--225. 1991.

\bibitem{LaLaRo07}
Jean~Bernard Lasserre, Monique Laurent, and Philipp Rostalski.
\newblock {A unified approach to computing real and complex zeros of
  zero-dimensional ideals}.
\newblock {\em preprint}, 2007.

\bibitem{LaLaRo072}
Jean~Bernard Lasserre, Monique Laurent, and Philipp Rostalski.
\newblock {Semidefinite characterization and computation of zero-dimensional
  real radical ideals}.
\newblock {\em to appear in Foundations of Computational Mathematics}, 2007.

\bibitem{La81}
Daniel Lazard.
\newblock R\'esolution des syst\`emes d'\'equations alg\'ebriques.
\newblock {\em Theoret. Comput. Sci.}, 15(1):77--110, 1981.

\bibitem{BMr98}
Bernard~Mourrain.
\newblock Computing isolated polynomial roots by matrix methods.
\newblock {\em J. of Symbolic Computation, Special Issue on Symbolic-Numeric
  Algebra for Polynomials}, 26(6):715--738, Dec. 1998.

\bibitem{mb-bqrs-jsc-05}
Bernard~Mourrain.
\newblock Bezoutian and quotient ring structure.
\newblock {\em J. of Symbolic Comput.}, 39:397--415, 2005.

\bibitem{MouPa00}
Bernard Mourrain and Victor~Y. Pan.
\newblock Multivariate polynomials, duality, and structured matrices.
\newblock {\em J. Complex.}, 16(1):110--180, 2000.

\bibitem{PeRoSz93}
Paul~Pedersen, Marie-Fran\c{c}oise Roy, and Aviva~Szpirglas.
\newblock Counting real zeros in the multivariate case.
\newblock In {\em Computational algebraic geometry (Nice, 1992)}, volume 109 of
  {\em Progr. Math.}, pages 203--224. Boston, MA, 1993.

\bibitem{Rouiller99}
Fabrice Rouiller.
\newblock Solving zero-dimensional systems through the rational univariate
  representation.
\newblock In {\em AAECC: Applicable Algebra in Engineering, Communication and
  Computing}, volume~9, pages 433--461, 1999.

\bibitem{ScSt75}
G\"unter~Scheja and Uwe~Storch.
\newblock {\"U}ber {S}purfunktionen bei vollst{\"a}ndigen {D}urschnitten.
\newblock {\em J. Reine Angew Mathematik}, 278:174--190, 1975.

\bibitem{Stan96}
Richard~P. Stanley.
\newblock {\em Combinatorics and commutative algebra}, volume~41 of {\em
  Progress in Mathematics}.
\newblock Birkh\"auser, 1996.

\bibitem{YoNoTa92}
Kazuhiro Yokoyama, Masayuki Noro, and Taku Takeshima.
\newblock Solutions of systems of algebraic equations and linear maps on
  residue class rings.
\newblock {\em J. Symbolic Comput.}, 14(4):399--417, 1992.

\bibitem{ReZh04}
Lihong Zhi and Greg Reid.
\newblock Solving nonlinear polynomial systems via symbolic-numeric elimination
  method.
\newblock In {\em In Proceedings of the International Conference on Polynomial
  System Solving}, pages 50--53, 2004.

\end{thebibliography}

\end{document}